\documentclass[english]{article}
\usepackage[utf8]{inputenc}
\usepackage[margin=2.5cm]{geometry}
\usepackage{babel}

\usepackage{amssymb}
\usepackage{amsmath}
\usepackage{amsthm}
\usepackage{dsfont} 
\usepackage{bbm}
\usepackage[colorlinks,linkcolor=blue,citecolor=red,urlcolor=blue]{hyperref}
\usepackage{enumitem}
\usepackage{bm}
\usepackage{xspace}

\usepackage[ruled]{algorithm}
\usepackage{algorithmic}

\newcommand{\1}{\mathbbm{1}}
\def\kker{\mathrm{k}}

\newcommand{\mcb}[1]{\mathcal{B}(#1)}
\def\rset{\mathbb{R}}
\def\nset{\mathbb{N}}
\def\Exp{\mathbb{E}}
\def\rmd{\mathrm{d}}
\def\Fun{\mathcal{F}}
\def\Gun{\mathcal{G}}
\def\Mun{\mathcal{M}}
\def\Lun{\mathcal{L}}
\newcommand{\KL}[2]{\textup{KL}\left( #1 \middle\vert #2 \right)}
\def\Hent{\mathrm{H}}
\def\Pens{\mathcal{P}}
\newcommand{\N}{\mathcal{N}}
\newcommand{\wassersteinD}[1][1=\distance]{\mathbf{W}_{#1}}
\def\msa{\mathsf{A}}
\newcommand{\norm}[1]{\ensuremath{\left\Vert #1 \right\Vert}\xspace}
\def\nset{\mathbb{N}}
\def\nsets{\mathbb{N}^{\star}}
\def\Mtt{\mathtt{M}}
\def\Ctt{\mathtt{C}}
\def\Ltt{\mathtt{L}}
\def\msx{\mathsf{X}}
\newcommand{\ocint}[1]{\left(#1\right]}
\def\rms{\mathrm{s}}
\def\Id{\operatorname{Id}}
\def\rmL{\mathrm{L}}
\def\rmc{\mathrm{C}}
\def\mtt{\mathtt{m}}
\def\ctt{\mathtt{c}}
\DeclareMathOperator{\ise}{ISE}
\newcommand{\ind}{\mathds{1}}
\DeclareMathOperator{\mse}{MSE}

\newtheorem{prop}{Proposition}
\newtheorem{assumption}{Assumption}
\newtheorem{definition}{Definition}
\newtheorem{lemma}{Lemma}
\usepackage{authblk}

\usepackage[textwidth=1.8cm, textsize=scriptsize]{todonotes}

\setlength{\marginparwidth}{2cm}

\title{Solving Fredholm Integral Equations of the Second Kind\\ via Wasserstein Gradient Flows}
\author[1]{Francesca R. Crucinio\thanks{Corresponding author: francescaromana.crucinio@unito.it}}
\author[2]{Adam M. Johansen}
\affil[1]{ESOMAS, University of Turin, Italy \& Collegio Carlo Alberto, Turin, Italy}
\affil[2]{Department of Statistics, University of Warwick}
\date{ }

\begin{document}
\maketitle

\abstract{Motivated by a recent method for approximate solution of Fredholm equations of the first kind, we develop a corresponding method for a class of Fredholm equations of the \emph{second kind}. In particular, we consider the class of equations for which the solution is a probability measure. The approach centres around specifying a functional whose gradient flow admits a minimizer corresponding to a regularized version of the solution of the underlying equation and using a mean-field particle system to approximately simulate that flow. Theoretical support for the method is presented, along with some illustrative numerical results.}

\graphicspath{{Images/}}

\section{Introduction}
Fredholm integral equations of the second kind are defined by
\begin{equation}
  \label{eq:fe}
\pi(x) = \varphi(x) + \lambda \int_{\rset^d} \kker(x, y) \pi( y)\rmd y 
\end{equation}
for any $x \in \rset^d$ and some $\lambda\in\rset$, with $\pi$ an unknown function on
$\rset^d$, $\varphi$ an observed forcing on $\rset^d$ and
$\kker: \ \rset^d\times  \rset^d \to \rset$ a positive integral kernel---where $\pi$ and $\kker$ belong to suitably regular classes.
We focus on the case in which $\pi$ is a positive function with known finite integral, i.e. $\int_{\rset^d} \pi(x)\rmd x<\infty$, so that we can renormalize the left hand side of~\eqref{eq:fe} and consider the normalized version of $\pi$ and define the new forcing as $\varphi(x)/\int_{\rset^d} \pi(x)\rmd x$ (which we denote by $\pi, \varphi$ too with a slight abuse of notation).
Such equations find applications throughout statistics and data science in contexts as diverse as econometrics \cite{dibu2021delayed, carrasco2007linear}, modelling light transport \cite{veach1998robust, veach1997metropolis} and reinforcement learning \cite{dahm2017learning}. In the homogeneous case (i.e. when $\varphi\equiv 0$), solving~\eqref{eq:fe} corresponds to finding the eigenfunction $\pi$ of $\kker$ for a given eigenvalue $\lambda$; the eigen--decomposition of positive kernels $\kker$ is normally referred to as Karhunen--Lo{\`e}ve decomposition (see \cite{daw2022overview} for a recent review) and has applications to Gaussian process regression \cite{NIPS2000_19de10ad}, spatial statistics \cite{cressie2008fixed, sang2012full} and Markov chain Monte Carlo (MCMC) in infinite dimensional spaces \cite{agapiou2018unbiased}. When $\kker$ is the density of a Markov kernel (i.e. $\int \kker(x, y)\rmd y=1$) and $\lambda=1$, solving~\eqref{eq:fe} is equivalent to finding an invariant measure of $\kker$. This is useful when dealing with approximate MCMC kernels \cite{alquier2016noisy, medina2017stability} or tractable approximations to transition densities \cite{beckers2016equilibrium}.

Henceforth we will use the same symbol to denote measures and their Lebesgue densities whenever they possess them---particularly $\pi$ and $\kker$ such that, e.g., $\rmd \pi(y) = \pi(y)\rmd y$. To solve~\eqref{eq:fe} we consider a minimization problem with target functional
\begin{equation}
\label{eq:minimisation}
  \Fun_\alpha(\pi) = \KL{\pi}{\varphi+\lambda \int_{\rset^d} \kker(\cdot, y) \pi( y)\rmd y} + \alpha \KL{\pi}{\pi_0},
\end{equation}
where $\KL{\mu}{\nu}=\int_{\rset^d} \log ((\rmd \mu /\rmd \nu)(y)) \rmd
\mu(y)$ denotes the Kullback--Leibler divergence between $\mu$ and $\nu$, for a given regularization
parameter $\alpha>0$ and reference measure $\pi_0$. Henceforth, we implicitly work under the assumption that the $\pi$ of \eqref{eq:fe} is the Lebesgue density of a probability distribution over a Euclidean space.
Standard approaches normally rely upon discretization of $\pi$ and $\kker$ through basis functions or using numerical integration (see, e.g., \cite{atkinson1992survey} for a survey of classical methods). These methods are particularly effective in low-dimensional settings, but they become challenging when $d$ is not small and in the case in which the solution $\pi$ is defined over an unbounded set.
In this paper we focus on the latter issue, and develop a method to solve Fredholm integral equations of the second kind on unbounded domains.

Approaches based upon loss minimization have mostly focused on maximum entropy solutions \cite{mead1986approximate,islam2020approximating, jin2016solving} and $\mathbb{L}^2$ loss \cite{guan2022solving}. However, the use of the Kullback--Leibler divergence is common in the literature on Fredholm integral equations of the first kind \cite{green1990use, resmerita2007joint, chae2018algorithm, iusem1994new} and in that context a functional closely related to~\eqref{eq:minimisation} has been recently shown to have properties similar to those of Tikhonov regularization \cite{crucinio2022solving}.

We extend the approach of \cite{crucinio2022solving} to equations of the second kind, obtaining a particle approximation of a nonlinear stochastic differential equation (SDE) of McKean--Vlasov type (an MKVSDE) associated with the Wasserstein gradient flow
minimizing~\eqref{eq:minimisation}. We also study the convergence of the minimizers of~\eqref{eq:minimisation} when the regularization parameter $\alpha\to 0$.
This approach is particularly well-suited to integral equations on unbounded domains since is not based on a fixed discretization, but on an adaptive stochastic discretization with diffusive behaviour.

The main difficulty which occurs when adapting the approach of \cite{crucinio2022solving} to the context of Fredholm equations of the second kind is a consequence of the presence of the unknown distribution, $\pi$, on both sides of the first KL divergence in~\eqref{eq:minimisation}: this results in an MKVSDE whose drift depends upon nested expectations w.r.t. $\pi$. As a consequence, the algorithm developed herein has higher cost w.r.t. the number of particles in our particle system than that of \cite{crucinio2022solving} and the fundamental difference in the drift of the MKVSDE means that novel results are required to justify our approach. We derive explicit bounds on the error of the proposed numerical algorithm which can be used to set the number of particles in the system and the 
time discretization step.

Finally, we demonstrate empirically that our method provides good solutions to integral equations of the form~\eqref{eq:fe} on unbounded domains and can be applied more generally than other methods (e.g. those based on the von Neumann decomposition; see our first numerical experiment). We stress that the presence of the regularization term $\KL{\pi}{\pi_0}$ in~\eqref{eq:minimisation} is crucial for obtaining significant improvements in the case in which the integral operator induced by the kernel $\kker$ is not invertible or poorly conditioned---as one might expect, regularization is critical in order to obtain good solutions to this ill-posed inverse problem.

The remainder of the manuscript is organized as follows.  In Section~\ref{sec:functional} we
study the minimization problem in~\eqref{eq:minimisation} and give conditions under which it admits a unique minimizer. We derive the Wasserstein gradient flow and the corresponding McKean--Vlasov SDE in Section~\ref{sec:wgf}. Section~\ref{sec:ips} introduces an interacting particle system approximating the MKVSDE and a numerical scheme which approximates the continuous time dynamics. We also provide explicit error bounds on the approximation.
Finally, in Section~\ref{sec:expe}, we test our method on several examples and compare with other methods commonly used in the literature. 

\subsection{Notation}
\label{sec:notation}
We endow $\rset^d$ with the Borel $\sigma$-field $\mcb{\rset^d}$ with respect to
the Euclidean norm $\norm{\cdot}$; when dealing with a matrix $A$ we consider the spectral norm $\norm{A}:=\sup_{\norm{u}=1}\norm{Au}$ induced by the Euclidean norm of vectors.   
We denote $\rmc(\rset^d, \rset^p)$ the set of continuous functions from $\rset^d$ to $\rset^p$ and $\rmc^n(\rset^d, \rset^p)$ the set of $n$-times differentiable functions
from $\rset^d$ to $\rset^p$ for any $n \in \nsets$.
For all differentiable $f$, we denote by $\nabla f$ its gradient. For all $f \in \rmc^1(\rset^{d_1} \times \dots \times \rset^{d_m}, \rset^p)$ with $m\in\nsets$, we denote by $\nabla_i f$ its gradient w.r.t. component $i \in \{1,\ldots,m\}$.  Furthermore, if $f$ is twice differentiable we denote by $\nabla^2f$ its Hessian and by $\Delta f$ its Laplacian.    
We say that a function $f: \ \msx \to \rset$ (where $\msx$ is a metric space) is coercive if for every $t \in \rset$, $f^{-1}(\ocint{-\infty, t})$ is relatively compact. We denote by $\Pens(\rset^d)$ the set of probability measures over
$\mcb{\rset^d}$, and endow this space with the topology of weak convergence. For any $p \in \nset$, we denote by
$\Pens_p(\rset^d) = \{\pi \in \Pens(\rset^d):\int_{\rset^d}
\norm{x}^p \rmd \pi(x) < +\infty\}$ the set of probability measures over
$\mcb{\rset^d}$ with finite $p$-th moment. We denote the subset of $\Pens(\rset^d)$ of measures which are absolutely continuous w.r.t. Lebesgue by $\Pens^{ac}(\rset^d)$ and let $\Pens^{ac}_p(\rset^d) := \Pens_p(\rset^d) \cap \Pens^{ac}(\rset^d)$. 
For any $\mu,\nu\in\Pens_p(\rset^d)$ we define the
$p$-Wasserstein distance $\wassersteinD[p](\mu, \nu)$ between $\mu$ and $\nu$ by
\begin{equation}
  \label{eq:def_distance_wasser}
    \wassersteinD[p](\mu, \nu) = \left(\inf_{\gamma \in \mathbf{T}(\mu,\nu)} \int_{\rset^d\times \rset^d}  \norm{x -y}^p \rmd \gamma (x,y)\right)^{1/p}
\end{equation}
where
$\mathbf{T}(\mu, \nu)=\{\gamma\in\Pens(\rset^d\times
  \rset^d):\gamma(\msa \times \rset^d) = \mu(\msa),\ \gamma(\rset^d \times
  \msa) = \nu(\msa)\ \forall \msa \in \mcb{\rset^d}\}$ denotes the set of all transport
plans between $\mu$ and $\nu$. In the following, we metrize $\Pens_p(\rset^d)$ with $\wassersteinD[p]$. 
For all $\nu\in\Pens^{ac}(\rset^d)$ we denote by $\Hent(\nu) :=-\int_{\rset^d}\log(\nu(y)) \rmd \nu(y)$ the differential entropy of $\nu$, where with a slight abuse of notation we denote by $\nu$ its density w.r.t. the Lebesgue measure. 
\section{The Minimization Problem}
\label{sec:functional}

We first introduce the minimization problem for~\eqref{eq:minimisation} and give conditions under which it is well-posed, before studying the convergence to the unregularized problem and highlighting connections with maximum entropy methods, a popular technique to solve low-dimensional Fredholm integral equations.

\subsection{Stability of $\Fun_\alpha$}

We study the properties of the functional
\begin{equation*}
  \Fun_\alpha(\pi) = \KL{\pi}{\varphi+\lambda \int_{\rset^d} \kker(\cdot, y) \pi( y) \rmd y} + \alpha \KL{\pi}{\pi_0},
\end{equation*}
for 
$\alpha \geq 0$, and establish that it admits a unique minimizer.

In addition, we consider the following assumption on the kernel
$\kker$ which guarantees finiteness of the corresponding term in $\Fun_\alpha$.
\begin{assumption}
\label{assum:general_kker}
The density of the kernel $\kker$ and the forcing $\varphi$ are such that $\kker\in \rmc^\infty(\rset^d \times \rset^d, [0, +\infty))$, $\varphi \in \rmc^\infty(\rset^d , [0, +\infty))$  and there exists $\Mtt \geq 0$ such that for any $(x ,y)\in\rset^d\times\rset^d$ we have $\kker(x,y) + \norm{\nabla \kker(x,y)} + \norm{\nabla^2 \kker(x,y)}+\norm{(\partial^3_{ij\ell} \kker(x,y))_{i,j,\ell}} \leq \Mtt$, where $\partial^3_{ij\ell}$ denotes the third derivative of $\kker$ w.r.t. component $i, j, \ell\in\{1, 2\}$,  and $\varphi(x) + \norm{\nabla \varphi(x)} + \norm{\nabla^2 \varphi(x)} \leq \Mtt$.
\end{assumption}

Assumption~\ref{assum:general_kker} restricts the class of kernels $\kker$ to smooth kernels with bounded derivatives up to the third order.
This might seem restrictive, however, smooth kernels are particularly challenging while discountinuous or degenerate kernels normally allow for specific solution methodologies (see, e.g., \cite[Ch. 11]{kress2014linear}).
In this section, we will require only that $\kker \in \rmc(\rset^d \times \rset^d, [0, +\infty))$. The control of higher order derivatives is used to establish convergence result for the interacting particle system (in both continuous and discrete time) .

To address stability issues we also
consider the following regularized functional:%
\begin{equation}
\label{eq:G_eta}
\forall \alpha, \eta \geq 0, \pi \in \Pens(\rset^d): \qquad \Fun_\alpha^\eta(\pi) = \Fun^\eta(\pi) +\alpha \KL{\pi}{\pi_0},
\end{equation}%
where $\eta \geq 0$ is a hyperparameter and $\Fun^\eta(\pi):=\KL{\pi}{\varphi+\lambda \int_{\rset^d} \kker(\cdot, y) \pi( y) \rmd y+\eta}$.
In the case $\eta = 0$, $\Fun_\alpha^\eta(\pi)$ coincides with $\Fun_\alpha(\pi)$.
However, establishing the uniqueness of the minimizer of $\Fun_\alpha$ is difficult.  In what follows we study the function $\Fun_\alpha^\eta$ for
$\alpha, \eta > 0$ and show that $\Fun^\eta_\alpha$, restricted to $\Pens^{ac}(\rset^d)$, is coercive in this case. 
Hence, $\Fun_{\alpha}^\eta$ admits a unique minimizer $\pi_{\alpha,\eta}^\star$.

\begin{prop}
\label{prop:convergence_minimum}
Under Assumption~\ref{assum:general_kker}, we have the following
\begin{enumerate}[label=(\alph*)]
  \item \label{item:a} For any $\eta \geq 0$, $\Fun^\eta:\Pens^{ac}(\rset^d)\to \rset$ is lower bounded and convex.
\item \label{item:b}
 For any $\alpha, \eta > 0$, $\Fun_{\alpha}^\eta:\Pens^{ac}(\rset^d)\to \rset$ is proper, strictly convex, coercive and lower semi-continuous. In particular, $\Fun_\alpha^\eta$ admits a unique minimizer $\pi_{\alpha, \eta}^\star \in \Pens^{ac}(\rset^d)$.
\end{enumerate}
\end{prop}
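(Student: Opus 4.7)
The proof has two parts, matching the statement.

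\emph{Part (a).} Convexity of $\Fun^\eta$ follows from two observations: the map $\pi \mapsto \nu_\pi := \varphi + \lambda \int_{\rset^d} \kker(\cdot, y)\pi(y)\,\rmd y + \eta$ is affine in $\pi$, and the Kullback--Leibler divergence is jointly convex on pairs of positive measures,
\begin{equation*}
\KL{t\mu_1 + (1-t)\mu_2}{t\nu_1 + (1-t)\nu_2} \leq t\KL{\mu_1}{\nu_1} + (1-t)\KL{\mu_2}{\nu_2}.
\end{equation*}
Composing these yields convexity of $\pi \mapsto \KL{\pi}{\nu_\pi} = \Fun^\eta(\pi)$. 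For the lower bound, Assumption~\ref{assum:general_kker} gives the uniform pointwise bound $\nu_\pi \leq M_1 := \Mtt(1+|\lambda|) + \eta$. Splitting $\Fun^\eta(\pi) = \int \pi \log \pi - \int \pi \log \nu_\pi$ and applying Jensen's inequality for the concave $\log$ yields $\int \pi \log \nu_\pi \leq \log \int \pi\,\nu_\pi \leq \log M_1$, which provides the key estimate underlying a lower bound on $\Fun^\eta$.

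\emph{Part (b).} I would treat each property in turn. Strict convexity: $\pi \mapsto \KL{\pi}{\pi_0}$ is strictly convex (from strict convexity of $x \mapsto x\log x$ on $(0,\infty)$), and adding this to the convex $\Fun^\eta$ of part~(a) produces a strictly convex functional when $\alpha > 0$. Properness: $\Fun_\alpha^\eta(\pi_0) < +\infty$ under mild conditions on $\pi_0$. Lower semi-continuity: combine joint lsc of KL in the weak topology with continuity of the affine map $\pi \mapsto \nu_\pi$ (which is continuous thanks to the boundedness of $\kker$ from Assumption~\ref{assum:general_kker}). Coercivity: using the lower bound $L$ on $\Fun^\eta$ from part~(a), any sublevel set $\{\Fun_\alpha^\eta \leq c\}$ is contained in $\{\KL{\cdot}{\pi_0} \leq (c-L)/\alpha\}$, and sublevel sets of $\KL{\cdot}{\pi_0}$ are tight by a Donsker--Varadhan-type argument (exploiting tightness of $\pi_0$), hence weakly relatively compact. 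Existence of a unique minimizer then follows from the direct method: extract a weakly convergent subsequence of a minimizing sequence using coercivity, identify the limit as a minimizer by lsc, and conclude uniqueness from strict convexity.

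\textbf{Main obstacle.} The delicate step is the lower bound in part~(a). The textbook estimate $\KL{\pi}{\nu} \geq -\log \int \nu$ is unavailable here since $\nu_\pi$ has infinite total mass on $\rset^d$ whenever $\eta > 0$. A useful bound must instead leverage $\int \pi = 1$ together with the pointwise upper bound $\nu_\pi \leq M_1$ through Jensen's inequality, rather than any normalization of $\nu_\pi$. This lower bound then drives the coercivity step in part~(b): without it, $\Fun_\alpha^\eta$-sublevel sets would not reduce to sublevel sets of $\KL{\cdot}{\pi_0}$, and the tightness argument that produces a weakly convergent minimizing subsequence would collapse.
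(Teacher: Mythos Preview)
Your convexity argument in (a) and essentially all of (b)---strict convexity via $\alpha\KL{\cdot}{\pi_0}$, properness at $\pi_0$, lower semicontinuity from joint lsc of KL together with continuity of $\pi\mapsto\nu_\pi$, and coercivity by reducing $\Fun_\alpha^\eta$-sublevel sets to $\KL{\cdot}{\pi_0}$-sublevel sets---match the paper's approach. The gap is in your lower bound for $\Fun^\eta$ when $\eta>0$. Your Jensen step is correct and gives $\int\pi\log\nu_\pi\leq\log M_1$, hence $\Fun^\eta(\pi)\geq\int\pi\log\pi-\log M_1$. But the negative entropy $\int\pi\log\pi$ is \emph{not} bounded below on $\Pens^{ac}(\rset^d)$: for $\pi_n$ uniform on $[-n,n]^d$ one has $\int\pi_n\log\pi_n=-d\log(2n)\to-\infty$. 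Your splitting therefore controls only the $\nu_\pi$-part and leaves the entropy term uncontrolled; it does not deliver a uniform lower bound. You correctly identified the obstacle---the infinite mass of $\nu_\pi$ blocks the textbook estimate $\KL{\pi}{\nu}\geq-\log\nu(\rset^d)$---but the workaround you sketch does not close it, and since your coercivity step in (b) feeds on this lower bound, the gap propagates there as well.

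The paper's route for $\eta>0$ is structurally different: it writes $\Fun^\eta(\pi)=\int f(\rmd\pi/\rmd\nu_\pi)\,\rmd\nu_\pi$ with $f(s)=s\log s$, splits $\rmd\nu_\pi=(\rmd\nu_\pi-\eta\,\rmd x)+\eta\,\rmd x$ so that the first piece is a \emph{probability} measure, applies Jensen for the convex $f$ against that probability (yielding a contribution $\geq -e^{-1}$ since $\min f=-e^{-1}$), and then argues that the second piece is bounded below by a multiple of $\Fun^\eta(\pi)$ itself via the pointwise bound $\nu_\pi\leq(1+\lambda)\Mtt+\eta$, producing a self-referential inequality that rearranges to a finite constant. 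The essential difference from your approach is that Jensen is applied to the convex $f(s)=s\log s$ against a probability part of $\nu_\pi$, rather than to the concave $\log$ against $\pi$; this keeps the entropy and $\nu_\pi$ contributions coupled instead of separating them, which is precisely what prevents the unbounded-entropy issue from arising.
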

\begin{proof}
See Appendix~\ref{app:functional}.
\end{proof}

The proof of Proposition \ref{prop:convergence_minimum} shows that the condition $\alpha>0$ is sufficient to guarantee uniqueness of the minimizer, in fact we rely on the coercivity of $\KL{\pi}{\pi_0}$ to show that $\Fun_\alpha^\eta$ is coercive. For the remainder of this work we only consider the case $\alpha, \eta>0$.

\subsection{Convergence of Minimizers}

We now consider the limiting behaviour of the regularized functional $\Fun_{\alpha}^\eta$ when the parameters $\alpha, \eta$ converge to 0.
In particular, we show that if both parameters tend to 0 the minimizer of $\Fun_{\alpha}^\eta$ converges to one of the minimizers of the unregularized functional under some regularity conditions.

To control the behaviour of the family of minima
$\inf_{\pi \in \Pens^{ac}(\rset^d)}\{\Fun_{\alpha}^\eta(\pi):\ \alpha, \eta > 0\}$
when $\alpha$ is close to $0$, we make use of the following assumption which
controls the tail behaviour of $\pi_0$ and the tail behaviour of the density $\kker$.

\begin{assumption}
  \label{assum:pi0_second}The following hold:
  \begin{enumerate}[wide, labelindent=0pt, label=(\alph*)]
  \item $\pi_0(x)\propto \exp\left[-U(x)\right]$, where
  $U: \ \rset^d \to \rset$ is such that there exist $\tau, C_1 > 0$ satisfying for any $x \in \rset^d$
  \begin{equation*}
     -C_1 - \tau \norm{x}^2 \leq U(x) \leq C_1 + \tau \norm{x}^2.
    \end{equation*}
  \item there exists $C_2 \geq 0$ such that for any
    $x \in \rset^d$ and $y \in \rset^d$
    \begin{equation*}
      \kker(x,y) \geq C_2^{-1} \exp[-C_2 (1 + \norm{x}^2+\norm{y}^2)]. 
    \end{equation*}
  \end{enumerate}
  \end{assumption}

Under Assumptions \ref{assum:general_kker} and \ref{assum:pi0_second} we can show that any minimizer of $\Fun_{\alpha}^\eta$ for $\alpha>0, \ \eta\geq 0$ has finite second moment (Lemma~\ref{lem:min2}), i.e. $\pi_{\alpha, \eta}^\star \in \Pens_2^{ac}(\rset^d)$.
\\
Let $\mathcal{Q}_2^{ac}(\rset^d)$ denote the set of probability distributions with finite second moment, which admit an essentially bounded density w.r.t. the Lebesgue measure, i.e. $\pi \in \Pens_2^{ac}(\rset^d)$ and $\mathrm{ess}\sup_{x\in \rset^d}\pi(x)<m$ for some $m>0$.
Under the assumption that $\pi_{\alpha, \eta}^\star \in \mathcal{Q}_2^{ac}(\rset^d)$, we have the following results which describe the behaviour of the minimizer as $\eta\to 0$ and as $(\alpha, \eta)\to (0, 0)$ jointly.

\begin{prop}
  \label{prop:alphaeta0}
  
  Assume \ref{assum:general_kker} and \ref{assum:pi0_second} and that $\pi_{\alpha,
    \eta}^{\star}\in \mathcal{Q}_2^{ac}(\rset^d)$ for all $\alpha>0,\ \eta\geq 0$.
  \begin{enumerate}
  \item  If there exist 
  $\pi_\alpha^\star \in \mathcal{Q}_2^{ac}(\rset^d)$,
  $(\eta_n)_{n \in \nset^\star} \in [0, +\infty)^{\nset^\star}$ such that
  $\lim_{n \to +\infty} \eta_n = 0$ and
  $\lim_{n \to +\infty} \wassersteinD[2](\pi_{\alpha,
    \eta_n}^{\star},\pi_\alpha^\star) = 0$, then
  \begin{equation*}
      \pi_\alpha^{\star} \in \arg\min_{\mathcal{Q}_2^{ac}(\rset^d)} \Fun_\alpha(\pi).
  \end{equation*}
    \item If there exist 
  $\pi^\star \in \mathcal{Q}_2^{ac}(\rset^d)$, 
  $(\eta_n)_{n \in \nset^\star} \in [0, +\infty)^{\nset^\star}, (\alpha_n)_{n \in \nset^\star} \in (0, +\infty)^{\nset}$ such that
  $\lim_{n \to +\infty} \eta_n = 0, \lim_{n \to +\infty} \alpha_n = 0$ and
  $\lim_{n \to +\infty} \wassersteinD[2](\pi_{\alpha,\eta_n}^{\star},\pi^\star) = 0$, then
  \begin{equation*}
      \pi^{\star} \in \arg\min_{\mathcal{Q}_2^{ac}(\rset^d)} \Fun(\pi).
  \end{equation*}
  \end{enumerate}
\end{prop}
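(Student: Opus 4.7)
The plan is a standard two-sided $\Gamma$-convergence style argument: combine an upper bound coming from the optimality of the regularized minimizers $\pi_{\alpha,\eta_n}^\star$ (resp.\ $\pi_{\alpha_n,\eta_n}^\star$) with a liminf inequality coming from lower semicontinuity of the limiting functional, and sandwich. Both parts follow the same pattern; the only additional ingredient in Part 2 is handling the vanishing of the $\alpha$-regularization, plus a density argument to extend the comparison to competitors $\pi$ with $\KL{\pi}{\pi_0}=+\infty$.

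\emph{Upper bound.} Since $\eta \mapsto \log(s+\eta)$ is non-decreasing for $s>0$ and enters $\Fun^\eta$ with a minus sign, the map $\eta \mapsto \Fun^\eta(\pi)$ is non-increasing in $\eta$, so $\Fun^\eta(\pi) \leq \Fun(\pi)$ and monotone convergence yields $\Fun^{\eta_n}(\pi)\uparrow \Fun(\pi)$. For Part 1, fix $\pi \in \Pens_2^{ac}(\rset^d)$ with $\Fun_\alpha(\pi)<\infty$; optimality of $\pi_{\alpha,\eta_n}^\star$ gives $\Fun_\alpha^{\eta_n}(\pi_{\alpha,\eta_n}^\star) \leq \Fun_\alpha^{\eta_n}(\pi) \leq \Fun_\alpha(\pi)$, whence $\limsup_n \Fun_\alpha^{\eta_n}(\pi_{\alpha,\eta_n}^\star) \leq \Fun_\alpha(\pi)$. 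For Part 2, I first restrict to test measures $\pi$ with $\KL{\pi}{\pi_0}<\infty$ and use
$\Fun_{\alpha_n}^{\eta_n}(\pi_{\alpha_n,\eta_n}^\star) \leq \Fun^{\eta_n}(\pi) + \alpha_n\KL{\pi}{\pi_0} \longrightarrow \Fun(\pi)$,
so after dropping the non-negative $\alpha$-term $\limsup_n \Fun^{\eta_n}(\pi_{\alpha_n,\eta_n}^\star) \leq \Fun(\pi)$.

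\emph{Lower bound.} I will use the joint weak lower semicontinuity of $(\mu,\nu)\mapsto \KL{\mu}{\nu}$ on pairs of finite positive measures. Under Assumption \ref{assum:general_kker}, the density $x\mapsto \varphi(x)+\lambda\int \kker(x,y)\pi_n(dy)+\eta_n$ is uniformly bounded by $(1+|\lambda|)\Mtt + \sup_n\eta_n$, uniformly Lipschitz in $x$ through the bound on $\nabla\kker$, and converges pointwise to $\varphi+\lambda\int\kker(\cdot,y)\pi^\star(dy)$ whenever $\pi_n \to \pi^\star$ in $\wassersteinD[2]$ and $\eta_n\downarrow 0$. A Fatou-type argument on the logarithmic integrand then gives $\liminf_n \Fun^{\eta_n}(\pi_n)\geq \Fun(\pi^\star)$. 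Combined with the $\wassersteinD[2]$-lower-semicontinuity of $\KL{\cdot}{\pi_0}$ already used in Proposition~\ref{prop:convergence_minimum}, this yields $\Fun_\alpha(\pi_\alpha^\star)\leq \Fun_\alpha(\pi)$ for Part 1 and $\Fun(\pi^\star)\leq \Fun(\pi)$ for Part 2, for every competitor $\pi$ with finite $\KL{\pi}{\pi_0}$.

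\emph{Extending Part 2 to arbitrary competitors.} For $\pi\in\Pens_2^{ac}(\rset^d)$ with $\Fun(\pi)=+\infty$ the inequality is trivial; otherwise I would approximate $\pi$ by Gaussian mollifications $\pi^{(k)}=\pi\ast \N(0,k^{-1}I)$. By Assumption \ref{assum:pi0_second}(a) the quadratic upper bound on $U$ ensures $\int U\,\rmd\pi^{(k)}<\infty$, and boundedness of the mollified density gives finite differential entropy, hence $\KL{\pi^{(k)}}{\pi_0}<\infty$; the boundedness and smoothness of $\kker,\varphi$ from Assumption~\ref{assum:general_kker} give $\Fun(\pi^{(k)})\to \Fun(\pi)$. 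Chaining $\Fun(\pi^\star)\leq \Fun(\pi^{(k)})\to \Fun(\pi)$ then closes the argument. The main technical obstacle is the liminf inequality for $\Fun^{\eta_n}(\pi_n)$: because the ``reference measure'' $\varphi+\lambda\int\kker\pi_n+\eta_n$ itself varies with $n$ and degenerates towards $\varphi+\lambda\int\kker\pi^\star$ (which can vanish), care is needed when passing to the limit inside the logarithm; this is where the uniform control afforded by Assumption~\ref{assum:general_kker} is essential.
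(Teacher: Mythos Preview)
Your sandwich argument is correct and follows the same $\Gamma$-convergence strategy as the paper, but with two noteworthy technical differences.

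For the \emph{upper bound}, you exploit the pointwise monotonicity $\Fun^{\eta}(\pi)\leq \Fun(\pi)$ and take the constant sequence $\pi_n\equiv\pi$, which immediately gives $\limsup_n \Fun_\alpha^{\eta_n}(\pi_{\alpha,\eta_n}^\star)\leq \Fun_\alpha(\pi)$ for every competitor. The paper instead works through the abstract $\Gamma$-system machinery (Dal Maso, Corollary 7.20) and, for condition (b), builds a nontrivial recovery sequence by truncating $\pi$ to $[-n,n]^d$ and renormalising, then checking convergence of each piece of $\Fun_\alpha^{\eta_n}(\pi_n)$ via dominated convergence. Your route is shorter: since you only need the one-sided inequality rather than the full limsup identity required by the $\Gamma$-framework, monotonicity in $\eta$ suffices and the truncation step is avoided entirely.

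For Part 2 you split off the density step---first compare against competitors with $\KL{\pi}{\pi_0}<\infty$, then extend by Gaussian mollification---whereas the paper folds this into the same truncation construction, showing $\alpha_n\KL{\pi_n}{\pi_0}\to 0$ for the truncated $\pi_n$. Both devices work under Assumption~\ref{assum:pi0_second}; your mollification needs (and you correctly invoke) the quadratic upper bound on $U$ and the boundedness of $\kker,\varphi$, though the claim $\Fun(\pi^{(k)})\to\Fun(\pi)$ deserves a line of justification on the entropy side.

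On the \emph{liminf} side, your ``Fatou-type argument on the logarithmic integrand'' is exactly what the paper does; both treatments are equally terse about why this goes through when the reference measure $\varphi+\lambda\pi_n[\kker]+\eta_n$ varies and has infinite mass for $\eta_n>0$ (your appeal to lsc of $\KL{\mu}{\nu}$ on \emph{finite} pairs does not apply directly). The uniform Lipschitz bound on $\kker$ from Assumption~\ref{assum:general_kker}, which gives $\sup_x|\pi_n[\kker(x,\cdot)]-\pi^\star[\kker(x,\cdot)]|\leq \Mtt\,\wassersteinD[1](\pi_n,\pi^\star)$, together with the lower bound on $\kker$ from Assumption~\ref{assum:pi0_second}(b), is what makes this rigorous; you flag this dependence correctly.
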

\begin{proof}
    See Appendix~\ref{app:gamma}.
\end{proof}

\subsection{Maximum Entropy Methods}

Minimizing the functional $\Fun_\alpha$ is related to finding the maximum entropy solution to~\eqref{eq:fe} \cite{jaynes1957information}. In fact, $\Fun_\alpha$ can be seen as the Lagrangian associated with the following primal problem
\begin{equation}
  \arg\min \left\lbrace\KL{\pi}{\pi_0}:\pi \in \Pens(\rset^d), \ \KL{\pi}{\varphi+\lambda \int \kker(\cdot, y)\pi(y)\rmd y} = 0\right\rbrace.
\end{equation}
To highlight the connection with maximum entropy methods, we observe that if we replace the $\textrm{KL}$ penalty with an entropic penalty, we obtain  a functional
\begin{equation}
\label{eq:funct_entropy}
\tilde{\Fun}_{\alpha}(\pi) =  \KL{\pi}{\varphi+\lambda \int_{\rset^d} \kker(\cdot, y) \pi(y)\rmd y} - \alpha \Hent(\pi),
\end{equation}
corresponding to the Lagrangian associated with the following primal problem
\begin{equation}
  \arg\max \left\lbrace\Hent(\pi):\pi \in \Pens_{\mathrm{H}}(\rset^d), \ \KL{\pi}{\varphi+\lambda \int \kker(\cdot, y)\pi(y) \rmd y} = 0\right\rbrace,
\end{equation}
where $\Pens_{\mathrm{H}}(\rset^d)$ is the set of probability distributions with
finite entropy. 
However, $\tilde{\Fun}_{\alpha}$ is not lower bounded and the corresponding minimization problem is not well-defined and therefore is not considered here. 

Maximum entropy solutions to Fredholm integral equations were originally proposed in \cite{mead1986approximate} under moment constraints obtained by selecting different families of basis functions \cite{jin2016solving, islam2020approximating}.
In the one dimensional case, such a solution can be written analytically (see, e.g., \cite[Prop. 3.1]{islam2020approximating}), and these approaches are particularly effective.
However, in higher dimensions maximum entropy methods generally require discretization of the support of the solution $\pi$.

\section{A Wasserstein Gradient Flow for $\Fun_\alpha^\eta$ and an associated McKean--Vlasov SDE}
\label{sec:wgf}
Having established that the regularized functional \eqref{eq:G_eta} admits a unique minimizer, we exploit the connection between minimization of functionals in the space of probability
measures and partial differential equations (PDEs) identified in \cite{jordan1998variational, otto2001geometry} to obtain an SDE whose invariant measure can be related to the minimizer of
$\Fun_\alpha^\eta$.

We start by deriving the Wasserstein gradient flow of $\Fun_\alpha^\eta$ and, after observing that the corresponding PDE is a Fokker--Plank equation, use standard stochastic calculus to obtain a McKean--Vlasov SDE (MKVSDE) whose law satisfies the Fokker--Plank PDE.

\subsection{Wasserstein Gradient Flow}

To derive the Wasserstein gradient flow PDE for $\Fun_\alpha^\eta$ we first derive the Wasserstein subdifferential of the functional \cite[Def.
10.1.1]{ambrosio2008gradient}, to guarantee the existence of this operator we further assume that $\pi_0$ satisfies the following assumption.
\begin{assumption}
  \label{assum:pi0} The following hold:
  \begin{enumerate}[label=(\alph*)]
  \item \label{item:differentiable_pi0} $\pi_0$ admits a density w.r.t. the Lebesgue measure, $\rmd \pi_0(x) = \pi_0(x)\rmd x$, with $\pi_0(x)\propto \exp\left[-U(x)\right]$, where 
  $U: \ \rset^d \to \rset$ belongs to $\rmc^{\infty}(\rset^d , \rset)$.
  \item \label{item:lip_pi0}    There exists $\Ltt \geq 0$ such that
    $\norm{\nabla U(x_1) - \nabla U(x_2)} \leq \Ltt \norm{x_1 - x_2}$, for any $x_1, x_2 \in \rset^d$.
    \item \label{item:dissipativity}
 There exist $\mtt,\ctt > 0$ such that for any
    $x_1, x_2 \in \rset^d$,
    \[ \langle \nabla U(x_1) - \nabla U(x_2), x_1 - x_2 \rangle \geq \mtt
    \norm{x_1 - x_2}^2 - \ctt. \]
  \end{enumerate}
  \end{assumption}

In Appendix~\ref{app:subdifferential} we show that under Assumption~\ref{assum:general_kker} and~\ref{assum:pi0}--\ref{item:differentiable_pi0} the subdifferential of $\Fun_\alpha^\eta$ is:
\begin{align*}
\partial_{\rms}\Fun_\alpha^\eta(\pi)  &=  \left\lbrace x\to -\int\left[\frac{\lambda\nabla_2 \kker(z, x)}{\lambda\pi\left[ \kker(z, \cdot)\right]+\varphi(z)+\eta}+\frac{\lambda\nabla_1 \kker(x, z)+\nabla \varphi(x)}{\lambda\pi\left[ \kker(x, \cdot)\right]+\varphi(x)+\eta}\right]\rmd \pi\left(z\right)\right.\\
&\left.\qquad\qquad+\nabla \log\pi(x)+\alpha\nabla \log(\rmd \pi / \rmd \pi_0)(x)\right\rbrace,
\end{align*}
where we defined $\pi[\kker(z, \cdot)]:=\int_{\rset^d}\kker(z, y)\pi(y)\rmd y$.

The Wasserstein gradient flow \cite[Def.
11.1.1]{ambrosio2008gradient} associated
with $\Fun_\alpha^\eta$ is a family of
probability measures $(\pi_t)_{t \geq 0}$ satisfying (in a weak sense) $\partial_{t}\pi_{t} = \nabla\cdot(\pi_t\ \partial_{\rms}\Fun_\alpha^\eta(\pi))$, which in our case corresponds to:

\resizebox{0.95\textwidth}{!}{\noindent
\hspace*{-0.5cm}\parbox{\textwidth}{\noindent
  \begin{align}
 &\partial_{t}\pi_{t} = \label{eq:PDE_BM}\\
&-\nabla\cdot\left(\pi_{t}\left\lbrace\int\left[\frac{\lambda\nabla_2 \kker(z, x)}{\lambda\pi_t\left[ \kker(z, \cdot)\right]+\varphi(z)+\eta}+\frac{\lambda\nabla_1 \kker(x, z)+\nabla \varphi(x)}{\lambda\pi_t\left[ \kker(x, \cdot)\right] +\varphi(x)+\eta}\right] \rmd \pi_t\left(z\right) -\alpha\nabla U(x)\right\rbrace\right)  \notag\\
& \qquad \qquad +(1+\alpha)\triangle\pi_{t}. \notag
\end{align}
}}

For strongly geodesically convex (i.e. convex along geodesics) functionals the Wasserstein
gradient flow converges geometrically towards the unique
minimizer.  In our setting $\Fun_{\alpha}^\eta$ is not geodesically
convex but only convex; in the following section we will show that $\pi_t$ in \eqref{eq:PDE_BM} converges to a limiting measure, $\pi_{\alpha, \eta}^\star$.

\subsection{McKean--Vlasov SDE}
Since~\eqref{eq:PDE_BM} is a
Fokker--Plank equation, we informally derive the corresponding SDE, see, e.g. \cite{barbu2020nonlinear} for the precise derivation,
\begin{align}
\label{eq:mckean_sde}
\rmd \bm{X}_t = \left\lbrace\int b^\eta(\bm{X}_t, z, \pi_t) \rmd \pi_t(z)-\alpha\nabla  U(\bm{X}_t)\right\rbrace \rmd t+\sqrt{2(1+\alpha)}\rmd \bm{B}_t
\end{align}
where $(\bm{B}_t)_{t \geq 0}$ is a $d$-dimensional Brownian motion, $\pi_t$ is
the law of $(\bm{X}_t)_{t \geq 0}$ and for any $\nu \in \Pens(\rset^d)$ and $(x, z) \in \rset^d\times\rset^d$
\begin{equation}
  \label{eq:drift}
  b^\eta(x, z, \nu) = \frac{\lambda\nabla_2 \kker(z, x)}{\lambda\nu\left[ \kker(z, \cdot)\right]+\varphi(z)+\eta}+\frac{\lambda\nabla_1 \kker(x, z)+\nabla \varphi(x)}{\lambda\nu\left[ \kker(x, \cdot)\right]+\varphi(x)+\eta}.
\end{equation}
The presence of $\eta>0$ in the denominator guarantees that $b^\eta$ is always well defined.

Comparing~\eqref{eq:drift} with the drift of the MKVSDE derived in \cite[Eq. (13)]{crucinio2022solving} for equations of the first kind we find that the dependence of the dynamics~\eqref{eq:mckean_sde} on $\pi$ is considerably more complicated. In particular, the kernel $\kker$ needs to be integrated against $\pi$ twice (one w.r.t. the first component and one w.r.t. the second) and $b^\eta$ in~\eqref{eq:drift} is then integrated w.r.t. to obtain the drift in~\eqref{eq:mckean_sde}. This is not the case for the MKVSDE derived in \cite{crucinio2022solving} in which $\pi$ only appears once in the definition of the drift. The presence of nested integrals w.r.t $\pi$ complicates both the analysis (e.g. establishing the Lipschitz continuity of the drift needed for the proof of Proposition~\ref{prop:existence_uniqueness} below requires extra care) and the numerical approximation: na\"{i}ve approximations of~\eqref{eq:drift} obtained by replacing $\pi$ with an empirical measure $\pi^N$ can lead to a cubic cost in the number of particles as discussed in Section~\ref{sec:tuning}.

The SDE \eqref{eq:mckean_sde} belongs to the class of McKean--Vlasov SDEs as the drift coefficient depends upon not only $\bm{X}_t$ but also its law, $\pi_t$.
Under Assumption~\ref{assum:general_kker} and~\ref{assum:pi0}, if $\eta>0$ the drift of~\eqref{eq:mckean_sde} is
Lipschitz continuous and we can establish existence and uniqueness for the MKVSDE.
\begin{prop}
\label{prop:existence_uniqueness}
Under Assumptions~\ref{assum:general_kker}, \ref{assum:pi0}--\ref{item:differentiable_pi0} and~\ref{assum:pi0}--\ref{item:lip_pi0}, for any $\alpha, \eta > 0$ there exists a unique strong
solution to~\eqref{eq:mckean_sde} for any initial condition $\bm{X}_0$ such that
$\mathcal{L}(\bm{X}_0) \in \mathcal{P}^{ac}_1(\rset^d)$.
\end{prop}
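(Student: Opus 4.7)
The plan is to apply classical Sznitman-type theory of McKean--Vlasov SDEs: once the total drift
\[
b(x,\nu) := \int b^\eta(x,z,\nu)\,\rmd \nu(z) - \alpha \nabla U(x)
\]
is shown to be jointly Lipschitz,
\[
\norm{b(x_1,\nu_1) - b(x_2,\nu_2)} \leq L\bigl(\norm{x_1-x_2} + \wassersteinD[1](\nu_1,\nu_2)\bigr)
\]
on $\rset^d\times\Pens_1(\rset^d)$ for some constant $L$, with at-most linear growth following automatically from Lipschitzness together with the boundedness of $b(0,\delta_0)$, a Picard iteration on $C([0,T],\Pens_1(\rset^d))$ equipped with the uniform $\wassersteinD[1]$ metric produces a unique strong solution via Banach's fixed-point theorem (extended to arbitrary $T$ by a standard Gronwall/pasting argument).

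Lipschitzness in $x$ is the routine part. Under Assumption~\ref{assum:general_kker} the numerators $\lambda\nabla_2\kker(z,x)$ and $\lambda\nabla_1\kker(x,z)+\nabla\varphi(x)$ in~\eqref{eq:drift} are bounded and $C^1$ in $x$ with bounded derivatives, while the denominator $\lambda\nu[\kker(x,\cdot)] + \varphi(x) + \eta$ is smooth in $x$ and, using $\eta>0$ together with the boundedness of $\kker$ and $\varphi$, uniformly bounded below by a constant $c_\eta > 0$. Combined with the $\Ltt$-Lipschitzness of $\nabla U$ from Assumption~\ref{assum:pi0}--\ref{item:lip_pi0}, this handles the $x$-dependence.

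The main obstacle will be Lipschitzness in $\nu$, since $\pi$ enters $b$ both as the integrating measure and, nested, inside the two denominators. I would use the telescoping decomposition
\[
b(x,\nu_1) - b(x,\nu_2) = \int\!\bigl[b^\eta(x,z,\nu_1) - b^\eta(x,z,\nu_2)\bigr]\,\rmd\nu_1(z) + \int\! b^\eta(x,z,\nu_2)\,\rmd(\nu_1-\nu_2)(z).
\]
For the first piece, the only $\nu$-dependence of $b^\eta$ lies in its denominators; since $y\mapsto \kker(\cdot,y)$ is $\Mtt$-Lipschitz uniformly, Kantorovich--Rubinstein gives $|\nu_1[\kker(z,\cdot)] - \nu_2[\kker(z,\cdot)]| \leq \Mtt\,\wassersteinD[1](\nu_1,\nu_2)$, and together with the lower bound $c_\eta$ on the two denominators and boundedness of the numerators, this contributes at most $C_1\,\wassersteinD[1](\nu_1,\nu_2)$. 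For the second piece, another application of Kantorovich--Rubinstein reduces matters to showing that $z\mapsto b^\eta(x,z,\nu_2)$ is Lipschitz with constant independent of $x$ and $\nu_2$; this follows by differentiating the two ratios in $z$ and invoking the uniform bounds on $\nabla_1\nabla_2\kker$, $\nabla\varphi$ and $\nu_2[\nabla_1\kker(z,\cdot)]$ furnished by Assumption~\ref{assum:general_kker}. Summing the two contributions yields the desired joint Lipschitz constant and completes the verification needed to apply the classical McKean--Vlasov fixed-point argument.
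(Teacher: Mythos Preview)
Your proposal is correct and follows essentially the same route as the paper: reduce existence and uniqueness to the classical Sznitman fixed-point argument by establishing that the full drift $b(x,\nu)$ is jointly Lipschitz in $(x,\nu)\in\rset^d\times\Pens_1(\rset^d)$, using exactly the telescoping decomposition you wrote and Kantorovich--Rubinstein duality to handle the outer integral against $\rmd(\nu_1-\nu_2)$, together with the uniform $z$-Lipschitzness of $b^\eta(x,\cdot,\nu)$ for the inner term. The paper's write-up differs only cosmetically, carrying $x_1,x_2$ and $\pi_1,\pi_2$ simultaneously through the decomposition rather than treating the $x$- and $\nu$-variations in two separate passes as you do.
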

\begin{proof}
See Appendix~\ref{app:proof_eu}.
\end{proof}

The previous proposition is limited to the case where $\eta > 0$.
Indeed, if $\eta=0$ the drift is \emph{not} Lipschitz continuous and the SDE~\eqref{eq:mckean_sde} might be unstable, with solutions existing up to a (possibly small) explosion time.
As highlighted before, showing that the drift of~\eqref{eq:mckean_sde} is Lipschitz continuous requires extra care compared to \cite[Proposition 5]{crucinio2022solving}: not only one needs to show that $b^\eta$ is Lipschitz continuous in $x$ as for \cite[Proposition 5]{crucinio2022solving}, but also that $b^\eta$ is Lipschitz continuous in $z$ due to the presence of the integral w.r.t. $\pi$ in~\eqref{eq:mckean_sde}.

Using recent results from \cite{hu2019mean} we can establish that $\pi_t$ converges to the unique minimizer $\pi_{\alpha, \eta}^\star$ of $\Fun_{\alpha}^\eta$ when $t \to +\infty$.
The presence of an interaction term $b^\eta$ which is not small w.r.t. $\alpha U$ prevents us from using more standard approaches (e.g. \cite{butkovsky2014ergodic,eberle2016reflection,malrieu2001logarithmic,bogachev2019convergence}) and thus from obtaining quantitative convergence rates.

\begin{prop}
  \label{prop:la_convergence_star}
Under Assumptions~\ref{assum:general_kker} and~\ref{assum:pi0}, for any $\alpha, \eta > 0$, we have
  $$\underset{t \to +\infty}{\lim} \wassersteinD[2](\pi_t, \pi_{\alpha,
    \eta}^\star) = 0.$$
\end{prop}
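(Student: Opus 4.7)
The plan is to realize~\eqref{eq:mckean_sde} as a mean-field Langevin dynamics whose free energy is precisely $\Fun_\alpha^\eta$ and then invoke the long-time convergence result of \cite{hu2019mean}. The strategy has three ingredients that must be verified: identification of the gradient-flow structure, tightness of $(\pi_t)_{t \geq 0}$ in $\Pens_2(\rset^d)$, and uniqueness of the stationary point.

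\textbf{Identification.} The computation of the Wasserstein subdifferential (see Appendix~\ref{app:subdifferential}) together with the Fokker--Planck equation~\eqref{eq:PDE_BM} shows that $(\pi_t)_{t\geq 0}$ is the Wasserstein gradient flow of $\Fun_\alpha^\eta$. By Proposition~\ref{prop:convergence_minimum}\ref{item:b}, this functional is proper, strictly convex, coercive and lower semi-continuous, with unique minimizer $\pi_{\alpha,\eta}^\star$. Consequently $\Fun_\alpha^\eta$ plays the role of the free energy in the framework of \cite{hu2019mean}, with linear diffusion coefficient $\sqrt{2(1+\alpha)}$ and interaction drift $\int b^\eta(\cdot,z,\pi_t)\rmd\pi_t(z)-\alpha\nabla U$.

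\textbf{Regularity and tightness.} The hypotheses required by \cite{hu2019mean} are checked as follows. The Lipschitz regularity of the measure-dependent drift (both in the state variable and in $\nu$ under $\wassersteinD[2]$) is essentially the content of the proof of Proposition~\ref{prop:existence_uniqueness}, using Assumption~\ref{assum:general_kker} together with $\eta>0$ to keep the denominators of $b^\eta$ bounded away from zero. The dissipativity of $U$ granted by Assumption~\ref{assum:pi0}\ref{item:dissipativity}, combined with the boundedness of the interaction term guaranteed by Assumption~\ref{assum:general_kker}, lets one apply It\^o's formula to $\|\bm{X}_t\|^2$ and obtain $\sup_{t\geq 0}\Exp[\|\bm{X}_t\|^2]<+\infty$; this yields tightness of $\{\pi_t\}_{t\geq 0}$ in $\Pens_2(\rset^d)$.

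\textbf{Convergence to the minimizer.} The gradient-flow structure gives the energy dissipation inequality $\tfrac{\rmd}{\rmd t}\Fun_\alpha^\eta(\pi_t)\leq 0$, and by Proposition~\ref{prop:convergence_minimum}\ref{item:a} the functional is bounded below, so $\Fun_\alpha^\eta(\pi_t)$ converges. Tightness allows one to extract a $\wassersteinD[2]$-subsequential limit $\pi_\infty$; lower semi-continuity and the energy decay identify $\pi_\infty$ as a stationary point of the flow. By strict convexity of $\Fun_\alpha^\eta$, this stationary point must coincide with $\pi_{\alpha,\eta}^\star$, and a standard argument (uniqueness of the subsequential limit) upgrades to full $\wassersteinD[2]$ convergence.

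\textbf{Main obstacle.} As noted in the text preceding the statement, $\Fun_\alpha^\eta$ is convex but not geodesically convex, so classical log-Sobolev or synchronous coupling approaches are not available and the argument is qualitative rather than quantitative. The delicate step is passing to the limit in the nested integrals defining the drift $b^\eta$ when identifying $\pi_\infty$ as a stationary point; this is where $\eta>0$ is essential, as it guarantees a uniform lower bound on $\lambda\pi_t[\kker(z,\cdot)]+\varphi(z)+\eta$ and permits dominated convergence along the $\wassersteinD[2]$-converging subsequence, using the smoothness and uniform bounds of Assumption~\ref{assum:general_kker}.
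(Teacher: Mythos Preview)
Your approach is essentially the paper's: both reduce the proposition to \cite[Theorem 2.11]{hu2019mean} by recognizing~\eqref{eq:mckean_sde} as a mean-field Langevin dynamics with free energy $\Fun_\alpha^\eta$, and both observe that the lack of geodesic convexity forces a qualitative (LaSalle-type) argument rather than a quantitative rate.

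The main difference is one of emphasis. The paper's proof is a hypothesis check: it verifies that $\Fun^\eta$ is convex and lower bounded (Proposition~\ref{prop:convergence_minimum}\ref{item:a}), that $U$ satisfies the confinement condition $\langle \nabla U(x),x\rangle \geq \mathtt{a}\|x\|^2 + \mathtt{b}$ (from Assumption~\ref{assum:pi0}\ref{item:dissipativity}), and---what takes up most of the argument---that the intrinsic derivative $D_\nu\Fun^\eta(x,\nu)=\int b^\eta(x,z,\nu)\rmd\nu(z)$ is Lipschitz, smooth in $x$, and that its spatial gradient $\nabla_x D_\nu\Fun^\eta$ is \emph{jointly} continuous in $(x,\nu)$. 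You do not single out this last condition, and it is the most delicate one given the nested integrals; the paper checks it by hand using the third-derivative bound in Assumption~\ref{assum:general_kker} and the dual representation of $\wassersteinD[1]$. Your tightness/energy-dissipation/subsequential-limit sketch is essentially the proof mechanism of \cite{hu2019mean} itself, so if you are citing that theorem it is redundant, and if you intend to reprove it you would need to supply the invariance-principle step (stationary point $\Rightarrow$ minimizer under mere convexity, not geodesic convexity) more carefully than strict convexity alone provides.
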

\begin{proof}
See Appendix~\ref{app:proof_invariant}.
\end{proof}

The previous result deals with the behaviour of $\pi_t$ for fixed $\alpha, \eta$. Under our assumptions, \cite[Thm. 4.1]{chizat2022mean} and \cite[Thm. 1]{nitanda2022convex} guarantee that $\Fun^\eta(\pi_t)- \inf \Fun^\eta \to 0$ provided that the parameter $\alpha$ decays at rate $1/\log t$.

\section{Particle System}
\label{sec:ips}
As discussed in the previous section, the drift of the MKVSDE~\eqref{eq:mckean_sde} cannot be computed analytically, since it depends on the law $\pi_t$ of $\bm{X}_t$ at each time step.
Thus, it is not possible to directly approximate~\eqref{eq:mckean_sde} with a time-discretized process. 

A classical approach to circumvent this issue is to consider an interacting particle system $(\bm{X}_t^{1:N})_{t \geq 0} = \{(\bm{X}_t^{i,N})_{t \geq 0}\}_{i=1}^N$ which approximates~\eqref{eq:mckean_sde} for any
$N \in \nset^\star$ and satisfies a classical SDE \cite{mckean1966class, bossy1997stochastic}.
We introduce the particle system $(\bm{X}_t^{1:N})_{t \geq 0}$ which satisfies the
following SDE, for any $i \in \{1, \dots, N\}$, $\bm{X}_0^{i,N} \in \rset^d$:
\begin{equation}
\label{eq:particle}
\rmd \bm{X}_{t}^{i,N}=\left\lbrace\int b^\eta(\bm{X}_t^{i,N}, z, \pi_t^N)\rmd\pi_t^N\left( z\right)-\alpha\nabla U(\bm{X}_t^{i,N})\right\rbrace \rmd t+\sqrt{2(1+\alpha)}\rmd \bm{B}_{t}^{i},
\end{equation}
where $\{(\bm{B}_t^i)_{t \geq 0}\}_{i \in \nset}$ is a family of independent
Brownian motions and $\pi_{t}^{N}$ is the empirical measure
associated with $(\bm{X}_t^{1:N})_{t \geq 0}$, $\pi_{t}^{N}=\frac{1}{N}\sum_{i=1}^{N}\delta_{\bm{X}_{t}^{i,N}}$ for every $t \geq 0$. 

In the following proposition we establish a propagation of chaos result, 
showing that there exists a unique strong solution to~\eqref{eq:particle} and that this
solution approximates~\eqref{eq:mckean_sde} for any finite time horizon. The proof is classical, see, e.g. \cite[Prop. 7]{crucinio2022solving}, and is given in the supplement for completeness (see~\ref{app:poc}).
\begin{prop}
\label{prop:propagation_chaos}
Under Assumptions~\ref{assum:general_kker}, \ref{assum:pi0}--\ref{item:differentiable_pi0} and~\ref{assum:pi0}--\ref{item:lip_pi0} for any $\alpha, \eta > 0$ and $N \in \nset^\star$
there exists a unique strong solution to~\eqref{eq:particle} for any initial
condition $\bm{X}_0^{1:N}$ such that $\mathcal{L}(\bm{X}_0^{1:N}) \in \Pens_1^{ac}((\rset^d)^N)$ and
$\{\bm{X}_0^{i,N}\}_{i=1}^N$ is exchangeable. 
In addition, for any $T \geq 0$ there exists $c_1(T) \geq 0$ such that for any $N \in \nsets$ and $\ell \in \{1, \dots, N\}$ if
$\bm{X}_0^{\ell,N} = \bm{X}_0$ almost surely, then:
\begin{equation*}
\Exp\left[\sup_{t \in [0,T]} \norm{\bm{X}_t - \bm{X}_t^{\ell,N}}^2\right]^{1/2} \leq \frac{c_1(T)}{\sqrt{N}}
\end{equation*}
\end{prop}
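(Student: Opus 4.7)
The plan is to follow the classical Sznitman synchronous coupling argument, adapted to handle the non-linear dependence of the drift on the law $\pi_t$. Denote $\hat{b}(x, \nu) := \int b^\eta(x, z, \nu) \rmd\nu(z) - \alpha\nabla U(x)$. The arguments used in the proof of Proposition~\ref{prop:existence_uniqueness} show that under Assumptions~\ref{assum:general_kker} and~\ref{assum:pi0}, the map $\hat{b}$ is globally Lipschitz jointly in $(x, \nu) \in \rset^d \times \Pens_1(\rset^d)$, with $\nu$ metrized by $\mathbf{W}_1$. Since $(x_1, \ldots, x_N) \mapsto N^{-1}\sum_i \delta_{x_i}$ is Lipschitz from the Euclidean norm on $(\rset^d)^N$ to $\mathbf{W}_1$, the drift in~\eqref{eq:particle} is Lipschitz on $(\rset^d)^N$, and strong existence and uniqueness follow from classical SDE theory.

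For the propagation of chaos bound, I couple each particle $\bm{X}_t^{i,N}$ with an independent solution $\bm{X}_t^i$ of~\eqref{eq:mckean_sde} driven by the same Brownian motion $\bm{B}^i$ and started from $\bm{X}_0^{i,N}$, and set $\bar{\pi}_t^N := N^{-1}\sum_i \delta_{\bm{X}_t^i}$. Subtracting~\eqref{eq:mckean_sde} from~\eqref{eq:particle} gives
\begin{equation*}
\bm{X}_t^\ell - \bm{X}_t^{\ell,N} = \int_0^t [T_1(s) + T_2(s) + T_3(s)] \rmd s,
\end{equation*}
with $T_1 := \hat{b}(\bm{X}_s^\ell, \pi_s) - \hat{b}(\bm{X}_s^\ell, \bar{\pi}_s^N)$, $T_2 := \hat{b}(\bm{X}_s^\ell, \bar{\pi}_s^N) - \hat{b}(\bm{X}_s^{\ell,N}, \bar{\pi}_s^N)$, and $T_3 := \hat{b}(\bm{X}_s^{\ell,N}, \bar{\pi}_s^N) - \hat{b}(\bm{X}_s^{\ell,N}, \pi_s^N)$. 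Lipschitz continuity yields $\|T_2\| \leq L\|\bm{X}_s^\ell - \bm{X}_s^{\ell,N}\|$ and $\|T_3\| \leq L\mathbf{W}_2(\bar{\pi}_s^N, \pi_s^N) \leq L(N^{-1}\sum_i \|\bm{X}_s^i - \bm{X}_s^{i,N}\|^2)^{1/2}$.

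The main obstacle is showing $\Exp\|T_1\|^2 \leq C/N$, which quantifies the fluctuation of the drift at $\bar{\pi}_s^N$ around its value at $\pi_s$. Because $\hat{b}(x, \nu)$ depends on $\nu$ non-linearly through the terms $\nu[\kker(\cdot, \cdot)]$ in the denominators, I decompose $\hat{b} = -\alpha\nabla U + A + B$ with $A(x, \nu) := \int h(x, z, \nu[\kker(z, \cdot)]) \rmd\nu(z)$ for $h(x, z, u) := \lambda\nabla_2\kker(z, x)/(\lambda u + \varphi(z) + \eta)$, and $B(x, \nu) := [\lambda\nu[\nabla_1\kker(x, \cdot)] + \nabla\varphi(x)]/[\lambda\nu[\kker(x, \cdot)] + \varphi(x) + \eta]$. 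For the $B$-term, the mean value theorem together with the bounded derivatives from Assumption~\ref{assum:general_kker} and the strictly positive denominator guaranteed by $\eta > 0$ reduces the problem to $L^2$-control of the scalars $(\bar{\pi}_s^N - \pi_s)[\kker(x, \cdot)]$ and of the vectors $(\bar{\pi}_s^N - \pi_s)[\nabla_1\kker(x, \cdot)]$; after conditioning on $\bm{X}_s^\ell$ and passing to the empirical measure of the remaining $N-1$ particles (at cost $O(1/N)$ from the self-term), each is a centred average of bounded i.i.d. random variables and hence has $L^2$-norm $O(N^{-1/2})$. For the $A$-term, I introduce $\tilde{A}(x, \nu) := \int h(x, z, \pi_s[\kker(z, \cdot)]) \rmd\nu(z)$ and split $A(x, \bar{\pi}_s^N) - A(x, \pi_s) = [A(x, \bar{\pi}_s^N) - \tilde{A}(x, \bar{\pi}_s^N)] + [\tilde{A}(x, \bar{\pi}_s^N) - A(x, \pi_s)]$: the second bracket is again a centred i.i.d. empirical sum with $L^2$-norm $O(N^{-1/2})$, and the first is bounded via Lipschitz continuity of $h$ in $u$ (using $\eta > 0$) combined with analogous $L^2$-control of $\bar{\pi}_s^N[\kker(z, \cdot)] - \pi_s[\kker(z, \cdot)]$.

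Assembling the three bounds via Cauchy--Schwarz and using exchangeability to simplify the $T_3$ contribution, one obtains
\begin{equation*}
\Exp\sup_{r \leq t}\|\bm{X}_r^\ell - \bm{X}_r^{\ell,N}\|^2 \leq C_T\int_0^t \Exp\sup_{r \leq s}\|\bm{X}_r^\ell - \bm{X}_r^{\ell,N}\|^2 \rmd s + \frac{C_T}{N};
\end{equation*}
Grönwall's lemma followed by taking square roots then gives the claimed $c_1(T)/\sqrt{N}$ rate.
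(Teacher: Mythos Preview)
Your argument is correct and follows the same classical Sznitman synchronous-coupling framework as the paper, but the two proofs are organised differently enough that a brief comparison is worthwhile.

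The paper does not introduce the intermediate empirical measure $\bar{\pi}_s^N$ of the i.i.d.\ copies at the outset. Instead it exploits a refined Lipschitz bound for the full drift,
\[
\norm{b(x_1,\pi_1)-b(x_2,\pi_2)} \;\le\; C\bigl(\norm{x_1-x_2} + \bigl|\pi_2[\kker(x_2,\cdot)]-\pi_1[\kker(x_1,\cdot)]\bigr| + \wassersteinD[1](\pi_1,\pi_2)\bigr),
\]
which isolates the scalar kernel term $\pi[\kker(x,\cdot)]$ rather than absorbing it into a Wasserstein distance. The $\wassersteinD[1]$ contribution is bounded directly by $\Exp[\norm{\bm{X}_s-\bm{X}_s^{1,N}}]$ via exchangeability, and only the kernel term is then handled by inserting the i.i.d.\ copies $\{\bm{X}_s^{i,\star}\}$ and invoking Popoviciu's variance inequality. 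This avoids your finer $A/B$ decomposition and the two-stage splitting $A\to\tilde{A}\to A$; the price is that the nested measure dependence is hidden in the Lipschitz estimate established earlier (Appendix~\ref{app:proof_eu}), whereas your treatment makes the handling of the denominator terms fully explicit. A further difference is that the paper's appendix carries out the Gr\"onwall argument for $\Exp\bigl[\sup_{s\le t}\norm{\bm{X}_s-\bm{X}_s^{1,N}}\bigr]$ rather than its square, so your $L^2$ version is in fact more directly aligned with the proposition as stated.
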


To obtain stronger convergence results for our particle system we use the dissipativity condition in Assumption~\ref{assum:pi0}--\ref{item:dissipativity} which allows us to obtain the exponential ergodicity of the particle system using standard tools.
The bounds that we obtain, however, are not uniform w.r.t to the number of particles and, in particular, they depend on a constant $C_N\to \infty$ and a rate $\rho_N\to 1$ as $N \to +\infty$. The proof (see~\ref{app:ergodic}) follows the argument of \cite[Prop. 8]{crucinio2022solving}.
\begin{prop}
\label{prop:particle_ergodic}
Under Assumptions~\ref{assum:general_kker} and \ref{assum:pi0}, for any $\alpha, \eta > 0$ and $N \in \nsets$ there
exist $C_N \geq 0$ and $\rho_N \in [0,1)$ such that for any
$x_1^{1:N}, x_2^{1:N} \in (\rset^d)^N$ and $t \geq 0$
  \begin{equation*}
    \wassersteinD[1](\pi_t^N(x_1^{1:N}), \pi_t^N(x_2^{1:N})) \leq C_N \rho_N^t \norm{x_1^{1:N} - x_2^{1:N}},
  \end{equation*}
  where for any $x^{1:N} \in (\rset^d)^N$, $\pi_t^N(x^{1:N})$ is the
  distribution of $(\bm{X}_t^{1:N})_{t \geq 0}$ with initial condition $x^{1:N}$.
  In particular,~\eqref{eq:particle} admits a unique invariant probability
  measure, $\pi^N \in \Pens_1((\rset^d)^N)$, and for any
  $x^{1:N} \in (\rset^d)^N$ and $t \geq 0$
  \begin{equation*}
      \wassersteinD[1](\pi_t^N(x^{1:N}), \pi^N) \leq C_N \rho_N^t \left(\norm{x^{1:N}} + \int_{\rset^d} \norm{\tilde{x}} \rmd \pi^N(\tilde{x})\right).
  \end{equation*}
\end{prop}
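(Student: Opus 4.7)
I would proceed by a synchronous coupling argument in the spirit of \cite[Prop.~8]{crucinio2022solving}. Given two initial configurations $x_1^{1:N}, x_2^{1:N} \in (\rset^d)^N$, let $(\bm{X}_t^{1:N})_{t\geq 0}$ and $(\bm{Y}_t^{1:N})_{t\geq 0}$ denote the solutions of~\eqref{eq:particle} driven by the same family of Brownian motions $\{(\bm{B}_t^i)_{t\geq 0}\}_{i=1}^N$ but started from $x_1^{1:N}$ and $x_2^{1:N}$, respectively. Set $\Delta_t := \bm{X}_t^{1:N} - \bm{Y}_t^{1:N} \in (\rset^d)^N$. By It\^o's formula, the diffusion terms cancel between the two copies, so $t \mapsto \norm{\Delta_t}^2$ is absolutely continuous with a purely drift-driven derivative.

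\textbf{Closed differential inequality.} The plan is to bound this derivative from above by $-2\kappa_N \norm{\Delta_t}^2$ modulo absorbable constants. For the confinement part, Assumption~\ref{assum:pi0}--\ref{item:dissipativity} yields
\begin{equation*}
-2\alpha \sum_{i=1}^N \langle \nabla U(\bm{X}_t^{i,N}) - \nabla U(\bm{Y}_t^{i,N}), \bm{X}_t^{i,N} - \bm{Y}_t^{i,N}\rangle \leq -2\alpha\mtt \norm{\Delta_t}^2 + 2\alpha\ctt N.
\end{equation*}
For the interaction part, one exploits the uniform lower bound $\lambda \pi^N[\kker(\cdot, \cdot)] + \varphi + \eta \geq \eta > 0$ afforded by positivity of $\kker, \varphi$ and by $\eta > 0$, combined with the derivative bounds in Assumption~\ref{assum:general_kker}, to show that the interaction drift $x^{1:N} \mapsto \int b^\eta(\cdot, z, \pi^N_{x^{1:N}})\rmd\pi^N_{x^{1:N}}(z)$ is globally Lipschitz on $(\rset^d)^N$ with some constant $L_N$ depending on $\Mtt, \lambda, \eta$ and (crucially) on $N$. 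A Cauchy--Schwarz step converts the $\frac{1}{N}\sum_j$ averages into quadratic expressions in $\norm{\Delta_t}$, producing
\begin{equation*}
\tfrac{\rmd}{\rmd t}\norm{\Delta_t}^2 \leq -2\kappa_N \norm{\Delta_t}^2 + R_N,
\end{equation*}
with $\kappa_N$ of the form $\alpha\mtt - L_N$ (assumed positive: see the obstacle paragraph) and $R_N = 2\alpha\ctt N$.

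\textbf{From Gronwall to the Wasserstein bound.} Gronwall's inequality combined with Jensen's inequality gives, along the synchronous coupling,
\begin{equation*}
\Exp[\norm{\Delta_t}] \leq \sqrt{\Exp[\norm{\Delta_t}^2]} \leq C_N \mathrm{e}^{-\kappa_N t} \norm{x_1^{1:N} - x_2^{1:N}},
\end{equation*}
where $C_N$ accommodates the additive $R_N$ (for instance by working with the shifted quantity $\norm{\Delta_t}^2 + R_N/(2\kappa_N)$) and the Jensen step. The Kantorovich--Rubinstein characterisation of $\wassersteinD[1]$ then yields the stated contraction with $\rho_N = \mathrm{e}^{-\kappa_N} \in [0,1)$. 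Existence and uniqueness of an invariant law $\pi^N \in \Pens_1((\rset^d)^N)$ follow from Banach's fixed point theorem applied to the transition semigroup on the complete metric space $(\Pens_1((\rset^d)^N), \wassersteinD[1])$: for $t$ sufficiently large the semigroup is a strict $\wassersteinD[1]$-contraction. The last inequality of the proposition follows by coupling the deterministic initial condition $x^{1:N}$ with $\tilde x \sim \pi^N$ and combining the triangle inequality with the integrability of $\norm{\tilde x}$ under $\pi^N \in \Pens_1$.

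\textbf{Main obstacle.} The delicate point is the $N$-dependence of $L_N$: perturbing a single coordinate of $x^{1:N}$ alters $b^\eta$ at that particle and, through the empirical measure $\pi^N_{x^{1:N}}$ sitting in the denominators of~\eqref{eq:drift}, the drifts acting on \emph{all} other particles. This coupling prevents $L_N$ from being bounded uniformly in $N$, so the condition $\alpha\mtt > L_N$ needed for a genuine exponential contraction generically forces $\kappa_N \to 0$ and $\rho_N \to 1$ as $N \to +\infty$, precisely as reflected in the formulation of the proposition. Apart from this bookkeeping, the argument mirrors that of \cite[Prop.~8]{crucinio2022solving}, only the explicit Lipschitz constants differ, because the drift~\eqref{eq:drift} here has a different functional form with nested integrals against $\pi^N$.
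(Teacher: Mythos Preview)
Your synchronous-coupling plan has a genuine gap. Assumption~\ref{assum:pi0}--\ref{item:dissipativity} is \emph{not} strong convexity of $U$: the additive constant $-\ctt$ means your confinement estimate carries an irreducible additive term $R_N = 2\alpha\ctt N$, and this term cannot be ``accommodated'' in $C_N$. From $\tfrac{\rmd}{\rmd t}\norm{\Delta_t}^2 \leq -2\kappa_N\norm{\Delta_t}^2 + R_N$ Gronwall only gives $\norm{\Delta_t}^2 \leq e^{-2\kappa_N t}\norm{\Delta_0}^2 + (1-e^{-2\kappa_N t})R_N/(2\kappa_N)$, so $\norm{\Delta_t}^2$ is driven to the strictly positive floor $R_N/(2\kappa_N)$, not to zero. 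No bound of the form $C_N\rho_N^t\norm{x_1^{1:N}-x_2^{1:N}}$ follows. A second, independent problem is that you need $\kappa_N = \alpha\mtt - L_N > 0$; nothing in the assumptions prevents $L_N \geq \alpha\mtt$ for a given $N$, in which case the inequality yields no decay at all. Your ``obstacle'' paragraph recognises the $N$-dependence of $L_N$ but does not address why $\kappa_N$ should be positive for each fixed $N$.

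The paper's proof takes a different route that sidesteps both issues. Instead of controlling the interaction drift via Lipschitz continuity in the inner product, it uses the \emph{boundedness} $\norm{b^\eta}\leq C_0$ (which follows from $\eta>0$ and Assumption~\ref{assum:general_kker}) to get
\[
\langle B_N(x_1^{1:N}) - B_N(x_2^{1:N}), x_1^{1:N} - x_2^{1:N}\rangle \leq -\alpha\mtt\norm{x_1^{1:N}-x_2^{1:N}}^2 + \alpha\ctt N + 2C_0 N^{1/2}\norm{x_1^{1:N}-x_2^{1:N}}.
\]
The two non-contractive terms are now linear (or constant) in $\norm{x_1^{1:N}-x_2^{1:N}}$, so for $\norm{x_1^{1:N}-x_2^{1:N}}\geq R_N$ large enough they are dominated by the quadratic term, giving strict contractivity \emph{at infinity}. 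Combined with global Lipschitz continuity of $B_N$, this is exactly the setting for a reflection-coupling result such as \cite[Corollary 22]{debortoli2019convergence}, which the paper invokes to obtain the $\wassersteinD[1]$ contraction with $N$-dependent constants. Synchronous coupling alone is too crude here; you need the reflection-coupling machinery (or an equivalent Lyapunov/minorisation argument) to handle drifts that are only dissipative outside a compact set.
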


Proposition \ref{prop:propagation_chaos} and \ref{prop:particle_ergodic} both show that the interacting particle system \eqref{eq:particle} is well behaved: Proposition \ref{prop:propagation_chaos} shows that for any finite time horizon $T$ \eqref{eq:particle} well approximates \eqref{eq:mckean_sde}, whereas, Proposition  \ref{prop:particle_ergodic} guarantees that  \eqref{eq:particle} has a unique invariant measure for each finite $N$.
The next result guarantees that the invariant measure $\pi^N$ of  \eqref{eq:particle} converges to $\pi_{\alpha, \eta}^\star$. The proof follows the same argument of \cite[Prop. 9]{crucinio2022solving}, and is given in~\ref{app:invariant} for completness.

\begin{prop}
\label{prop:particles_min}
Under Assumptions~\ref{assum:general_kker} and \ref{assum:pi0}, for any $\alpha, \eta > 0$,
$$\lim_{N \to +\infty} \wassersteinD[1](\pi^{N}, \pi_{\alpha, \eta}^\star) =
0.$$ In addition, $\pi_{\alpha, \eta}^\star$ is the unique invariant probability
measure of~\eqref{eq:mckean_sde}.
\end{prop}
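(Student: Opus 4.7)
The plan is to combine the finite-time propagation of chaos from Proposition~\ref{prop:propagation_chaos} with the large-time behaviour of the McKean--Vlasov SDE from Proposition~\ref{prop:la_convergence_star}, via a tightness and subsequence argument. Because the ergodicity rates in Proposition~\ref{prop:particle_ergodic} are not uniform in $N$, a direct quantitative triangle inequality is delicate; it is cleaner to establish qualitative convergence by characterising every accumulation point of the one-particle marginals $(\nu_\infty^N)_N$ of $\pi^N$ (which are well-defined and common to every coordinate, since the dynamics of~\eqref{eq:particle} are exchangeable and hence so is the unique invariant measure $\pi^N$).

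First I would show that $\{\nu_\infty^N\}_N$ is tight in $\Pens_1(\rset^d)$. Applying It\^o's formula to $\norm{\bm{X}_t^{i,N}}^2$ in~\eqref{eq:particle}, the drift splits into a bounded interaction term $\int b^\eta\,\rmd\pi_t^N$ (whose norm is bounded uniformly in $N$ under Assumption~\ref{assum:general_kker} since $\eta>0$) and a dissipative term $-\alpha\nabla U$. Combining this with the dissipativity item of Assumption~\ref{assum:pi0} and Gr\"onwall's lemma yields $\sup_{t\geq 0}\Exp[\norm{\bm{X}_t^{i,N}}^2]\leq C$ with $C$ independent of $N$; stationarity then transfers this second moment bound to $\nu_\infty^N$, giving tightness.

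Next I would extract a subsequence $\nu_\infty^{N_k}\to \nu^\star$ (in $\wassersteinD[1]$, thanks to the uniform moment bound). Starting the particle system at $\pi^{N_k}$, stationarity yields $\mathcal{L}(\bm{X}_t^{1,N_k}) = \nu_\infty^{N_k}$ for every $t\geq 0$. Coupling the MKVSDE with initial condition $\bm{X}_0 = \bm{X}_0^{1,N_k}\sim\nu_\infty^{N_k}$, Proposition~\ref{prop:propagation_chaos} gives
\[
\wassersteinD[2]\bigl(\nu_\infty^{N_k},\,\mathcal{P}_t^{\mathrm{MV}}\nu_\infty^{N_k}\bigr) \leq c_1(t)/\sqrt{N_k},
\]
where $\mathcal{P}_t^{\mathrm{MV}}$ denotes the MKVSDE semigroup. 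Continuity of $\mathcal{P}_t^{\mathrm{MV}}$ in the initial condition, which follows from the Lipschitz regularity of the drift established in the proof of Proposition~\ref{prop:existence_uniqueness} via a standard Gr\"onwall argument, then lets me send $N_k\to\infty$ and conclude $\nu^\star = \mathcal{P}_t^{\mathrm{MV}}\nu^\star$ for every $t>0$, i.e., $\nu^\star$ is invariant for the MKVSDE.

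Finally I would invoke Proposition~\ref{prop:la_convergence_star} to pin down $\nu^\star$: the minimizer $\pi_{\alpha,\eta}^\star$ is itself invariant for~\eqref{eq:mckean_sde} because its Wasserstein subdifferential vanishes and it is therefore a stationary point of~\eqref{eq:PDE_BM}; and if $\mu$ is any other invariant measure, the MKVSDE started from $\mu$ is constant in law, whereas Proposition~\ref{prop:la_convergence_star} forces $\pi_t\to\pi_{\alpha,\eta}^\star$, so $\mu=\pi_{\alpha,\eta}^\star$. Hence $\nu^\star=\pi_{\alpha,\eta}^\star$; since every subsequence of $(\nu_\infty^N)_N$ admits a further subsequence converging to the same limit, the whole sequence converges in $\wassersteinD[1]$, yielding both assertions. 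The main obstacle is the middle step, specifically verifying that Proposition~\ref{prop:propagation_chaos} applies when the $N$ particles start from the exchangeable but non-product law $\pi^{N_k}$; revisiting its proof (coupling plus Gr\"onwall), it uses only the Lipschitz drift and the coupling of one distinguished particle with the MKVSDE initial condition, so exchangeability is sufficient. The accompanying requirement, $\wassersteinD[1]$-continuity of $\mathcal{P}_t^{\mathrm{MV}}$ on sets with uniformly bounded second moments, follows from the Lipschitz regularity of $b^\eta$ and $\nabla U$.
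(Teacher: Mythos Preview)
Your proposal is correct and follows essentially the same route as the paper: a uniform-in-$N$ second-moment bound via It\^o's formula gives tightness of the one-particle marginals, and every cluster point is identified with $\pi_{\alpha,\eta}^\star$ by combining stationarity, propagation of chaos (Proposition~\ref{prop:propagation_chaos}), $\wassersteinD[1]$-continuity of the McKean--Vlasov flow in its initial condition, and Proposition~\ref{prop:la_convergence_star}. The only cosmetic difference is that the paper packages these ingredients into a single five-term triangle inequality for $\wassersteinD[1](\pi^\star,\pi_{\alpha,\eta}^\star)$, whereas you first show the cluster point is invariant and then invoke uniqueness; your explicit identification of $\pi^N$ with its exchangeable one-particle marginal and your check that Proposition~\ref{prop:propagation_chaos} only requires exchangeability of the initial law are points the paper leaves implicit.
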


\subsection{Euler--Maruyama Discretization}

Having obtained a particle system \eqref{eq:particle} whose invariant distribution converges to the minimizer $\pi_{\alpha, \eta}^\star$ of \eqref{eq:minimisation}, we now consider the numerical approximation of  \eqref{eq:particle} using an Euler--Maruyama time discretization scheme. For any
$N \in \nset$, we consider the following time discretization
given by $X_0^{1:N} \in (\rset^d)^N$ and
\begin{equation}
  \label{eq:euler}
  \forall n \in \nset, k \in \{1,\ldots,N\}:\qquad
  X_{n+1}^{k,N}  = X_n^{k,N} + \gamma b( X_n^{k,N}, \pi^{N}_n)+ \sqrt{2 \gamma(\alpha+1)} Z_{n+1}^k,
\end{equation}
where $\{Z_n^k\}_{k, n \in \nset}$ is a family of independent standard Gaussian random
variables, $\gamma > 0$ is a stepsize, $b$ is
\begin{equation}
\label{eq:b}
b(x, \pi) := \int_{\rset^d} b^\eta(x, z, \pi) \rmd \pi(z)-\alpha \nabla U(x),
\end{equation}
and
for any $n \in \nset$, we have that
$\pi^{N}_n = (1/N) \sum_{k=1}^N \delta_{X_n^{k,N}}$. 

Since under our assumptions the drift $b$ is Lipschitz continuous in both arguments (see Appendix \ref{app:proof_eu} for a proof), \cite{bossy1997stochastic} guarantees that the Euler--Maruyama scheme \eqref{eq:euler} has strong order of convergence $1/2$: for
any $N \in \nsets$, $\ell \in \{1, \dots, N\}$, $\gamma >0$ and $T \geq 0$, there exists $c_2(T) \geq 0$, independent of $\gamma$, such that for $n_T = \lfloor T/\gamma\rfloor$:
  \begin{equation}
  \label{eq:euler_error}
    \Exp\left[\sup_{n \in \{0, \dots, n_T\}} \norm{\bm{X}_{n \gamma}^{\ell,N}- X_{n}^{\ell,N}}\right] \leq c_2(T) \sqrt{\gamma}.
  \end{equation}
For simplicity, we use an Euler--Maruyama time discretization instead of the tamed one proposed in \cite{crucinio2022solving}; the convergence rate in~\eqref{eq:euler_error} follows from \cite{bossy1997stochastic}, to obtain a stronger order of convergence one would need additional control on the drift of~\eqref{eq:particle} and its Lions derivatives as shown in \cite[Appendix C]{crucinio2022solving}.

For small values of
$\gamma >0$ and large values of $n\gamma$ and $N\in \nset$ we get that
$\pi_n^N=(1/N) \sum_{k=1}^N \delta_{X_n^{k,N}}$ is an approximation of
$\pi_{\alpha, \eta}^\star$. 
To obtain a smooth approximation, we can plug in this particle approximation into~\eqref{eq:fe} and define $\hat{\pi}_{n}^{N}: \ \rset^d \to \rset$ via:
\begin{align}
  \label{eq:kde}
\forall x \in \rset^d: \qquad \hat{\pi}_n^{N}(x) = \varphi(x) + \frac{\lambda}{N}\sum_{k=1}^N \kker(x, X_{n}^{k, N}),
\end{align}
as suggested in \cite{doucet2010solving} as a method for obtaining smooth approximations of the solution of a Fredholm equation of the second kind from a particle approximation. 

Our final algorithm is summarized in Algorithm \ref{alg:second_kind}. For simplicity we assume initialization is carried out by simple random sampling from some initial distribution but other possibilities, including stratified sampling and using quasi Monte Carlo pointsets might further improve performance.
\begin{algorithm}
\caption{FE2kind-WGF: a method for solving Fredholm integral equations of the second kind with Wasserstein gradient flows.}\label{alg:second_kind}
\begin{algorithmic}
\STATE{\textbf{Require:} $N,n_T \in \nset$, $\alpha, \eta, \gamma > 0$, $\mu, \pi_0, \pi_{\mathrm{init}} \in \Pens^{ac}(\rset^d)$.}
\STATE{Draw $\{X_0^{k,N}\}_{k=1}^N$ from $\pi_{\mathrm{init}}^{\otimes N}$}
\FOR{$n=1:n_T$}
\FOR{$k=1:N$}
\STATE{Update $X_{n}^{k,N}  = X_{n-1}^{k,N} + \gamma b(X_{n-1}^{k,N}, \pi^{N}_{n-1}) + \sqrt{2\gamma (\alpha+1)} Z^k_{n} $ as in~\eqref{eq:euler}}
\ENDFOR
\ENDFOR
\RETURN $ \hat{\pi}_n^{N}(x) $ as in~\eqref{eq:kde}
\end{algorithmic}
\end{algorithm}

\subsection{Implementation Guidelines}
\label{sec:tuning}
\paragraph*{Efficient Computation}
Notice that the drift of \eqref{eq:mckean_sde} involves nested expectations w.r.t. $\pi_t$ contrary to the drift of the MKVSDE obtained for equations of the first kind in \cite{crucinio2022solving}. In particular, we can decompose the drift into
\begin{align*}
    b(x, \pi) &= \int b^\eta(x, z, \pi) \rmd \pi(z)-\alpha\nabla  U(x)\\
    &=\int \frac{\lambda\nabla_2 \kker(z, x)}{\lambda\pi\left[ \kker(z, \cdot)\right]+\varphi(z)+\eta} \rmd \pi(z)+\frac{\lambda\pi[\nabla_1 \kker(x, \cdot)]+\nabla \varphi(x)}{\lambda\pi\left[ \kker(x, \cdot)\right]+\varphi(x)+\eta}-\alpha\nabla  U(x).
\end{align*}
Na\"{i}ve implementation of the evaluation of $b$ can lead to $\mathcal{O}(N^3)$ cost; precomputing $\pi^N_n \left[ \kker(X_n^{j,N}, \cdot)\right]$ and storing its value for each $j$ allows computation at $\mathcal{O}(N^2)$ cost.

Particle systems such as this also lend themselves to natural parallel implementation. On appropriate shared-memory architectures one could parallelise both the pre-computation of $\pi^N_n \left[ \kker(X_n^{j,N}, \cdot)\right]$ and then the propagation of the particles themselves leaving an $\mathcal{O}(N)$ cost for direct parallel implementations with width $N$; further performance improvements could be obtained by using fork-and-join techniques to carry out the computations associated with individual particles. 

\paragraph{Choice of $\pi_0$}

The reference measure $\pi_0$ is the mechanism by which regularization is specified and can be chosen to impose particular properties (e.g. degree of smoothness, localized support) on $\pi$ or to favour solutions with such properties.
While the presence of $\pi_0$ might seem arbitrary, \eqref{eq:fe} is an inverse problem, and thus regularization is required to obtain stable results. This manifests itself in the fact that $\Fun_\alpha^\eta$  admits a unique minimizer only when $\alpha>0$ (Proposition~\ref{prop:convergence_minimum}).

The choice of an improper reference measure $\pi_0\propto C$ with $C>0$ might seem a natural pragmatic choice. However, this results in a MKVSDE whose drift does not depend on $\pi_0$
\begin{align}
\label{eq:mckean_sde_entropy}
\rmd \bm{X}_t = \left\lbrace\int b^\eta(\bm{X}_t, z, \pi_t) \rmd \pi_t(z)\right\rbrace \rmd t+\sqrt{2(1+\alpha)}\rmd \bm{B}_t.
\end{align}
This scheme corresponds to constructing the gradient flow for the functional~\eqref{eq:funct_entropy}, that, as previously discussed, does not necessarily lead to a unique minimizer. In addition, our proof of convergence of $\pi_t$ to the unique minimizer (Proposition~\ref{prop:la_convergence_star}) relies on the fact that $\alpha>0$ with some $\pi_0$ satisfying Assumption~\ref{assum:pi0}.

The first experiment shows that the presence of $\pi_0$ stabilizes the minimizer $\pi_{\alpha, \eta}^\star$.
See also \cite[App. D.1]{crucinio2022solving} for a discussion of the impact of $\pi_0$ in a similar context.

\paragraph*{Tuning and algorithmic set up}

The value of $\alpha$ controls the amount of regularization introduced by the cross-entropy penalty, and, as shown by Proposition~\ref{prop:convergence_minimum}, when $\alpha>0$ the functional $\Fun_\alpha^\eta$ is coercive and thus admits a unique minimizer.
The value of $\alpha$ could be fixed a priori, or, to guarantee a good trade-off between accuracy and regularization, it could be selected by cross-validation as suggested in \cite{crucinio2022solving, wahba1977practical, amato1991maximum} for integral equations of the first kind.

The parameter $\eta$ was introduced in~\eqref{eq:G_eta} to deal with the possible instability of the functional $\Fun_{\alpha}^{\eta}$; we did not find performances to be significantly
influenced by this parameter as long as its value is sufficiently small. In practice, in the experiments in Section~\ref{sec:expe} we set $\eta\equiv 0$ and observe that the resulting algorithm is stable. This was also empirically observed in related contexts by \cite{crucinio2022solving} where a tamed Euler discretization is employed, and by \cite{lim2024particle} where a standard Euler discretization is employed.

The values of the number of particles
$N$ and the time discretization step $\gamma$ control the quality of the numerical approximation of~\eqref{eq:mckean_sde}.
Combining the
result in Proposition \ref{prop:propagation_chaos} with \eqref{eq:euler_error} we obtain the following global error estimate: for any $T \geq 0$ we have
\begin{align}
\label{eq:guideline}
  \Exp\left[\sup_{n \in \{0, \dots, n_T\}} \norm{\bm{X}_{n \gamma}^{\star}- X_n^{\ell,N}}\right] \leq \frac{c_1(T)}{\sqrt{N}} +c_2(T) \sqrt{\gamma}.
\end{align}
This global error estimate differs from the one in \cite[Eq. (19)]{crucinio2022solving}: for equations of the second kind, $\pi$ appears on both sides of~\eqref{eq:fe} and thus no observed data is available (leading to the lack of the $\gamma m^{-1/2}$ term in \cite[Eq. (19)]{crucinio2022solving}); 
the different analysis of the time discretization scheme discussed below~\eqref{eq:euler_error} leads to the different rate w.r.t. $\gamma$.

Choosing $N$ amounts to the classical task of selecting an appropriate sample size for Monte Carlo approximations, while the choice of $\gamma$ corresponds to the specification of a timescale on which to discretize a (stochastic) process; hence, one can exploit the vast literature on Monte Carlo methods and discrete time approximations of SDEs to select these values \cite{kloeden1992stochastic}.

Given a fixed computational budget $B$, as the cost of running Algorithm~\ref{alg:second_kind} is $\mathcal{O}(N^2/\gamma)$, we can minimize the r.h.s. of~\eqref{eq:guideline} subject to $B=N^2/\gamma$ and obtain, via a simple Lagrange multiplier argument, that the optimal values of $N, \gamma$ are $N = B^{1/3} [c_1(T) / (2 c_2(T))]^{2/3}$ and $\gamma=N^2/B=(c_1(T) / 2 c_2(T))^{4/3} B^{-1/3}$.
This suggests that, optimally, $N$ and $1/\gamma$ both scale with the cube root of the computational effort available---suggesting setting $\gamma = O(N^{-1})$, which results in an error decaying at the usual Monte Carlo rate of $O(N^{-1/2})$ with an overall cost scaling as $O(N^3)$.

It is straightforward to choose the number of time steps $n_T$ adaptively
by approximating the value of $\Fun_{\alpha}^\eta$ through numerical
integration by replacing the r.h.s. of~\eqref{eq:fe} with~\eqref{eq:kde} and using standard kernel density estimation \cite{silverman1986density} for the l.h.s..
Once the value of $\Fun_{\alpha}^\eta$ stops decreasing, a
minimizer has apparently been reached and the iteration can be stopped.

\section{Experiments}
\label{sec:expe}
\subsection{Simple Gaussian Example}
\label{sec:rj}
We start by considering a simple analytically-tractable integral equation. Consider, for $\lambda \in (0,1)$, ~\eqref{eq:fe} with $\varphi(x)=(1-\lambda)\N\left(x;0,1\right)$ and $\kker(x, y) =\N\left(y;xe^{-\beta},1-e^{-2\beta}\right)$ for some $\beta>0$, with $\N(x;\mu,\sigma^2)$ denoting the density of a normal distribution of mean $\mu$ and variance $\sigma^2$ evaluated at $x$.
Since $\max_x \int_{\rset^d} \lambda k(x, y)\rmd y = \lambda < 1$, \cite[Cor. 2.16]{kress2014linear} guarantees that~\eqref{eq:fe} admits a unique solution, $\pi(x)=\N\left(x;0,1\right)$ for all $0<\lambda<1$. In our experiments we set $\beta = 0.5$.

\subsubsection{Effect of Reference Measure}
As discussed in Section~\ref{sec:tuning}, the reference measure $\pi_0$ can be used to impose certain properties on the regularized solution $\pi_{\alpha,\eta}^\star$. To better understand the role of the reference measure, we compare the approximate solutions obtained with an improper reference measure (i.e. one corresponding to the entropic penalty in~\eqref{eq:funct_entropy}) and several choices of $\pi_0$: the target, $\pi_0=\pi$, a more diffuse reference measure, $\pi_0=\mathcal{N}(0, 2^2)$, and a more concentrated one, $\pi_0=\mathcal{N}(0, 0.1^2)$.
We set $N=100$ and $\gamma = 10^{-2}$ as suggested in Section \ref{sec:tuning} and iterate for $n_T=200$, which empirically seems sufficient to obtain convergence of the value of $\Fun_\alpha^\eta$ (approximated numerically as described in Section \ref{sec:tuning}). We consider $\alpha\in[0,1]$.

We check the accuracy of the reconstructions through their mean, variance and integrated square error
\begin{align}
  \label{eq:ise}
  \ise(\hat{\pi})=\int_{\rset} \{ \pi(x) - \hat{\pi}(x)\}^2 \rmd x,
\end{align}
with $\hat{\pi}$ an estimator of $\pi$ (the integral is approximated by numerical integration).

\begin{figure}
\centering
\begin{tikzpicture}[every node/.append style={font=\normalsize}]
\node (img1) {\includegraphics[width=0.25\textwidth]{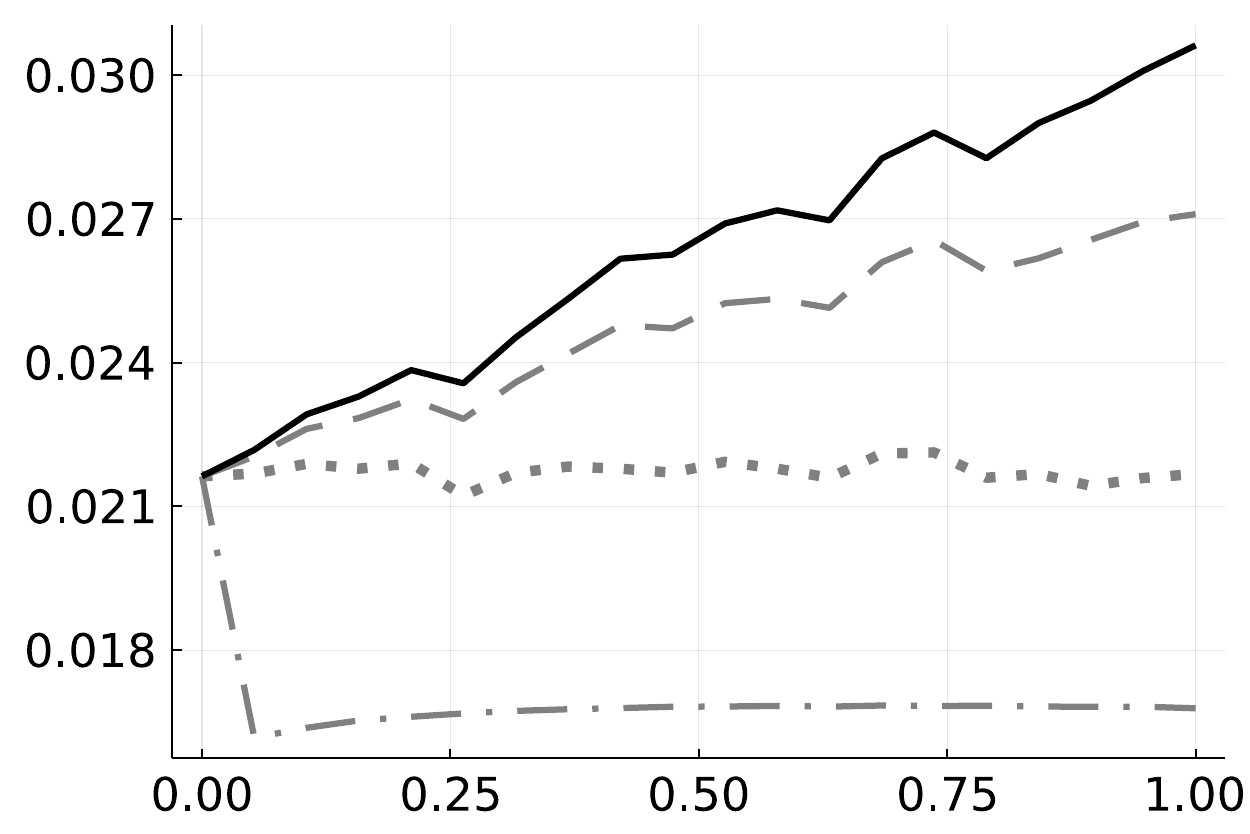}};
\node[below=of img1, node distance = 0, yshift = 1cm] (label1) {$\alpha$};
  \node[left=of img1, node distance = 0, rotate=90, anchor = center, yshift = -0.8cm] {$\ise(\hat{\pi})$};
\node[right=of img1, node distance = 0, xshift = -0.5cm] (img2) {\includegraphics[width=0.25\textwidth]{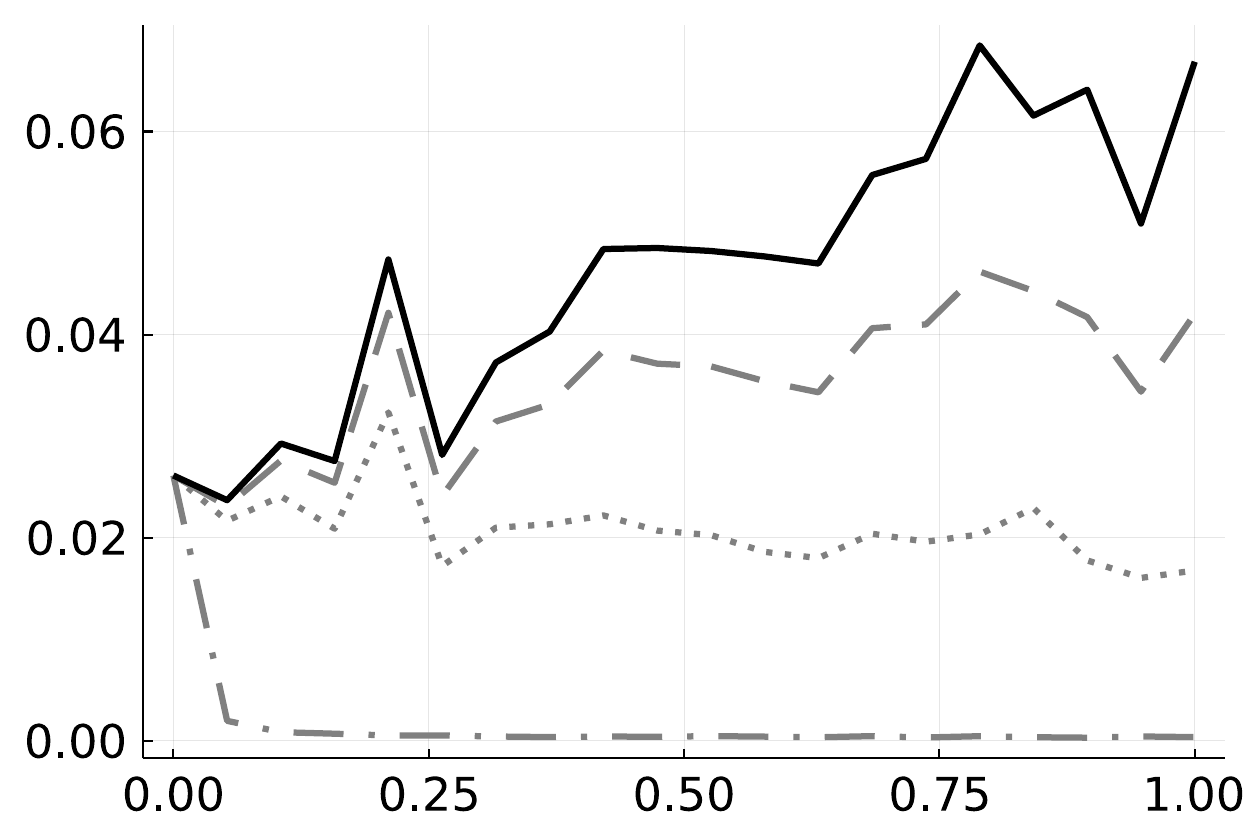}};
  \node[left=of img2, node distance = 0, rotate=90, anchor = center, yshift = -0.8cm] {$\mse[\textrm{mean}]$};
  \node[right=of img2, node distance = 0, xshift = -0.5cm] (img3) {\includegraphics[width=0.25\textwidth]{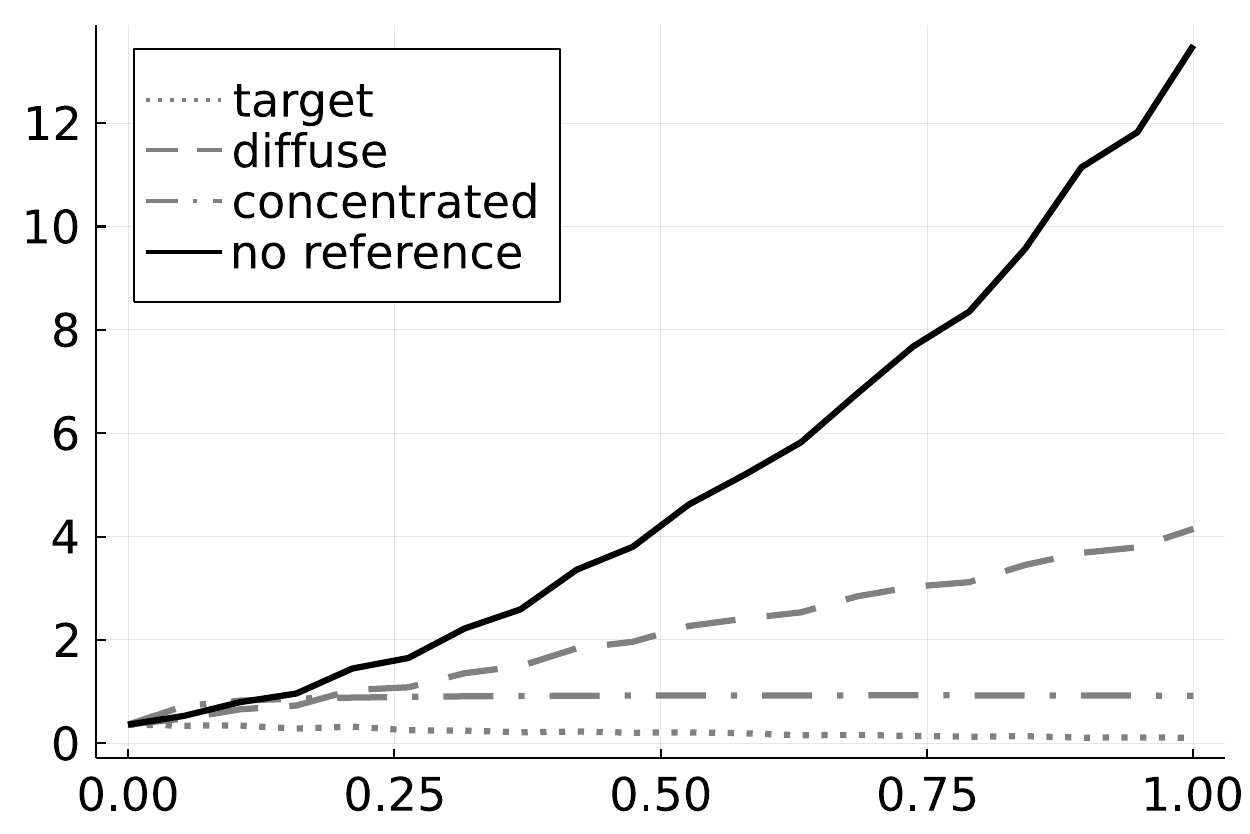}};
\node[left=of img3, node distance = 0, rotate=90, anchor = center, yshift = -0.8cm] {$\mse[\textrm{var}]$};
\node[below=of img2, node distance = 0, yshift = 1cm] (label1) {$\alpha$};
\node[below=of img3, node distance = 0, yshift = 1cm] (label1) {$\alpha$};
  \end{tikzpicture}
\caption{Effect of reference measure on reconstruction accuracy as $\alpha$ increases. We compare $\ise$ and $\mse$ of both mean and variance.}
\label{fig:toy_gaussian_reference}
\end{figure}

Figure~\ref{fig:toy_gaussian_reference} shows the $\ise$ and the mean squared error ($\mse$) for mean and variance of the reconstructions as the value of the regularizing parameter $\alpha$ increases. Large values of $\alpha$ inflate the diffusion coefficient of the MKVSDEs~\eqref{eq:mckean_sde} and~\eqref{eq:mckean_sde_entropy}, but, in the latter, there is no $\nabla U$ term in the drift and thus the diffusion term becomes stronger than the drift term for large $\alpha$s, causing the reconstructed solutions to have larger variance than the solution. A reference measure more diffuse than the target has a similar effect.

It may seem superficially surprising that a concentrated measure seems to give better results than using $\pi_0=\pi$ if one looks at the mean alone, but this simply reflects the fact that  a concentrated measure forces the particles to be close to its mean (in this case 0) and therefore compensates for the effect of the diffusion term in the SDE~\eqref{eq:mckean_sde}. However, a concentrated measure performs worse in terms of variance, especially for large $\alpha$.

\subsubsection{Comparison with Reversible jump MCMC}
The previous experiments show that the presence of the reference measure significantly improves the results provided by FE2kind-WGF.
We now compare our approach with the MCMC algorithm of \cite{doucet2010solving} which provides a Monte Carlo approximation of the von Neumann series representation of the solution $\pi$ (see, e.g., \cite[Sec. 2.4]{kress2014linear}). We implement their reversible jump MCMC (RJ-MCMC) with the following set up: we set the death and birth probabilities to 1/3 and use a random walk proposal with variance $\sigma^2 = 0.1^2$ for the update move and sample from $\pi$ for the birth move.

For FE2kind-WGF we set $\alpha = 0.01$, $N=100$ and $\gamma = 10^{-2}$ and iterate for $n_T=200$ time steps which empirically seems sufficient to obtain convergence of the value of $\Fun_\alpha^\eta$ (approximated numerically as described in Section \ref{sec:tuning}). We compare three reference measures, the target, $\pi_0=\pi$, a more diffuse reference measure, $\pi_0=\mathcal{N}(0, 2^2)$, and a more concentrated one, $\pi_0=\mathcal{N}(0, 0.1^2)$. We do not consider the case in which no reference measure is given further as the experiments in the previous section showed that this approach does not perform well. The initial distribution of the particles is a Gaussian with mean 0 and small variance ($\sigma^2 = 0.1^2$).

We compare the behaviour of the two approaches as $\lambda$ varies in $(0,1)$. 
We set the number of sample paths drawn using RJ-MCMC to $M=2\cdot10^4$ to match the average cost of FE2kind-WGF (roughly 0.25 seconds for both algorithms).
A smooth reconstruction is obtained by using~\eqref{eq:kde} for FE2kind-WGF and using an equivalent strategy for the RJ-MCMC estimator (see \cite[Eq. (44)]{doucet2010solving}).

Figure~\ref{fig:toy_gaussian_comparison} shows the average $\ise$ and the mean squared error of the estimated variance over 100 replicates for $\lambda\in(0, 1)$. As $\lambda\to 1$ recovering the solution becomes more challenging, as the von Neumann series becomes unstable.
However, the results provided by FE2kind-WGF remain stable as $\lambda$ increases for all choices of $\pi_0$, and in fact, can be applied even when $\lambda=1$ (i.e. no forcing term; see Section~\ref{sec:gp}), while RJ-MCMC cannot. Indeed, FE2kind-WGF provides reconstructions as least as good as those of the RJ-MCMC approach for all values of $\lambda$ and has considerably better behaviour as $\lambda$ increases. This is not surprising as the larger the value of $\lambda$, the more terms of the von Neumann expansion that make significant contributions and so the larger the space that the RJ-MCMC algorithm is required to explore in order to give a good approximation overall whereas the FE2kind-WGF approach does not depend upon such an expansion. 

\begin{figure}
\centering
\begin{tikzpicture}[every node/.append style={font=\normalsize}]
\node (img1) {\includegraphics[width=0.4\textwidth]{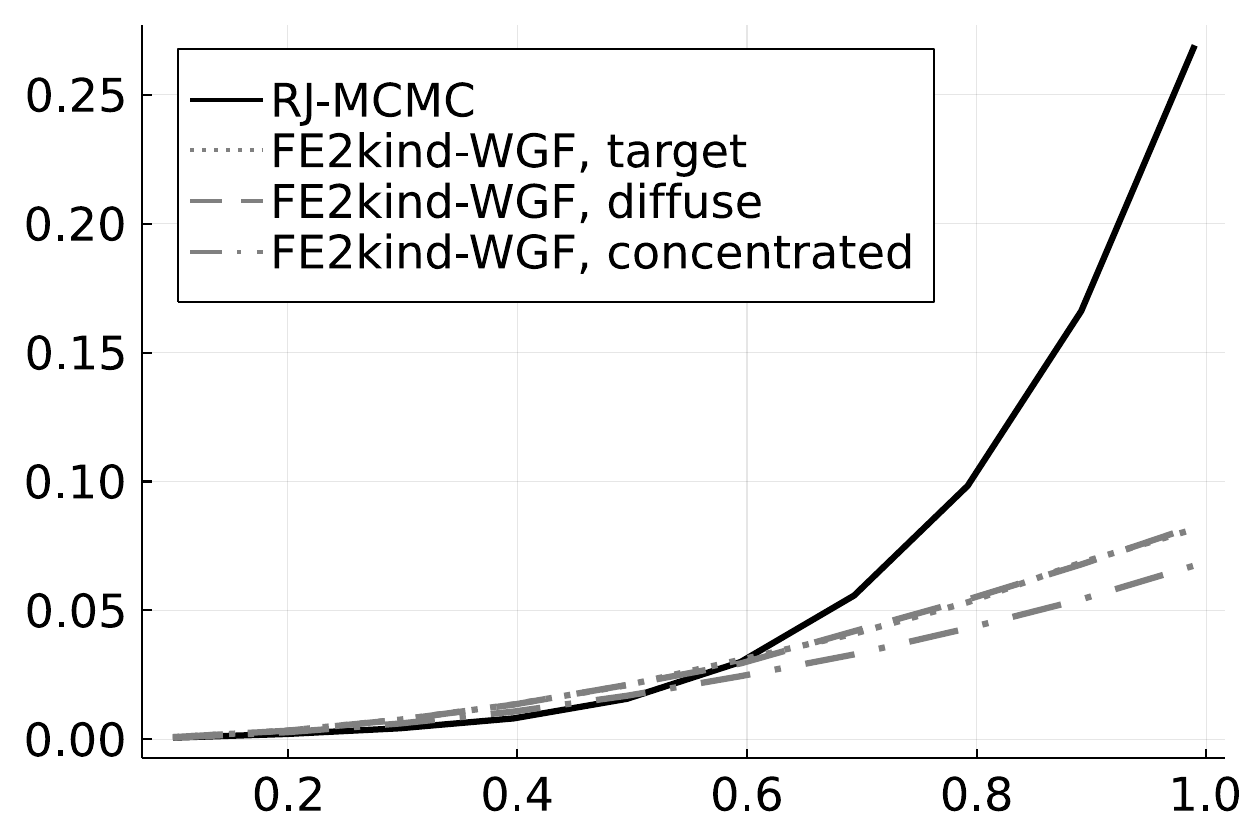}};
\node[below=of img1, node distance = 0, yshift = 1cm] (label1) {$\lambda$};
  \node[left=of img1, node distance = 0, rotate=90, anchor = center, yshift = -0.8cm] {$\ise(\hat{\pi})$};
\node[right=of img1, node distance = 0, xshift = -0.5cm] (img2) {\includegraphics[width=0.4\textwidth]{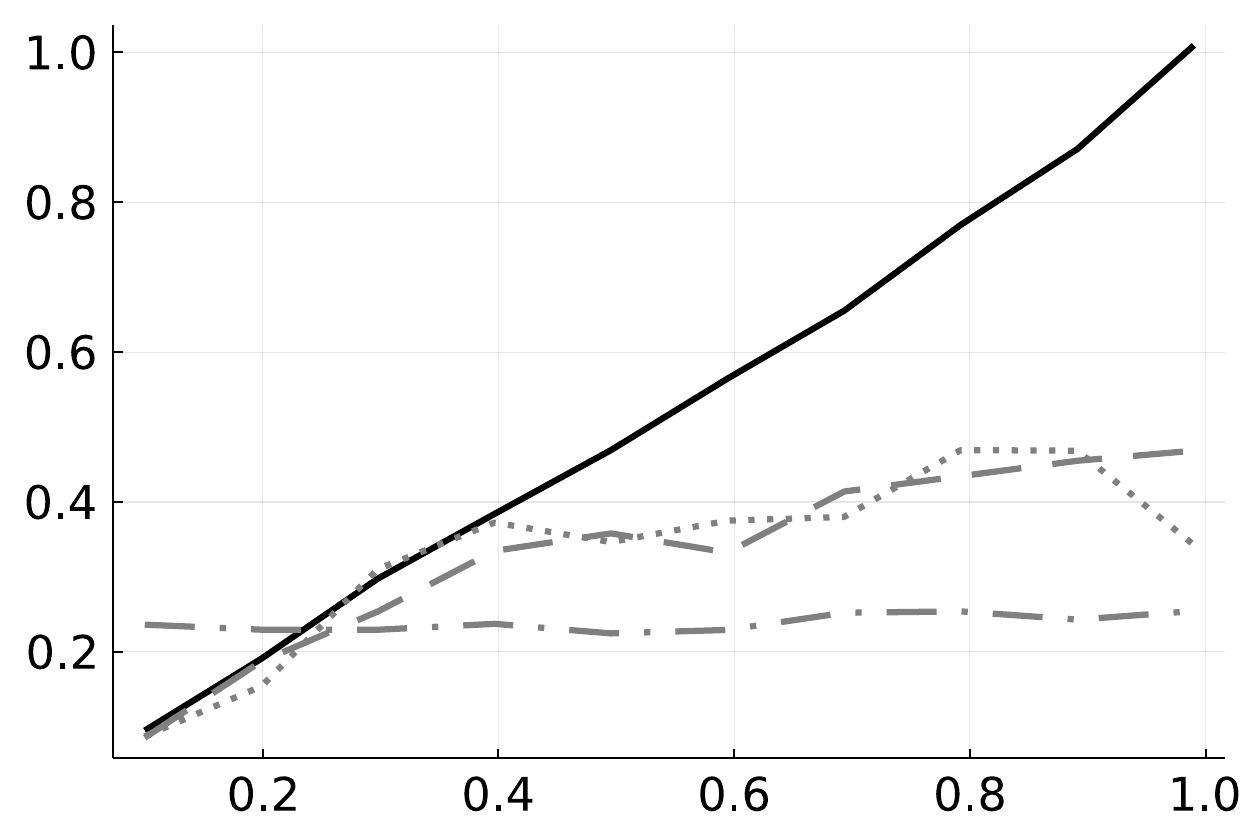}};
\node[below=of img2, node distance = 0, yshift = 1cm] {$\lambda$};
  \node[left=of img2, node distance = 0, rotate=90, anchor = center, yshift = -0.8cm] {$\mse[\textrm{var}]$};
  \end{tikzpicture}
\caption{Accuracy of solutions as $\lambda$ increases on a toy Gaussian model. We compare FE2kind-WGF and RJ-MCMC with similar cost through the $\ise$ and the $\mse$ of the estimated variance.}
\label{fig:toy_gaussian_comparison}
\end{figure}

\subsection{Karhunen--Lo{\`e}ve Expansions}
\label{sec:kl}
The Karhunen--Lo{\`e}ve expansion of a given stochastic process with covariance function $\kker(x, y)$  is given by the pairs of eigenvalues and eigenfunctions $(\lambda, \pi)$ of $\kker$. For any eigenvalue $\lambda$, the corresponding eigenfunction can be found by solving~\eqref{eq:fe} with $\varphi \equiv 0$.
We consider two examples commonly used in the Gaussian process literature \cite[Sec. 4.2]{williams2006gaussian}: the squared exponential kernel, $\kker(x, y) = \exp(-(y-x)^2)$, which satisfies our assumptions, and the exponential kernel, $\kker(x, y) = \exp(-\vert y-x\vert)$, which is not differentiable for $y=x$ and therefore does not satisfy Assumption~\ref{assum:general_kker}.

The Karhunen--Lo{\`e}ve expansion of the exponential kernel is known analytically \cite[Sec. 2.3.3]{ghanem2003stochastic}; the largest eigenvalue is $\lambda = 2/(1+\omega^2)$, where $\omega$ is the largest positive root of $f(\omega)= 1 - \omega\tan(\omega)$, with corresponding eigenfunction $\pi(x) = \cos(\lambda x)/\sqrt{1+\sin(2\lambda)/(2\lambda)}$ over $[-1, 1]$.

We compare the results obtained with the Nystr\"{o}m method, i.e. solving the eigenvalue problem associated with the matrix obtained by discretizing $\kker$ over the interval $[-1, 1]$, with the results obtained using FE2kind-WGF.
We compare the results of the two methods for increasing precision, corresponding to the number of particles $N$ for FE2kind-WGF and the number of discretization intervals for the Nystr\"{o}m method. 
In particular, we consider $N$ and the number of discretization intervals between 50 and $10^3$; for FE2kind-WGF we set $\gamma = 1/N$ as suggested in Section \ref{sec:tuning}, $\alpha = 10^{-2}$ and iterate for $n_T=400$ which empirically seems sufficient to obtain convergence of the value of $\Fun_\alpha^\eta$ (approximated numerically as described in Section \ref{sec:tuning}). The initial distribution and the reference measure $\pi_0$ are both Gaussians centred at 0 and with small variance ($\sigma^2=0.05^2$), in this case the reference measure carries information on the support of $\pi$, whose mass is mostly concentrated on $[-1, 1]$. 

Since the Nystr\"{o}m method is a deterministic algorithm we take its $\ise$ as reference and investigate the gains obtained by using FE2kind-WGF, see Figure~\ref{fig:kl} first panel. 
FE2kind-WGF produces results up to 5 times more accurate than those of the Nystr\"{o}m method for large $N$, while still outperforming the latter for small $N$.

The second and third panel of Figure~\ref{fig:kl} show the eigenfunction associated with the largest eigenvalue for the exponential and squared exponential kernel obtained with $N=500$ for FE2kind-WGF and 500 discretization intervals for the Nystr\"{o}m method. In the case of the squared exponential kernel, the $\lambda$ used in Algorithm~\ref{alg:second_kind} is that returned by the Nystr\"{o}m method.
The two methods return coherent approximations of the eigenfunction, but our approach does not require a fixed space discretization and has considerable lower error.

\begin{figure}
\centering
\begin{tikzpicture}[every node/.append style={font=\normalsize}]
\node (img1) {\includegraphics[width=0.3\textwidth]{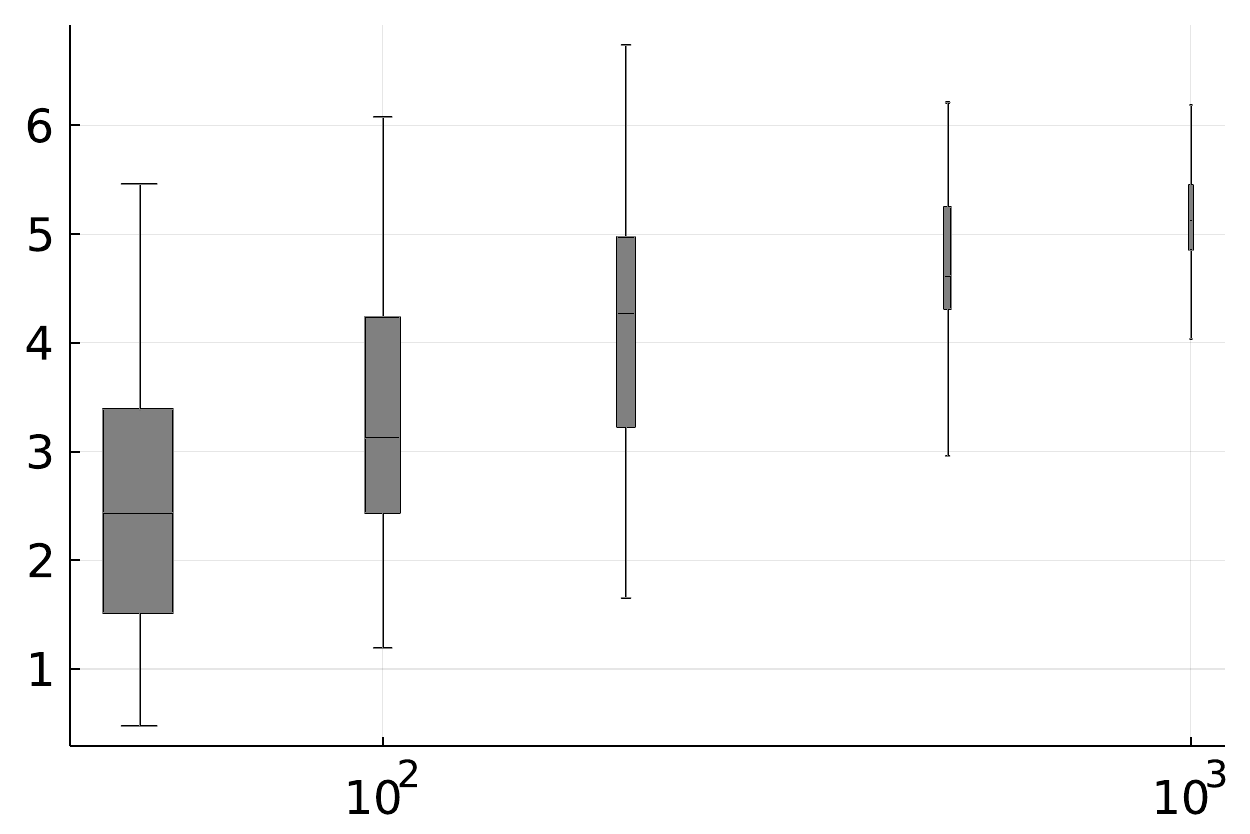}};
\node[below=of img1, node distance = 0, yshift = 1cm] (label1) {$N$};
  \node[left=of img1, node distance = 0, rotate=90, anchor = center, yshift = -0.8cm] {Gain in $\ise$};
\node[right=of img1, node distance = 0, xshift = -1cm] (img2) {\includegraphics[width=0.3\textwidth]{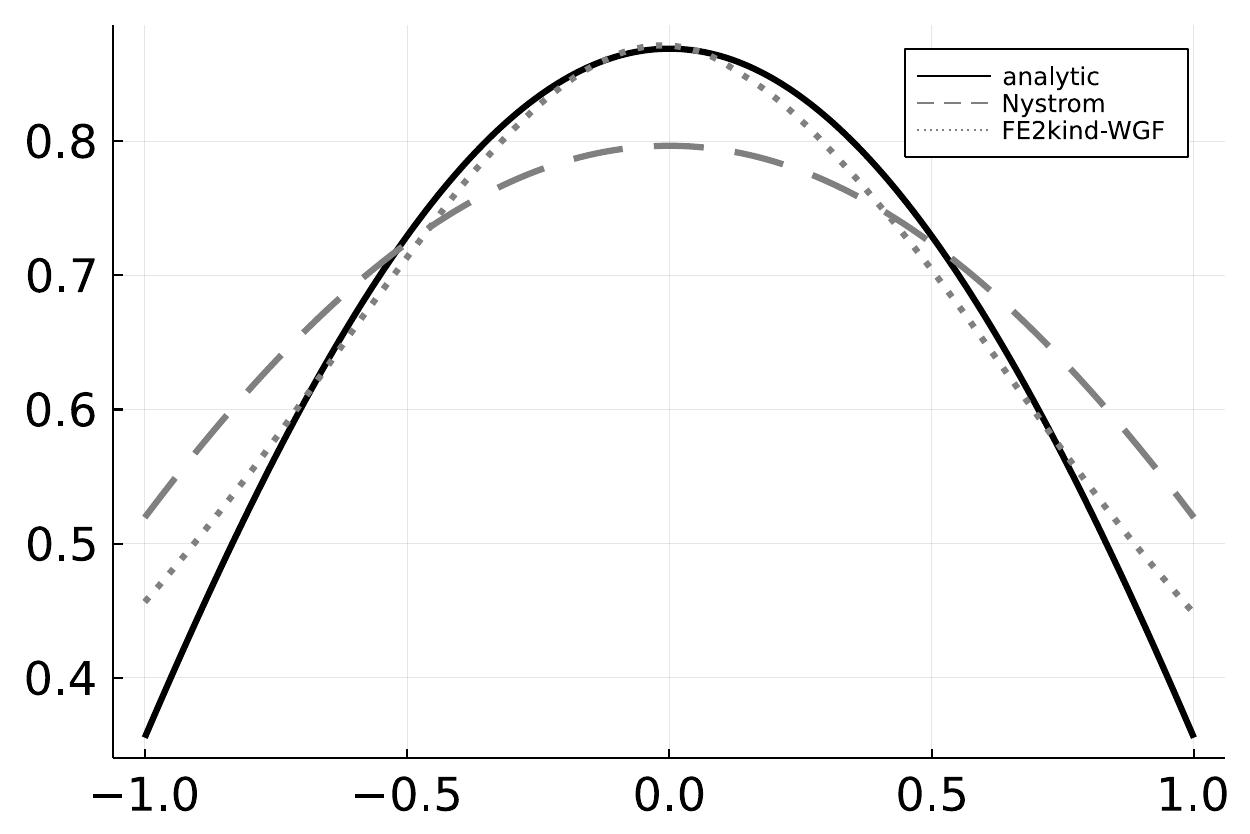}};
\node[below=of img2, node distance = 0, yshift = 1cm] {$x$};
\node[right=of img2, node distance = 0, xshift = -1cm] (img3) {\includegraphics[width=0.3\textwidth]{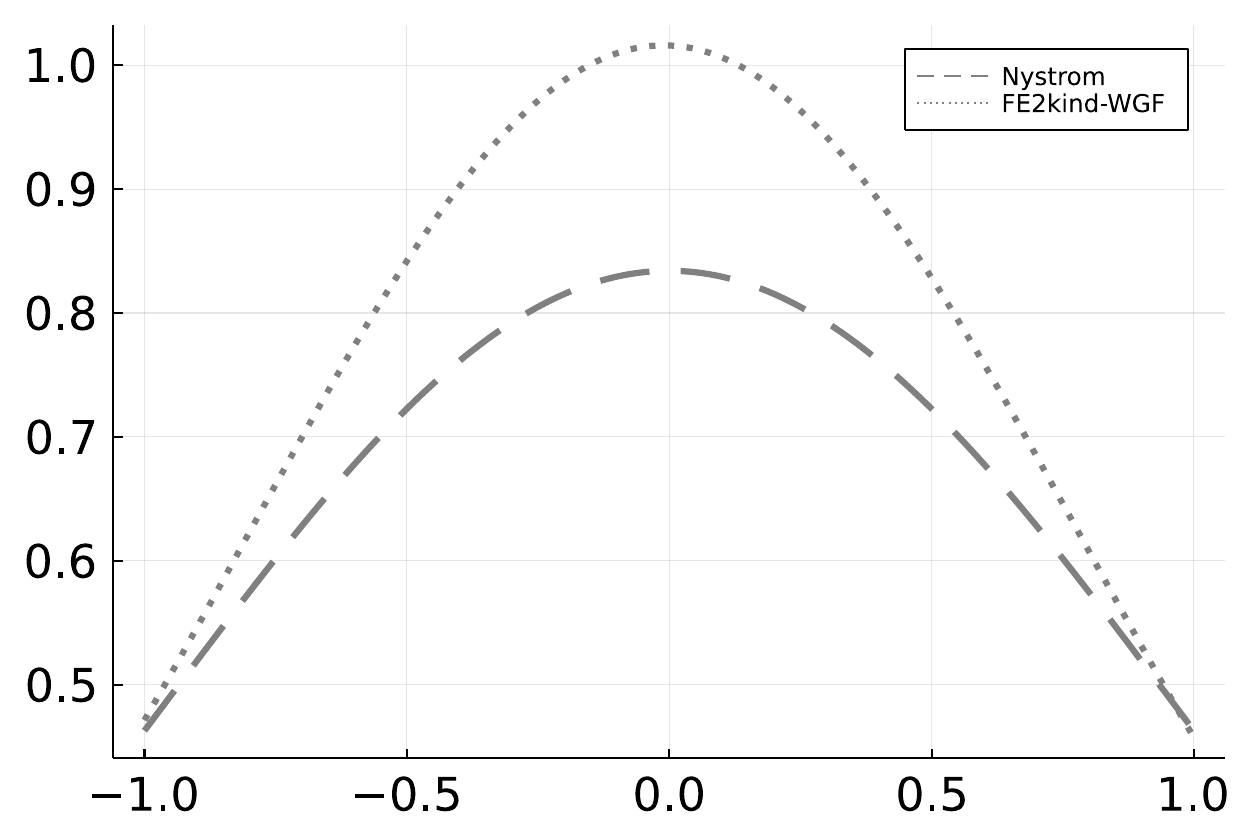}};
\node[below=of img3, node distance = 0, yshift = 1cm] {$x$};
\node[above=of img3, node distance = 0, yshift = -1.2cm] {squared exponential};
\node[above=of img2, node distance = 0, yshift = -1.2cm] {exponential};
  \end{tikzpicture}
\caption{Comparison of Nystr\"om method and FE2kind-WGF to reconstruct the first eigenfunction of common covariance kernels. Left panel: distribution of $\ise$ ratios ($\ise$ of Nystr\"{o}m method divide by $\ise$ of FE2kind-WGF). Middle and right panel: approximation of the eigenfunction of the largest eigenvalue for the exponential and squared exponential kernel.}
\label{fig:kl}
\end{figure}

\subsection{Equilibrium Distribution of Gaussian Process State Space Models}
\label{sec:gp}
Finally, we consider an example in which the forcing term $\varphi(x)\equiv 0$ and $\lambda=1$ and solving the integral equation~\eqref{eq:fe} is equivalent to finding the invariant measure of the kernel $\kker$ (or equivalently, the eigenmeasure corresponding to eigenvalue $\lambda=1$). In this case, the Von Neumann series does not provide a non-trivial solution, and the approach of \cite{doucet2010solving} cannot be applied.

\cite{beckers2016equilibrium} analyses the equilibrium distribution of Gaussian process state space models (GP-SSM) for a one-dimensional SSM with transition function $f(x) = 0.01x^3-0.2x^2+0.2x$. We fit a GP to learn $f$ using $m$ training pairs, $D_m:=(x_i, f(x_i)+\epsilon_i)_{i=1}^m$, where $\epsilon_i$ are independent Gaussians with mean 0 and standard deviation 5. We follow \cite{beckers2016equilibrium} and use $m=20$ input points $(x_i)_{i=1}^m$ uniformly distributed in $[-5, 5]$, and use a squared exponential covariance function $c(x, x') = \sigma_f^2\exp\left(-\norm{x-x'}^2/(2\ell^2)\right)$, with $\ell^2 = 3.59^2$ and $\sigma_f^2 = 4.21^2$.

The predictive distribution of the GP-SSM is given by a Fredholm integral equation~\eqref{eq:fe} with $\kker(x, y) = \N(x;\mu(y, D_m), \sigma^2(y, D_m))$, where 
\begin{align*}
    \mu(y, D_m) &= c(y, x_{1:m})^T(c(x_{1:m}, x_{1:m})+\Id)^{-1}z_{1:m}\\
    \sigma^2(y, D_m) &= c(y, y) - c(y, x_{1:m})^T(c(x_{1:m}, x_{1:m})+\Id)^{-1}c(y, x_{1:m}),
\end{align*}
with $x_{1:m} := (x_1, \dots, x_m)$ and $z_{1:m} := (f(x_1)+\epsilon_1, \dots, f(x_m)+\epsilon_m)$.
We compare the results obtained with FE2kind-WGF with that given by the Nystr\"{o}m method described in \cite{beckers2016equilibrium}. To ensure that the solution is non-trivial, i.e. $\pi(x)\neq 0$ for some $x$, we solve the linear system given by the Nystr\"{o}m method using least-squares with the additional constraint that the solution should be a probability density. FE2kind-WGF automatically enforces this constraint.

For the Nystr\"{o}m method we use 500 nodes in $[-20, 10]$. For FE2kind-WGF we use $N=200$ and $\gamma = 1/N$ as suggested in Section \ref{sec:tuning} and iterate for $n_T=100$ iterations which empirically seems sufficient to obtain convergence of the value of $\Fun_\alpha^\eta$ (approximated numerically as described in Section \ref{sec:tuning}). The reference measure is a Gaussian with mean 0 and standard deviation 1 (this is motivated by the fact that $\pi$ assigns positive mass to 0 and our empirical studies show that concentrated distributions tend to lead to faster convergence than diffuse ones), we set $\alpha=0.001$. Figure~\ref{fig:gp_ssm} shows the predictive distributions obtained with the two methods. To assess the quality of the results obtained we sample from the predictive distributions $\pi$ obtained with the two algorithms $n=2\cdot 10^4$ states $X_1, \dots, X_n$ and apply the transformation $Y_i = \mu(X_i, D_m) + \sigma(X_i, D_m)\xi$, where $\xi$ is a standard Gaussian random variable. To verify that the obtained $\hat{\pi}$ is indeed an invariant distribution we compare our approximations of $\pi$ with the histogram of the $Y_i$s, the fit is good for both algorithms (right panel).

\begin{figure}
\centering
\begin{tikzpicture}[every node/.append style={font=\normalsize}]
\node (img1) {\includegraphics[width=0.4\textwidth]{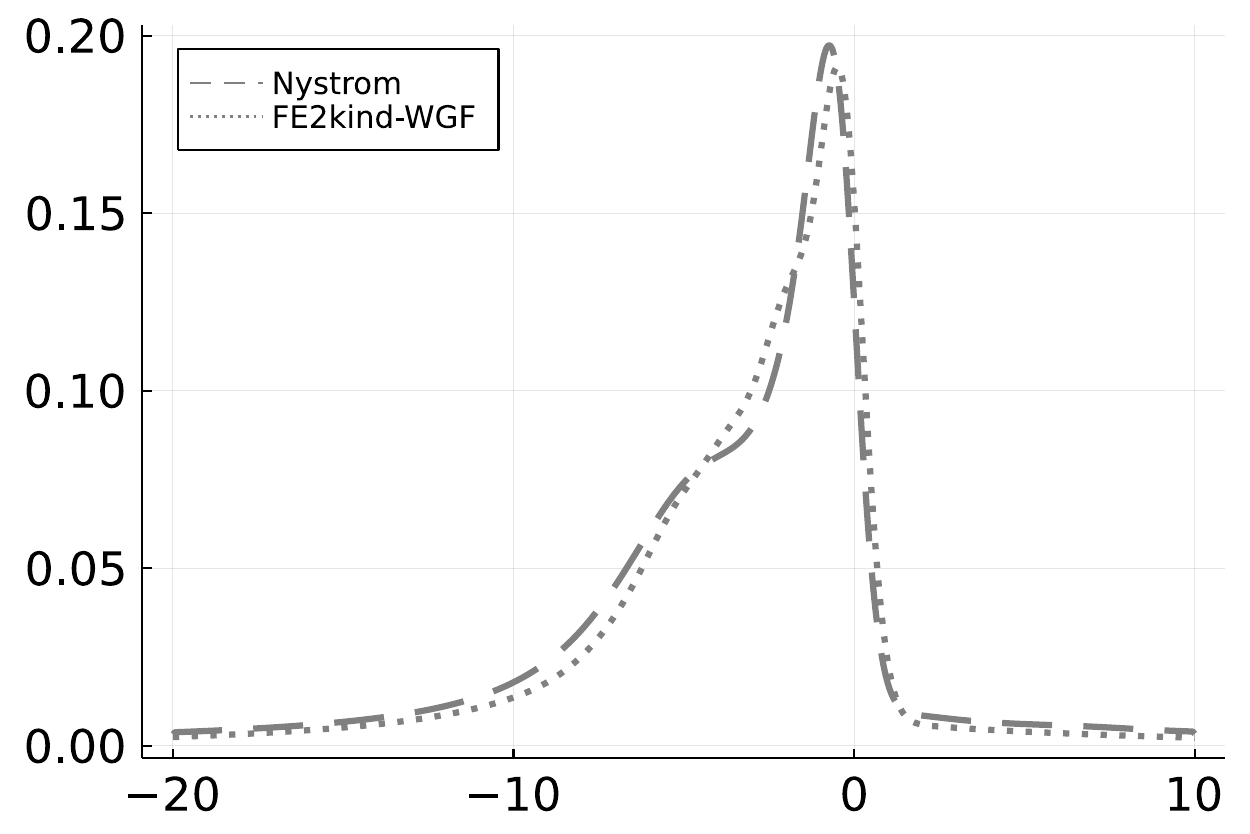}};
\node[below=of img1, node distance = 0, yshift = 1cm] (label1) {$x$};
  \node[left=of img1, node distance = 0, rotate=90, anchor = center, yshift = -0.8cm] {$\pi(x)$};
\node[right=of img1, node distance = 0, xshift = -0.5cm] (img2) {\includegraphics[width=0.4\textwidth]{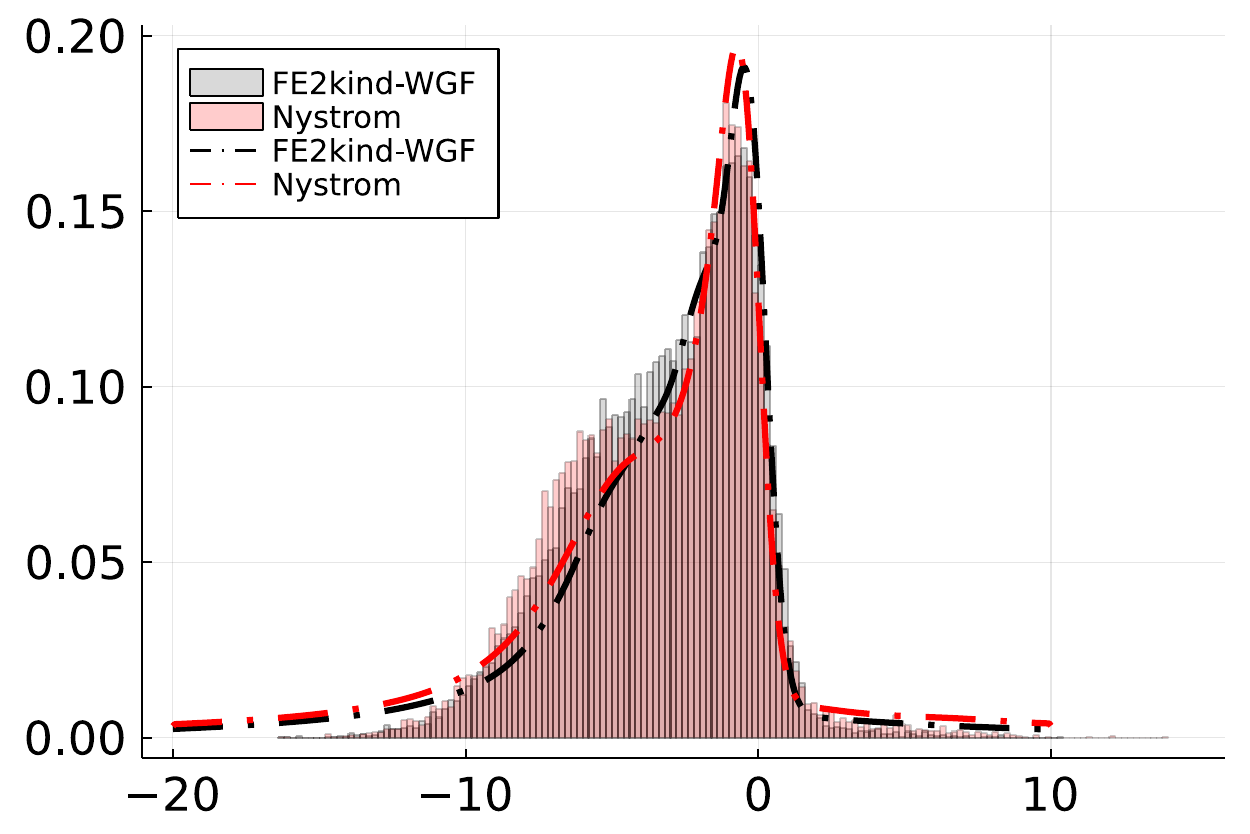}};
\node[below=of img2, node distance = 0, yshift = 1cm] {$x_k$};
  \node[left=of img2, node distance = 0, rotate=90, anchor = center, yshift = -0.8cm] {$x_{k+1}$};
  \end{tikzpicture}
\caption{Predictive distribution for a 1-dimensional GP-SSM. Left: distribution recovered by FE2kind-WGF and Nystr\"{o}m. Right: Histogram of $x_{k+1}$ obtained by sampling $x_k$ from the predictive distribution provided by FE2kind-WGF.}
\label{fig:gp_ssm}
\end{figure}

\section{Discussion}

In this paper we extended the approach of \cite{crucinio2022solving} to Fredholm integral equations of the second kind.
Under similar assumptions to those in \cite{crucinio2022solving} on the kernel $\kker$, we show that the regularized functional~\eqref{eq:G_eta} admits a unique minimizer which is the limiting distribution of the McKean--Vlasov SDE~\eqref{eq:mckean_sde}. In addition, we study the limiting behaviour of the minimizer when the regularization parameter tends to 0 (Proposition~\ref{prop:alphaeta0}).

We employ an interacting particle system~\eqref{eq:particle} to approximate this SDE and derive ergodicity and propagation of chaos results.
Combining the latter with strong convergence results of Euler--Maruyama schemes we obtain the bound~\eqref{eq:guideline} which provides a practical guideline for the selection of the time discretization step $\gamma$ and the number of particles $N$.

Our method addresses a wide variety of equations of the second kind and is competitive with standard methods based on deterministic discretizations. As in the case of \cite{crucinio2022solving}, the presence of the reference measure $\pi_0$ proves beneficial when the problem difficulty increases (Section~\ref{sec:rj}).
In addition, FE2kind-WGF is robust to small deviations from Assumption~\ref{assum:general_kker}, as shown in Section~\ref{sec:kl} where we consider a kernel $\kker$ which is not continuously differentiable.

While we focused on one dimensional examples to benchmark our method against other algorithms proposed in the literature, we believe that FE2kind-WGF has the potential to tackle higher dimensional problems and, as shown in \cite[Section 5.4]{crucinio2022solving} for equations of the first kind, to significantly outperform methods based on deterministic discretizations with the same storage cost. This is also confirmed in the one-dimensional case by Figure~\ref{fig:kl} left panel. One area in which this could prove particularly beneficial is spatial statistics, in which multidimensional spatial Karhunen--Lo{\`e}ve expansions are often employed \cite{fontanella2003dynamic, cressie2008fixed}.

One advantage of the Nystr\"om method against FE2kind-WGF is that the latter requires knowledge of the eigenvalue $\lambda$ for which we seek to find the corresponding eigenfunction. One possible way to relax this requirement is to modify the functional~\eqref{eq:funct_entropy} to $\Fun_\alpha(\pi, \lambda)$, allowing the definition of a minimization problem over the product space $\Pens(\rset^d) \times \rset^d$ as explored in \cite{kuntz2023particle} in a different setting.

\paragraph*{Acknowledgments}

AMJ acknowledges the financial support of the United Kingdom Engineering and Physical Sciences Research Council (EPSRC; grants EP/R034710/1 and EP/T004134/1) and by United Kingdom Research and Innovation (UKRI) via grant number EP/Y014650/1, as part of the ERC Synergy project OCEAN. 

FRC gratefully acknowledges the ``de Castro" Statistics Initative at the \textit{Collegio Carlo Alberto} and the \textit{Fondazione Franca e Diego de Castro}.
FRC is supported by the Gruppo
Nazionale per l'Analisi Matematica, la Probabilità e le loro Applicazioni (GNAMPA-INdAM).

For the purpose of open access, the authors have applied a Creative Commons Attribution (CC BY) licence to any Author Accepted Manuscript version arising from this submission.

\medskip\noindent\textbf{Data access statement:} No new data was created during this research. Julia code to reproduce all examples is available at \\  \url{https://github.com/FrancescaCrucinio/FE2kind_WGF}.

\bibliographystyle{abbrv}  
\bibliography{wgf_secondkind_biblio.bib}
\cleardoublepage
\appendix

\section{Proof of Proposition~\ref{prop:convergence_minimum}}
\label{app:functional}

\subsection{Auxiliary Result for the Proof of Proposition~\ref{prop:convergence_minimum}}

The following auxiliary result is adapted from \cite[Appendix A.1]{feydy2019interpolating}.
 \begin{lemma}   
 \label{lemma:kl_convex}
 Let $\nu, \mu$ be $\sigma$-finite and finite measures, respectively, on $\rset^d$, and denote by $\mathcal{M}^+(\rset^d)$ the set of $\sigma$-finite measures on $\rset^d$. Assume that $\mu\ll \nu$. Then, the Kullback--Leibler divergence $\KL{\mu}{\nu}=\int\limits_{\rset^d} \log ((\rmd \mu /\rmd \nu)(y)) \rmd 
 \mu(y)$ is convex in both arguments.
 \end{lemma}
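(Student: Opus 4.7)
\medskip

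\noindent\textbf{Proof plan.} The plan is to establish the joint convexity by rewriting the Kullback--Leibler divergence as the integral of the perspective $g(x,y) = x\log(x/y)$ of the convex function $t \mapsto t\log t$, which is itself jointly convex on $[0,+\infty)^2$ (with the conventions $0\log 0 = 0$, $0\log(0/0)=0$ and $x\log(x/0)=+\infty$ for $x>0$). Then pointwise convexity integrates to convexity of the functional.

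\medskip

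\noindent First, I would fix two pairs $(\mu_1,\nu_1)$, $(\mu_2,\nu_2)$ with $\mu_i \ll \nu_i$ and $t \in (0,1)$, and introduce a common dominating $\sigma$-finite measure $\lambda := \mu_1 + \mu_2 + \nu_1 + \nu_2$. Writing $p_i := \rmd \mu_i / \rmd \lambda$ and $q_i := \rmd \nu_i / \rmd \lambda$, the convex combinations admit the densities $tp_1 + (1-t)p_2$ and $tq_1 + (1-t)q_2$ with respect to $\lambda$, and the absolute-continuity hypothesis transfers to the combinations (if $(tq_1+(1-t)q_2)(A)=0$ then $q_i\,\1_A = 0$ $\lambda$-a.e., hence $p_i\,\1_A = 0$ $\lambda$-a.e.). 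Consequently, each of the three Kullback--Leibler terms appearing in the convexity inequality can be expressed as $\int g(p,q)\rmd\lambda$ for the appropriate densities, where $g(x,y) := x\log(x/y)$.

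\medskip

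\noindent Next, I would establish joint convexity of $g$ on $[0,+\infty)^2$. The cleanest route is to invoke the perspective construction: for the convex $f(t)=t\log t$, the perspective $\tilde f(x,y) = y f(x/y) = x\log(x/y)$ is jointly convex, this being a standard fact in convex analysis. Alternatively, a direct Hessian calculation on $(0,+\infty)^2$ gives
\begin{equation*}
\nabla^2 g(x,y) = \begin{pmatrix} 1/x & -1/y \\ -1/y & x/y^2 \end{pmatrix},
\end{equation*}
whose determinant vanishes and whose trace is strictly positive, so $\nabla^2 g \succeq 0$, and the boundary cases follow by lower semicontinuity of $g$ on $[0,+\infty)^2$.

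\medskip

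\noindent Combining the two pieces, for $\lambda$-almost every $z$,
\begin{equation*}
g\bigl(tp_1(z)+(1-t)p_2(z),\, tq_1(z)+(1-t)q_2(z)\bigr) \le t\, g(p_1(z),q_1(z)) + (1-t)\, g(p_2(z),q_2(z)),
\end{equation*}
and integrating against $\lambda$ yields
\begin{equation*}
\KL{t\mu_1+(1-t)\mu_2}{t\nu_1+(1-t)\nu_2} \le t\,\KL{\mu_1}{\nu_1} + (1-t)\,\KL{\mu_2}{\nu_2},
\end{equation*}
which is joint convexity. The only real subtlety is administrative --- choosing one dominating measure simultaneously for all four measures so that the pointwise inequality makes sense and the densities of the convex combinations are the convex combinations of densities --- after which joint convexity of $g$ does all of the work.
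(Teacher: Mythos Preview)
Your argument is correct and is in fact the textbook route to joint convexity of $f$-divergences: pass to densities with respect to a common dominating measure and use that the perspective $g(x,y)=x\log(x/y)$ of $t\mapsto t\log t$ is jointly convex. The paper, however, proceeds differently. It establishes the Donsker--Varadhan/Fenchel dual representation
\[
\KL{\mu}{\nu}=\sup_{h\in \rmc_b(\rset^d)}\Bigl[\int h\,\rmd\mu-\int e^{h-1}\,\rmd\nu\Bigr],
\]
by showing the upper bound via Fenchel's inequality and the lower bound by approximating the optimal $h^\star=\log(\rmd\mu/\rmd\nu)+1$ with bounded truncations; joint convexity then drops out because a pointwise supremum of functionals that are affine in $(\mu,\nu)$ is convex. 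Your approach is more elementary and avoids any duality machinery, while the paper's approach yields the variational formula as a byproduct (and, implicitly, lower semicontinuity in the weak topology), at the cost of the extra approximation step. Both are standard; they simply trade directness for the additional structural information carried by the dual representation.
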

\begin{proof}
The KL divergence above can be written as an $f$-divergence for $f(s)=s\log s$ with the convention $0\log 0=0$, $\KL{\mu}{\nu} = \int\limits_{\rset^d} f(\rmd \mu /\rmd \nu)\rmd \nu$.
The convex conjugate of $f$ is, for all $u\in\rset$,
\begin{align*}
f^\star(u):= \sup_{s>0}(su-f(s)) = e^{u-1},
\end{align*}
and satisfies $f(s)+f^\star(u)\geq su$ by Fenchel’s inequality.

We will now show that the KL divergence can be expressed as the solution of a dual concave problem:
\begin{align}
\label{eq:dual_kl}
\KL{\mu}{\nu}=\sup_{h\in \rmc_b(\rset^d)} \left[\int\limits_{\rset^d} h(x) \rmd\mu(x) - \int\limits_{\rset^d} e^{h(x)-1}\rmd\nu(x)\right],
\end{align}
where $\rmc_b(\rset^d)$ denotes the set of real-valued bounded continuous functions over $\rset^d$. Since the supremum in~\eqref{eq:dual_kl} is taken over a convex function of $\nu, \mu$ we will consequently obtain that $\KL{\mu}{\nu}$ is convex in both arguments.

We now prove~\eqref{eq:dual_kl}. If $\mu$ is absolutely continuous w.r.t. $\nu$ we have 
\begin{align*}
&\KL{\mu}{\nu} - \left[\int\limits_{\rset^d} h(x) \rmd\mu(x) - \int\limits_{\rset^d} e^{h(x)-1}\rmd\nu(x)\right] \\
&\qquad\qquad= \int\limits_{\rset^d} f\left(\frac{\rmd \mu }{\rmd \nu}\right)\rmd \nu+\int\limits_{\rset^d} f^\star(h(x))\rmd\nu(x)-\int\limits_{\rset^d} h(x) \rmd\mu(x)\geq 0,
\end{align*}
by Fenchel's inequality. This gives an upper bound on the $\sup$ in~\eqref{eq:dual_kl}. We will now establish that it is also bounded below by the KL divergence.

Let us define $h^\star := \log \rmd \mu /\rmd \nu+1$ and observe that
\begin{align*}
&\KL{\mu}{\nu} = \left[\int\limits_{\rset^d} h^\star(x) \rmd\mu(x) - \int\limits_{\rset^d} e^{h^\star(x)-1}\rmd\nu(x)\right].
\end{align*}
Then, if $h_n = \log \rmd \mu /\rmd \nu\1_{1/n \leq \log \rmd \mu /\rmd \nu\leq n}+1\in\rmc_b(\rset^d)$, the monotone and dominated convergence theorems
give
\begin{align*}
& \left[\int\limits_{\rset^d} h_n(x) \rmd\mu(x) - \int\limits_{\rset^d} e^{h_n(x)-1}\rmd\nu(x)\right]\to\KL{\mu}{\nu}
\end{align*}
as $n\to\infty$. Hence, we obtain a lower bound on the $\sup$ in~\eqref{eq:dual_kl}. 

It follows that the optimal value in~\eqref{eq:dual_kl} is bounded above and below by $\KL{\mu}{\nu}$ giving the result.
\end{proof}


\subsection{Proof of Proposition~\ref{prop:convergence_minimum}}
\begin{enumerate}[label=(\alph*)]

\item 

In the case $\eta = 0$ we have
\begin{align*}
\Fun(\pi) &= \KL{\pi}{\varphi+\lambda \int\limits_{\rset^d} \kker(\cdot, y) \pi( y)\rmd y}\geq 0
\end{align*}
because $\pi$ and $\nu(\rmd x) = \left(\varphi(x)+\lambda \int\limits_{\rset^d} \kker(x, y) \pi( y)\rmd y\right)\rmd x$ are probabilities.
To show that $\Fun(\pi)$ is convex, consider the functional $$\mathcal{M}:\pi\mapsto \left(\varphi+\lambda \int\limits_{\rset^d} \kker(\cdot, y) \rmd\pi( y)\right)$$ which is linear w.r.t $\pi$ and therefore convex. 
Thus, we have that $\Fun(\pi)$ can be written as $\KL{\pi}{\mathcal{M}(\pi)}$.
 In addition, \cite[Lemma 
 1.4.3-(b)]{dupuis1997weak}  guarantees that the KL between probability distributions is jointly convex in both arguments.
 Thus, for all $t\in[0, 1]$
 \begin{align*}
     &\KL{t\pi_1+(1-t)\pi_2}{\mathcal{M}(t\pi_1+(1-t)\pi_2)}\\ &\qquad=\KL{t\pi_1+(1-t)\pi_2}{t\mathcal{M}(\pi_1)+(1-t)\mathcal{M}(\pi_2)}\\
     &\qquad\leq t\KL{\pi_1}{\mathcal{M}(\pi_1)}+(1-t)\KL{\pi_2}{\mathcal{M}(\pi_2)},
 \end{align*}
 which shows that $\Fun^\eta$ is convex in $\pi$ when $\eta = 0$.

For the case $\eta>0$ recall that 
\begin{align*}
\Fun^\eta(\pi) &= \KL{\pi}{\varphi+\lambda \int\limits_{\rset^d} \kker(\cdot, y) \pi( y)\rmd y+\eta},
\end{align*} and take $f(s)=s\log s$ with the convention $0\log 0=0$. Then, $\Fun^\eta(\pi)$ can be written as $\int\limits_{\rset^d} f(\rmd \pi /\rmd \nu)\rmd \nu$ with $\nu(\rmd x) = \left(\varphi(x)+\lambda \int\limits_{\rset^d} \kker(x, y) \pi( y)\rmd y+\eta\right)\rmd x$ a $\sigma$-finite measure, while $\pi$ is a probability measure. Using Jensen's inequality on the convex function $f$ w.r.t. the probability measure $\nu-\eta$ we have
\begin{align*}
\int\limits_{\rset^d} f\left(\frac{\rmd \pi}{\rmd \nu}(x)\right) \nu(\rmd x)
& = \int\limits_{\rset^d} f\left(\frac{\rmd \pi}{\rmd \nu}(x)\right) (\nu(\rmd x)-\eta\rmd x)+\eta\int\limits_{\rset^d} f\left(\frac{\rmd \pi}{\rmd \nu}(x)\right)\rmd x \\
& \geq f\left(\int\limits_{\rset^d} \frac{\rmd \pi}{\rmd \nu}(x)(\nu(\rmd x)-\eta\rmd x)\right)+\eta\int\limits_{\rset^d} f\left(\frac{\rmd \pi}{\rmd \nu}(x)\right)\rmd x. 
\end{align*}
Since the integrand in the first term is always positive, the integral is in the support of $f$, and we can lower bound the first term with the minimum of $f$:
\begin{align}
\label{eq:lb_term1}
    \int\limits_{\rset^d} f\left(\frac{\rmd \pi}{\rmd \nu}(x)\right) \nu(\rmd x) &\geq -e^{-1}+\eta\int\limits_{\rset^d} f\left(\frac{\rmd \pi}{\rmd \nu}(x)\right)\rmd x. 
\end{align}
Now, we observe that

\resizebox{0.95\textwidth}{!}{\noindent
\hspace*{-0.5cm}\parbox{\textwidth}{\noindent
\begin{align}
\label{eq:lb_term2}
    \eta\!&\int\limits_{\rset^d} \!\! f\left(\frac{\rmd \pi}{\rmd \nu}(x)\right)\rmd x \\
    &= \int\limits_{\rset^d} \!\! \frac{\eta \pi(x)}{\varphi(x)+\lambda \int\limits_{\rset^d} \kker(x, y) \pi( y)\rmd y+\eta}\log \left(\frac{\pi(x)}{\varphi(x)+\lambda \int\limits_{\rset^d} \kker(x, y) \pi( y)\rmd y+\eta}\right)\rmd x\notag \\
    &\geq \eta((1+\lambda)\Mtt+\eta)^{-1}\Fun^\eta(\pi),\notag
\end{align}

}}
where we used Assumption~\ref{assum:general_kker} to obtain the lower bound.
Using~(\ref{eq:lb_term1},\ref{eq:lb_term2}) and the definition of $\Fun^\eta$ we find
\begin{align*}
  \Fun^\eta(\pi)\geq& -e^{-1} +\frac{\eta}{(1+\lambda)\Mtt+\eta} \Fun^\eta(\pi)
\end{align*}
and thus 
\begin{align*}
  \Fun^\eta(\pi)\geq -\left(1-\frac{\eta}{(1+\lambda)\Mtt+\eta}\right)^{-1} e^{-1}=:\Ctt.
\end{align*}
The fact that $\eta > 0$ guarantees that any probability measure with a Lebesgue density is absolutely continuous w.r.t. $\nu$ and so the convexity of $\Fun^\eta$ follows from Lemma~\ref{lemma:kl_convex} with $\mu = \pi$ and $$\nu(\rmd x) = \left(\varphi(x)+\lambda \int\limits_{\rset^d} \kker(x, y) \pi( y)\rmd y+\eta\right)\rmd x.$$

\item
To see that $\Fun_\alpha^\eta$ is proper observe that, since $\KL{\pi}{\pi_0}\geq 0$, $\Fun_{\alpha}^\eta(\pi)\geq \Ctt$
  for all $\pi \in \Pens(\rset^d)$. Take a reference measure $\pi_0$ with finite entropy, i.e. $|\Hent(\pi_0)| < +\infty$. We have 
\begin{align*}
    \Fun_\alpha^\eta(\pi_0) = -\Hent(\pi_0) - \int_{\rset^d} \pi_0(x)\log(\varphi(x)+\lambda\pi_0[\kker(x, \cdot)] + \eta )\rmd x.
\end{align*}
In addition, under Assumption~\ref{assum:general_kker}, for any $x \in \rset^d$ and $\pi \in \Pens(\rset^d)$
\begin{equation*}
\vert \log(\varphi(x)+\lambda\pi[\kker(x, \cdot)] + \eta )\vert \leq \max(\vert \log((1+\lambda)\Mtt+\eta)\vert, \vert \log(\eta)\vert),
\end{equation*}
showing that
\begin{align*}
- \int_{\rset^d} \pi_0(x)\log(\varphi(x)+\lambda\pi_0[\kker(x, \cdot)] + \eta )\rmd x \leq     \max(\vert \log((1+\lambda)\Mtt+\eta)\vert, \vert \log(\eta)\vert).
\end{align*}
Hence, there exists $\pi=\pi_0$ for which $\Fun_\alpha^\eta(\pi)<+\infty$ and the functional is proper.

Let $(\pi_n)_{n\geq 1} \in (\Pens(\rset^d))^\nset$ be such that $\lim_{n \to +\infty} \pi_n = \pi \in \Pens(\rset^d)$. Since for any $(x, y )\in \rset^d\times \rset^d$, $\vert \kker(x,y)\vert \leq \Mtt$ and $\kker$ is continuous, we have that $\lim_{n \to +\infty} \pi_n[\kker(x, \cdot)] = \pi[\kker(x, \cdot)]$ for each $x$. This and the fact that the Kullback--Leibler divergence is lower semi-continuous in both arguments \cite[Lemma
1.4.3-(b)]{dupuis1997weak} guarantees that $\Fun^\eta$ is lower semi-continuous.
\cite[Lemma
1.4.3-(b)]{dupuis1997weak} also guarantees that $\KL{\pi}{\pi_0}$ is strictly convex and lower semi-continuous.
It follows that $\Fun^\eta_\alpha$ is strictly convex and lower semi-continuous.

To see that $\Fun^\eta_\alpha$ is coercive observe that $\Fun^\eta_\alpha$ is the sum of the lower bounded lower semi-continuous functional $\Fun^\eta$ and the coercive functional \cite[Lemma
 1.4.3-(c)]{dupuis1997weak} $\KL{\pi}{\pi_0}$, then for any $\beta\in\rset$
\begin{align*}
    S:=&\left\lbrace\pi\in \Pens(\rset^d): \Fun^\eta(\pi)+\alpha\KL{\pi}{\pi_0}\leq \beta\right\rbrace\\
    \subseteq& \left\lbrace\pi\in \Pens(\rset^d): \alpha\KL{\pi}{\pi_0}\leq \beta-\Ctt\right\rbrace=:\tilde{S},
\end{align*}
since $\Fun^\eta(\pi)\geq \Ctt$.
$\tilde{S}$ is relatively compact since $\KL{\pi}{\pi_0}$ is coercive and thus $S$ is also relatively compact, showing that $\Fun^\eta_\alpha$ is coercive.
\end{enumerate}

\section{Convergence of Minimizers}
\label{app:gamma}
\subsection{Basics on $\Gamma$-convergence}

We start this section by recalling some basic facts on $\Gamma$-convergence. We refer to \cite{dalmaso1993introduction} for a more complete introduction to $\Gamma$-convergence.

First, we recall that a function $f: \ \msx \to \rset$ (where $\msx$ is a metric
space) is coercive if for any $t \in \rset$, $f^{-1}(\ocint{-\infty, t})$ is
relatively compact. This definition can be extended to the case where $\msx$ is
only a topological space, see \cite[Definition 1.12]{dalmaso1993introduction}.

\begin{definition}
\label{def:gammas}
Let $F = \{f_\alpha: \, \alpha \in \msa\}$
where $\msa$ is a topological space and for any $\alpha \in \msa$,
$f_\alpha: \msx \to \rset \cup \{+\infty\}$ with $\msx$ a metric space.  We say
that $F$ is a $\Gamma$-system if the following hold:
\begin{enumerate}[wide, labelindent=0pt, label=(\alph*)]
\item \label{def:a} For any $x \in \msx$, $\alpha^\star \in \msa$, 
  $(\alpha_n)_{n \in \nset} \in \msa^\nset$ with
  $\lim_{n \to +\infty} \alpha_n = \alpha^\star$ and 
  $(x_n)_{n \in \nset} \in \msx^\nset$ with $\lim_{n \to +\infty} x_n =x$
  we have $\liminf_{n \to +\infty} f_{\alpha_n}(x_n) \geq f_{\alpha^\star}(x)$.
\item \label{def:b} For any $x \in \msx$, $\alpha^\star \in \msa$,
  $(\alpha_n)_{n \in \nset} \in \msa^\nset$ with
  $\lim_{n \to+\infty} \alpha_n = \alpha^\star$ there exists
  $(x_n)_{n \in \nset} \in \msx^\nset$ with $\lim_{n \to +\infty} x_n =x$ such
  that $\lim_{n \to +\infty} f_{\alpha_n}(x_n) = f_{\alpha^\star}(x)$.
\end{enumerate}
\end{definition}

\begin{prop}[Corollary 7.20, \cite{dalmaso1993introduction}.]
  \label{prop:gamma_cv_without}
  Let $F = \{f_\alpha:\, \alpha \in \msa\}$ where $\msa$ is a
  topological space and for any $\alpha \in \msa$,
  $f_\alpha: \msx \to \rset \cup \{+\infty\}$ with $\msx$ a metric space. Assume
  that $F$ is a $\Gamma$-system  and is such that for any
  $\alpha \in \msa$, $f_\alpha$ admits a unique minimizer. Then
  for any $(\alpha_n)_{n \in \nset} \in \msa^\nset$ and
  $(x_n)_{n \in \nset} \in \msx^\nset$ such that for any $n \in \nset$, $x_n$ is
  the minimizer of $f_{\alpha_n}$ and
  $\lim_{n \to +\infty} \alpha_n =\alpha^\star$, if there exists
  $x^\star \in \msx$ such that $\lim_{n \to +\infty} x_n =x^\star$ then
  $x^\star$ is a minimizer of $f_{\alpha^\star}$.
\end{prop}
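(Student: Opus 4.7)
The plan is that this is the standard textbook deduction from the two defining properties of a $\Gamma$-system (the $\Gamma$-$\liminf$ inequality and the existence of a recovery sequence). The hypothesis already hands us a convergent sequence $x_n \to x^\star$ of minimizers with $\alpha_n \to \alpha^\star$, so no compactness argument is needed; we only have to verify that $x^\star$ is in fact a minimizer of $f_{\alpha^\star}$.

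First, I would fix an arbitrary competitor $y \in \msx$ and aim to show $f_{\alpha^\star}(x^\star) \leq f_{\alpha^\star}(y)$. Applying Definition~\ref{def:gammas}\ref{def:b} (the recovery sequence property) to the point $y$ and the converging sequence $\alpha_n \to \alpha^\star$, I obtain a sequence $(y_n)_{n \in \nset} \in \msx^\nset$ such that $y_n \to y$ and $f_{\alpha_n}(y_n) \to f_{\alpha^\star}(y)$.

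Next, I would use minimality of $x_n$ for $f_{\alpha_n}$: since $f_{\alpha_n}(x_n) \leq f_{\alpha_n}(y_n)$ for all $n$, taking $\liminf$ and using that $f_{\alpha_n}(y_n)$ has a limit gives
\begin{equation*}
  \liminf_{n \to +\infty} f_{\alpha_n}(x_n) \leq \lim_{n \to +\infty} f_{\alpha_n}(y_n) = f_{\alpha^\star}(y).
\end{equation*}
Then I would apply Definition~\ref{def:gammas}\ref{def:a} (the $\Gamma$-$\liminf$ inequality) to the converging pair $\alpha_n \to \alpha^\star$, $x_n \to x^\star$ to conclude $f_{\alpha^\star}(x^\star) \leq \liminf_{n \to +\infty} f_{\alpha_n}(x_n)$. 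Chaining these two inequalities yields $f_{\alpha^\star}(x^\star) \leq f_{\alpha^\star}(y)$, and since $y$ was arbitrary, $x^\star$ minimizes $f_{\alpha^\star}$.

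There is no real obstacle here: the argument is a one-line sandwich between the $\liminf$ inequality on the minimizers and the $\limsup$ inequality supplied by a recovery sequence at any chosen competitor. I note that the uniqueness of the minimizers of $f_\alpha$ stated in the hypothesis is not actually used in this particular deduction; it becomes relevant only if one additionally wants to upgrade the statement to convergence of the whole net of minimizers (ruling out different subsequences converging to different minimizers), which is not claimed here.
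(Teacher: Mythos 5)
Your argument is correct and complete. Note that the paper does not prove this proposition at all: it is imported verbatim as Corollary 7.20 of Dal Maso's monograph, so there is no internal proof to compare against, and what you have written is exactly the standard ``fundamental theorem of $\Gamma$-convergence'' argument that underlies the cited result. The two halves are used correctly: Definition~\ref{def:gammas}\ref{def:b} at an arbitrary competitor $y$ gives a recovery sequence $y_n \to y$ with $f_{\alpha_n}(y_n) \to f_{\alpha^\star}(y)$, minimality of $x_n$ gives $\liminf_n f_{\alpha_n}(x_n) \leq \lim_n f_{\alpha_n}(y_n) = f_{\alpha^\star}(y)$, and Definition~\ref{def:gammas}\ref{def:a} along $x_n \to x^\star$, $\alpha_n \to \alpha^\star$ gives $f_{\alpha^\star}(x^\star) \leq \liminf_n f_{\alpha_n}(x_n)$; chaining and letting $y$ range over $\msx$ concludes, with all inequalities valid in $\rset \cup \{+\infty\}$ since the functionals are extended-real-valued. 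Your side remark is also accurate: uniqueness of the minimizers is not needed for this deduction (minimality of each $x_n$ suffices, and uniqueness of the limit functional's minimizer is never invoked); it serves only to make ``the minimizer of $f_{\alpha_n}$'' well defined in the statement and to rule out distinct subsequential limits if one wanted convergence of the full sequence of minimizers, which is how the proposition is actually deployed in Appendix~\ref{app:gamma}.
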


\subsection{Proof of Proposition~\ref{prop:alphaeta0}}

We start by showing an auxiliary result.
\begin{lemma}
\label{lem:min2}
Under Assumptions~\ref{assum:general_kker} and \ref{assum:pi0_second}, for
  any $\alpha > 0$, $\eta \geq 0$,
  $\pi_{\alpha, \eta}^\star \in \Pens_2^{ac}(\rset^d)$. 
\end{lemma}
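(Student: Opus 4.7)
I would argue in three stages: bound the functional value at the minimizer from above, extract finiteness of the regularization $\textup{KL}$ term, and transfer this to a finite second moment via a Donsker--Varadhan-type argument.

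By Proposition~\ref{prop:convergence_minimum}~\ref{item:b}, $\pi^\star := \pi_{\alpha,\eta}^\star$ exists uniquely in $\Pens^{ac}(\rset^d)$, so the task reduces to showing $\int \norm{x}^2 \rmd \pi^\star(x) < \infty$. To bound $\Fun_\alpha^\eta(\pi^\star)$ from above, I would compare with the value at a Gaussian test measure $\pi^{\mathrm{ref}} \in \Pens_2^{ac}(\rset^d)$, for instance $\pi^{\mathrm{ref}} = \N(0, I_d)$: under Assumption~\ref{assum:general_kker} the integrand $\log(\varphi + \lambda \pi^{\mathrm{ref}}[\kker] + \eta)$ lies in $[\log \eta, \log((1+\lambda)\Mtt + \eta)]$, and $\Hent(\pi^{\mathrm{ref}})$ is finite, so the first $\textup{KL}$ term at $\pi^{\mathrm{ref}}$ is finite; Assumption~\ref{assum:pi0_second}(a) together with finite second moment of $\pi^{\mathrm{ref}}$ then yields $\KL{\pi^{\mathrm{ref}}}{\pi_0} < \infty$. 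Minimality gives $\Fun_\alpha^\eta(\pi^\star) \leq \Fun_\alpha^\eta(\pi^{\mathrm{ref}}) < \infty$.

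Combining this bound with the lower bound $\Fun^\eta(\pi^\star) \geq \Ctt$ established in the proof of Proposition~\ref{prop:convergence_minimum}~\ref{item:a} and the trivial $\KL{\pi^\star}{\pi_0} \geq 0$, the finiteness of the sum $\Fun_\alpha^\eta(\pi^\star)$ forces each summand to be finite; in particular, $\KL{\pi^\star}{\pi_0} < \infty$.

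The final step converts $\KL{\pi^\star}{\pi_0} < \infty$ into $\int \norm{x}^2 \rmd \pi^\star(x) < \infty$ via the Donsker--Varadhan variational representation applied to bounded cut-offs of $\psi(x) = \epsilon \norm{x}^2$: for any $R > 0$,
\[
\epsilon \int (\norm{x}^2 \wedge R) \, \rmd \pi^\star(x) \leq \KL{\pi^\star}{\pi_0} + \log \int e^{\epsilon(\norm{x}^2 \wedge R)}\, \rmd \pi_0(x),
\]
and letting $R \to \infty$ by monotone convergence. The delicate point is to choose $\epsilon > 0$ small enough that $\int e^{\epsilon \norm{x}^2} \rmd \pi_0(x)$ is finite, which is where Assumption~\ref{assum:pi0_second}(a) is used to extract Gaussian-type tail control of $\pi_0$---this is the main technical obstacle, since the raw sandwich inequality only yields one-sided pointwise bounds on $\pi_0$ and one must combine it carefully with the normalization of $\pi_0$ as a probability measure. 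Once such an $\epsilon$ is secured, the displayed inequality shows $\int \norm{x}^2 \rmd \pi^\star(x) < \infty$, concluding $\pi^\star \in \Pens_2^{ac}(\rset^d)$.
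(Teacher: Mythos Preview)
Your argument is correct and matches the paper's approach almost exactly: the paper also extracts $\KL{\pi^\star}{\pi_0}<\infty$ from properness of $\Fun_\alpha^\eta$ and then converts this into a second-moment bound by tilting $\pi_0$ with $f(x)=\exp[(\tau/2)\norm{x}^2]$ to form $\pi_1\propto f\pi_0$ and invoking Csisz\'ar's identity $\KL{\pi^\star}{\pi_0}-\KL{\pi^\star}{\pi_1}=(\tau/2)\int\norm{x}^2\rmd\pi^\star-\log\pi_0[f]$, which is precisely your Donsker--Varadhan inequality with the nonnegative term $\KL{\pi^\star}{\pi_1}$ discarded. In particular the paper's argument rests on the same finiteness of $\int e^{\epsilon\norm{x}^2}\rmd\pi_0$ (there, $\pi_0[f]<\infty$) that you flag as the crux.
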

\begin{proof}
    Take $\alpha > 0$ and  $\eta \geq 0$. Since by Proposition~\ref{prop:convergence_minimum},
  $\Fun_{\alpha}^\eta$ is proper,
  $\Fun_{\alpha}^\eta(\pi_{\alpha, \eta}^\star) < +\infty$ and therefore,
  $\KL{\pi_{\alpha, \eta}^{\star}}{\pi_0} < +\infty$. Hence,
  $\pi_{\alpha, \eta}^\star$ admits a density w.r.t to $\pi_0$. Since $\pi_0$ is
  dominated by the Lebesgue measure, we get that $\pi_{\alpha, \eta}^\star$
  admits a density with respect to the Lebesgue measure.
  
  In addition, let
  $\pi_1 \in \Pens(\rset^d)$ be such that for any $x \in \rset^d$ we have for some $\tau > 0$
  \begin{equation*}
    (\rmd \pi_1 / \rmd \pi_0)(x) = f(x) / \pi_0[f], \qquad f(x) = \exp[(\tau / 2) \norm{x}^2]. 
  \end{equation*}
  Since $\KL{\pi_{\alpha, \eta}^\star}{\pi_0} < +\infty$ we have
  $\int_{\rset^d} |\log((\rmd \pi_{\alpha, \eta}^\star / \rmd \pi_0)(x))| \rmd
  \pi_{\alpha, \eta}^\star(x) < +\infty$ and
  \begin{align}
    \label{eq:kl1_max}
    &\int_{\rset^d} \log((\rmd \pi_{\alpha, \eta}^\star / \rmd \pi_1)(x)) \1_{[1,+\infty)}((\rmd \pi_{\alpha, \eta}^\star / \rmd \pi_1)(x)) \rmd
    \pi_{\alpha, \eta}^\star(x) \\
    & \qquad \quad = \int_{\rset^d}\left[ \log((\rmd \pi_{\alpha, \eta}^\star / \rmd \pi_0)(x)) - (\tau/2)\norm{x}^2 +\log(\pi_0[f]) \right]\notag\\
    &\qquad\qquad\times \1_{[1,+\infty)}((\rmd \pi_{\alpha, \eta}^\star / \rmd \pi_1)(x)) \rmd
      \pi_{\alpha, \eta}^\star(x)\notag \\
    & \qquad \quad \leq \int_{\rset^d} \left[|\log((\rmd \pi_{\alpha, \eta}^\star / \rmd \pi_0)(x))|  +\log(\pi_0[f]) \right]\notag\\
    &\qquad\qquad\times \1_{[1,+\infty)}((\rmd \pi_{\alpha, \eta}^\star / \rmd \pi_1)(x)) \rmd
  \pi_{\alpha, \eta}^\star(x) < +\infty.\notag
  \end{align}
  Since $t \mapsto t |\log(t)|$  is bounded on $[0,1]$ we have that
  \begin{equation}
    \label{eq:kl1_min}
    - \int_{\rset^d} \log((\rmd \pi_{\alpha, \eta}^\star / \rmd \pi_1)(x)) \1_{[0,1]}((\rmd \pi_{\alpha, \eta}^\star / \rmd \pi_1)(x)) \rmd
    \pi_{\alpha, \eta}^\star(x) < +\infty. 
  \end{equation}
  Combining~\eqref{eq:kl1_min} and~\eqref{eq:kl1_max} we get that
  $\KL{\pi_{\alpha, \eta}^\star}{\pi_1} < +\infty$. Therefore, using \cite[Equation
  2.6]{csiszar1975divergence}, we get that
  \begin{align*}
   &(\tau/2) \int_{\rset^d} \norm{x}^2 \rmd \pi_{\alpha, \eta}^\star(x) - \log(\pi_0[f]) \\ 
    & \qquad  = \int_{\rset^d} \log((\rmd \pi_1 / \rmd \pi_0)(x)) \rmd \pi_{\alpha, \eta}^\star(x) = \KL{\pi_{\alpha, \eta}^\star}{\pi_0} - \KL{\pi_{\alpha, \eta}^\star}{\pi_1} < +\infty. 
  \end{align*}
  Therefore, $\pi_{\alpha, \eta}^\star \in \Pens_2(\rset^d)$.

\end{proof}

We can now show that the results in Proposition~\ref{prop:alphaeta0} hold.

\begin{enumerate}
    \item

Consider for fixed $\alpha$ the family $\{\Fun_{\alpha}^\eta:\mathcal{Q}_2^{ac}(\rset^d)\to \rset^+,\ s.t.\ \eta \geq 0\}$.
 Let  $(\eta_n)_{n \in \nset^\star} \in
  [0, +\infty)^{\nset^\star}$ be such that $\underset{n \to +\infty}{\lim} \eta_n = 0$. 
For any $(\pi_n)_{n \in \nset^\star} \in \Pens_2^{ac}(\rset^d)^{\nset^\star}$ such that
  $\lim_{n \to +\infty} \pi_n =\pi$ we now show that
  \begin{equation*}
    \liminf_{n \to +\infty} \Fun_{\alpha}^{\eta_n}(\pi_n) \geq \Fun_\alpha(\pi).
  \end{equation*}
  Using the properties of $\liminf$ and the lower semi-continuity of the KL divergence \cite[Lemma 1.4.3-(b)]{dupuis1997weak} we have 
  \begin{align*}
     \liminf_{n \to +\infty} \Fun_{\alpha}^{\eta_n}(\pi_n)&\geq \liminf_{n \to +\infty} \Fun^{\eta_n}(\pi_n) +\alpha\liminf_{n \to +\infty}\KL{\pi_n}{\pi_0}\\
     &\geq \liminf_{n \to +\infty} \Fun^{\eta_n}(\pi_n) +\alpha\KL{\pi}{\pi_0}.
  \end{align*}
  Let us denote $\bar{\eta}:= \sup_{n \in \nset}\eta_n$, then
  \begin{align*}
      \Fun^{\eta_n}(\pi_n) &= \Fun^{\bar{\eta}}(\pi_n)+\int_{\rset^d}\pi_n(x)\log\frac{\varphi(x)+\lambda\pi_n[\kker(x, \cdot)] + \bar{\eta}}{\varphi(x)+\lambda\pi_n[\kker(x, \cdot)] + \eta_n}\rmd x.
  \end{align*}
  Using again the properties of $\liminf$ and the lower semi-continuity of $\Fun^{\eta}$ established in Proposition~\ref{prop:convergence_minimum}
  we have
  \begin{align*}
     \liminf_{n \to +\infty} \Fun^{\eta_n}(\pi_n)
     &\geq \Fun^{\bar{\eta}}(\pi)+\liminf_{n \to +\infty} \int_{\rset^d}\pi_n(x)\log\frac{\varphi(x)+\lambda\pi_n[\kker(x, \cdot)] + \bar{\eta}}{\varphi(x)+\lambda\pi_n[\kker(x, \cdot)] + \eta_n}\rmd x.
  \end{align*}
  Moreover, the function
  \begin{align*}
      f_n(x):= \pi_n(x)\log\frac{\varphi(x)+\lambda\pi_n[\kker(x, \cdot)] + \bar{\eta}}{\varphi(x)+\lambda\pi_n[\kker(x, \cdot)] + \eta_n}\geq 0
  \end{align*}
  by construction and therefore Fatou's Lemma allows us to conclude 
  \begin{align*}
     \liminf_{n \to +\infty} \Fun^{\eta_n}(\pi_n)
     &\geq \Fun^{\bar{\eta}}(\pi)\\
     &+\int_{\rset^d}\pi(x)\log\frac{\varphi(x)+\lambda\pi_n[\kker(x, \cdot)] + \bar{\eta}}{\varphi(x)+\lambda\pi_n[\kker(x, \cdot)]}\rmd x \\
     &=\Fun(\pi).
  \end{align*}

  This guarantees condition~\ref{def:a} in Definition~\ref{def:gammas} holds for any
  $(\pi_n)_{n \in \nset^\star} \in \mathcal{Q}_2^{ac}(\rset^d)^{\nset^\star}\subset \Pens_2^{ac}(\rset^d)^{\nset^\star}$.

    To check condition~\ref{def:b} in Definition~\ref{def:gammas} consider w.l.o.g. 
  $\pi \in \mathcal{Q}_2^{ac}(\rset^d)$ which assigns positive mass to $[-1, 1]^d$. For any $n \in \nset^\star$, we consider
  $\pi_n $ with density w.r.t. the Lebesgue measure given for any $x \in \rset^d$ by
  \begin{align}
  \label{eq:pin}
   \pi_n(x) = \frac{\pi(x)\ind_{[-n, n]^d}(x)}{Z_n},\qquad \textrm{where } Z_n:= \int_{[-n, n]^d} \pi(x)\rmd x .  
  \end{align}
We observe that by construction $0<Z_1\leq Z_2 \leq \dots \leq 1$ and $|\pi_n(x)|\leq \pi(x)/Z_1$ and so $\pi_n \in \mathcal{Q}_2^{ac}(\rset^d)$ for all $n\geq 1$.
Therefore, we have
    $\lim_{n \to +\infty} \wassersteinD[2](\pi_n, \pi)^2  =0$ as $\pi_n \to \pi$ weakly and $\int_{\rset^d}\norm{x}^2\pi_n(x)\rmd x \to \int_{\rset^d}\norm{x}^2\pi(x)\rmd x $. 

  We now show that $\lim_{n \to +\infty} \Fun^{\eta_n}(\pi_n) = \Fun(\pi)$.
  Consider the function $x\mapsto \pi_n(x)\log(\varphi(x)+\lambda\pi_n[\kker(x, \cdot)] + \eta_n)$.
For any $n \in \nset^\star$ and $x \in \rset^d$, $\log(\varphi(x)+\lambda\pi_n[\kker(x, \cdot)] + \eta_n) \leq \log((1+\lambda)\Mtt + \sup_{n \in \nset}
   \eta_n)$ and thus
   \begin{align}
   \label{eq:integrable1}
       \pi_n(x)\log(\varphi(x)+\lambda\pi_n[\kker(x, \cdot)] + \eta_n) \leq \log((1+\lambda)\Mtt + \sup_{n \in \nset} \eta_n)\frac{\pi(x)}{Z_1}.
   \end{align}
   Define $\Phi: \ \rset^d \to \rset$, $\Phi: x \mapsto \norm{x}^2$.
   Using Jensen's inequality and the fact that under Assumption~\ref{assum:pi0_second} there
   exists $C_2 \geq 0$ such that for any $(x, y) \in \rset^d\times\rset^d$,
   $\kker(x,y)\geq C_2^{-1}\exp[-C_2(1+\Phi(x) + \Phi(y))]$, we have for any
   $y \in \rset^d$ and $n \in \nset^\star$ that
   \begin{align*}
     \log(\varphi(x)+\lambda\pi_n[\kker(x, \cdot)] + \eta_n) &\geq \log(\varphi(x)+\lambda\pi_n[\kker(x, \cdot)]) \\
     &\geq \lambda\left(-C_2\Phi(x) - C_2\sup_{n \in \nset^\star} \pi_n[\Phi] - C_2 -\log C_2\right),
   \end{align*}
   where we used the fact that $\varphi(x)\geq 0$ under Assumption~\ref{assum:general_kker}.
   Since, $(\pi_n)_{n \in \nset^\star}$ is relatively compact in $\Pens_2(\rset^d)$ \cite[Proposition 7.1.5]{ambrosio2008gradient},  there exists $C \geq 0$ such that for any
   $n \in \nset^\star$, $\pi_n[\Phi] \leq C$, see \cite[Definition
   6.8]{villani2009optimal}.  Hence,
    \begin{align}
   \label{eq:integrable2}
       \pi_n(x)\log(\varphi(x)+\lambda\pi_n[\kker(x, \cdot)] + \eta_n) \geq \lambda\left(-C_2\Phi(x) -C_2(C+1) -\log C_2\right)\frac{\pi(x)}{Z_1}.
   \end{align}
   Combining~\eqref{eq:integrable1}--\eqref{eq:integrable2} and the fact that $\pi\in\Pens_2^{ac}(\rset^d)$, the function $x\mapsto \pi_n(x)\log(\varphi(x)+\lambda\pi_n[\kker(x, \cdot)] + \eta_n)$ is bounded by an integrable function.
   
   Consider now $x\mapsto \pi_n(x)\log \pi_n(x)$ and observe that
    \begin{align*}
       \pi_n(x) \log \pi_n(x) = \pi_n(x) \log \pi(x) + \log(1/Z_n) \pi_n(x)  
     \end{align*}

      As $\pi\in\mathcal{Q}_2^{ac}(\rset^d)\subset \Pens_2^{ac}(\rset^d)$ it has finite second moment. This and the fact that $\pi$ has  essentially bounded density imply that its entropy $\Hent(\pi)$ is finite by \cite[Theorem 1]{ghourchian2017existence}. From the definition of the Lebesgue integral we have that $\int_{\rset^d} \pi(x) |\log \pi(x) | \rmd x < \infty$.
    
     We can further bound
     \begin{align}
       |\pi_n(x) \log \pi_n(x)| \leq& \pi_n(x) |\log \pi(x)| + \log(1/Z_n) \pi_n(x)   \nonumber\\
                          \leq& \pi_n(x) (|\log \pi(x)| + \log(1/Z_1)) \nonumber\\
                          \leq& \pi(x) (|\log \pi(x)| + \log(1/Z_1)|)/Z_1              \label{eq:hent_bound}              
     \end{align}
     which is an integrable function whose integral is bounded.

     Hence, noting that $\pi_n(x) \log \pi_n(x)$ converges pointwise to $\pi(x) \log \pi(x)$, we have that  $ \int_{\rset^d} \pi_n(x) \log \pi_n(x) \rmd x \to \int_{\rset^d} \pi(x) \log \pi(x) \rmd x$ by the Dominated Convergence Theorem, and that $\lim_{n \to +\infty} \Fun^{\eta_n}(\pi_n) = \Fun(\pi)$.

    In what follows, we show that $\lim_{n \to +\infty} \KL{\pi_n}{\pi_0} = \KL{\pi}{\pi_0}$.
     Using Assumption~\ref{assum:pi0_second} there exists $C_1 \geq 0$ and $\tau > 0$ such that
   for any $x \in \rset^d$, $| U(x)|\leq C_1 + \tau \norm{x}^2$.
    Since $\pi_n$ is upper bounded by an integrable function, so is $x\mapsto U(x)\pi_n(x)$ and the dominated convergence theorem guarantees $\int_{\rset^d}U(x)\pi_n(x)\rmd x\to \int_{\rset^d}U(x)\pi(x)\rmd x$ as $n\to \infty$.
    Then using the boundedness of $x\mapsto \pi_n(x)\log\pi_n(x)$ established above and the dominated convergence theorem we immediately have $\lim_{n \to +\infty} \KL{\pi_n}{\pi_0} = \KL{\pi}{\pi_0}$.

    Therefore, we have that
   $\Fun_\alpha(\pi) = \lim_{n \to +\infty} \Fun_{\alpha}^{\eta_n}(\pi_n)$, and $\{\Fun_{\alpha}^\eta:\mathcal{Q}_2^{ac}(\rset^d)\to \rset^+,\ s.t.\, \eta \geq 0\}$ is a
   $\Gamma$-system. 
   In addition, for any $\eta>0$, $\Fun_\alpha^\eta$ admits a unique minimizer $\pi_{\alpha, \eta}^\star$ by Proposition~\ref{prop:convergence_minimum},  which under our assumptions, satisfies $\pi_{\alpha, \eta}^{\star}\in\mathcal{Q}_2^{ac}(\rset^d)$.
   Using Proposition~\ref{prop:gamma_cv_without}, we get that $\pi_\alpha^\star$ is a
   minimizer of $\Fun_\alpha$.

\item 
Consider $\{\Fun_{\alpha}^\eta:\mathcal{Q}_2^{ac}(\rset^d)\to \rset^+,\ s.t.\ \alpha, \eta \geq 0\}$.
  Let
  $(\alpha_n)_{n \in \nset} \in (0, +\infty)^\nset, (\eta_n)_{n \in \nset} \in
  [0, +\infty)^\nset$ be such that $\underset{n \to +\infty}{\lim} \alpha_n = 0$,
  $\underset{n \to +\infty}{\lim} \eta_n = 0$. For any
  $(\pi_n)_{n \in \nset} \in \Pens^{ac}(\rset^d)^{\nset}$ such that
  $\lim_{n \to +\infty} \pi_n =\pi$, using the monotonicity in $\alpha$
  of $\Fun_\alpha^\eta$ and the result in point 1.  we have 
  \begin{equation*}
    \liminf_{n \to +\infty} \Fun_{\alpha_n}^{\eta_n}(\pi_n) \geq \liminf_{n \to +\infty} \Fun^{\eta_n}(\pi_n) \geq \Fun(\pi).
  \end{equation*}
  This guarantees condition~\ref{def:a} in Definition~\ref{def:gammas} holds since $\mathcal{Q}_2^{ac}(\rset^d)\subset \Pens^{ac}(\rset^d)$.

  To check condition~\ref{def:b} in Definition~\ref{def:gammas} consider for
  $\pi \in \mathcal{Q}_2^{ac}(\rset^d)$ and for any $n \in \nset$,
  $\pi_n \in \mathcal{Q}_2^{ac}(\rset^d)$ defined as in~\eqref{eq:pin}. The proof of point 1 shows that $\Fun^{\eta_n}(\pi_n)\to \Fun(\pi)$ as $n\to \infty$.

  We now show that $\lim_{n \to +\infty} \alpha_n\KL{\pi_n}{\pi_0} = 0$.
     Using Assumption~\ref{assum:pi0_second} there exists $C_1 \geq 0$ and $\tau > 0$ such that
   for any $x \in \rset^d$, $| U(x)|\leq C_1 + \tau \norm{x}^2$.
    Since $|\pi_n(x)|\leq \pi(x)/Z_1$ we have
    \begin{align}
    \label{eq:U_alpha}
        \int_{\rset^d}|U(x)|\pi_n(x)\rmd x &\leq \frac{C_1}{Z_1} + \tau \int_{\rset^d}\norm{x}^2\frac{\pi(x)}{Z_1}\rmd x <+\infty 
    \end{align}
    since $\pi\in \Pens_2^{ac}(\rset^d)$.
      Using~\eqref{eq:hent_bound} 
    we also have that
    \begin{align}
       |\pi_n(x)\log \pi_n(x)| &\leq \pi(x)\frac{|\log\pi(x)|+\log 1/Z_1}{Z_1}:=g(x) \label{eq:ent_alpha}
   \end{align}
   with $g(x)$ integrable since $\pi\in \mathcal{Q}_2^{ac}(\rset^d)$  implies $\Hent(\pi)<\infty$ by \cite[Theorem 1]{ghourchian2017existence}.

Combining~\eqref{eq:U_alpha} and~\eqref{eq:ent_alpha} we have
   $\lim_{n \to +\infty} \alpha_n \KL{\pi_n}{\pi_0} = 0$.
   
    Therefore, we have that
   $\Fun(\pi) = \lim_{n \to +\infty} \Fun_{\alpha_n}^{\eta_n}(\pi_n)$, and $\{\Fun_{\alpha}^\eta:\mathcal{Q}_2^{ac}(\rset^d)\to \rset^+,\ s.t.\ \alpha, \eta \geq 0\}$ is a
   $\Gamma$-system. 
   In addition, for any $\alpha, \eta>0$, $\Fun_\alpha^\eta$ admits a unique minimizer $\pi_{\alpha, \eta}^\star$ by Proposition~\ref{prop:convergence_minimum}, which under our assumptions, satisfies $\pi_{\alpha, \eta}^{\star}\in\mathcal{Q}_2^{ac}(\rset^d)$.
   Using Proposition~\ref{prop:gamma_cv_without}, we get that $\pi^\star$ is a
   minimizer of $\Fun$.
\end{enumerate}

\section{Subdifferential of $\Fun_\alpha^\eta$}
\label{app:subdifferential}

We start by recalling the definition of
subdifferentiability in Wasserstein spaces, see \cite[Definition
10.1.1]{ambrosio2008gradient}. We denote by $\Pens_{2}^{ac}(\rset^d)$ the space of
probability measures in $\Pens_2(\rset^d)$ which are absolutely continuous
w.r.t the Lebesgue measure and equip the space $\rmL^2(\rset^d,\pi):=\{ f:\rset^d\to\rset^d ; \pi(\Vert f\Vert^2) < \infty\}$ with the norm $\norm{f}^2_{\rmL^2(\rset^d, \pi)}:=\pi(\Vert f\Vert^2)$.
\begin{definition}[Fr\'echet subdifferential]
  Let $\Phi: \ \Pens_2(\rset^d) \to \rset$ and
  $\pi \in \Pens_{2}^{ac}(\rset^d)$, then
  $\xi \in \rmL^2(\rset^d, \pi)$ belongs to the strong Fr\'{e}chet
    subdifferential $\partial_{\rms} \Phi(\pi)$ of $\Phi$ at $\pi$ if for any  sequence $(t_n)_{n\geq 1} \in \rmL^2(\rset^d, \pi)$ such that $\norm{t_n - \Id}_{\rmL^2(\rset^d, \pi)}\to 0$  as $n \to \infty$ we have
    \begin{equation*}
        \liminf_{n \to \infty}\left. \left\lbrace\Phi(t_{n\#} \pi) - \Phi(\pi) - \int\limits_{\rset^d} \langle \xi(x), t_n(x) - x \rangle \rmd \pi(x)\right\rbrace\middle/ \norm{t_n - \Id}_{\rmL^2(\rset^d, \pi)} \right. \geq 0.
  \end{equation*}
\end{definition}
We are now ready to derive the subdifferential of $\Fun_\alpha^\eta$.
First, let us denote by $\Gun^\eta(\pi):=-\int\limits_{\rset^d} \log(\varphi(x)+\lambda\pi[\kker(x, \cdot)] + \eta) \rmd \pi(x)$, so that $\Fun^\eta(\pi)=-\Hent(\pi)+\Gun^\eta(\pi)$.
\begin{prop}[Subdifferential of $\Gun^\eta$]
\label{prop:subdifferential_middle}
Under Assumption~\ref{assum:general_kker}, $\Gun^\eta$ has strong subdifferential 
\begin{align*}
   \partial_{\rms}\Gun^\eta(\pi) = \left\lbrace x\to -\int\left[\frac{\lambda\nabla_2 \kker(z, x)}{\lambda\pi\left[ \kker(z, \cdot)\right]+\varphi(z)+\eta}+\frac{\lambda\nabla_1 \kker(x, z)+\nabla \varphi(x)}{\lambda\pi\left[ \kker(x, \cdot)\right]+\varphi(x)+\eta}\right]\rmd \pi\left(z\right)\right\rbrace. 
\end{align*}
\end{prop}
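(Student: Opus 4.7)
The plan is to verify the candidate gradient directly from the defining inequality of the Fréchet subdifferential. For any sequence $(t_n)_{n \geq 1}$ in $\rmL^2(\rset^d, \pi)$ with $\varepsilon_n := \norm{t_n - \Id}_{\rmL^2(\rset^d, \pi)} \to 0$, I will Taylor expand $\Gun^\eta(t_{n\#}\pi)$ around $\pi$ and read off the first-order term. Denote $A(x) := \varphi(x) + \lambda \pi[\kker(x, \cdot)] + \eta$ and $A_n(x) := \varphi(t_n(x)) + \lambda \int \kker(t_n(x), t_n(z)) \rmd \pi(z) + \eta$, so that by the change-of-variables formula $\Gun^\eta(t_{n\#}\pi) = -\int \log A_n(x) \rmd \pi(x)$. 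Under Assumption~\ref{assum:general_kker}, $\varphi$ and $\kker$ are $\rmC^{\infty}$ with uniformly bounded first and second derivatives, so a pointwise Taylor expansion yields
\begin{equation*}
A_n(x) - A(x) = \nabla \varphi(x) \cdot (t_n(x) - x) + \lambda \int \bigl[ \nabla_1 \kker(x, z) \cdot (t_n(x) - x) + \nabla_2 \kker(x, z) \cdot (t_n(z) - z) \bigr] \rmd \pi(z) + R_n(x),
\end{equation*}
with $|R_n(x)| \leq C \bigl( \norm{t_n(x) - x}^2 + \int \norm{t_n(z) - z}^2 \rmd \pi(z) \bigr)$ for a constant $C$ depending only on $\Mtt$ and $\lambda$.

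Next I linearize the logarithm. Since $A(x) \geq \eta > 0$ and $A_n(x) \geq \eta > 0$, the function $\log$ is smooth on $[\eta, +\infty)$ and
\begin{equation*}
\log A_n(x) - \log A(x) = \frac{A_n(x) - A(x)}{A(x)} + \tilde R_n(x), \qquad |\tilde R_n(x)| \leq \eta^{-2} (A_n(x) - A(x))^2.
\end{equation*}
Inserting the expansion of $A_n - A$ and integrating against $\rmd \pi(x)$ I obtain a linear term plus a quadratic remainder. For the linear term the contribution involving $\nabla_2 \kker(x, z) \cdot (t_n(z) - z)$ is a double integral; applying Fubini (which is justified since $\nabla_2 \kker / A$ is bounded under Assumption~\ref{assum:general_kker} and $\eta > 0$) and relabelling $x \leftrightarrow z$ converts it into $\int (t_n(x) - x) \cdot \int \frac{\lambda \nabla_2 \kker(z, x)}{A(z)} \rmd \pi(z) \rmd \pi(x)$. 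The terms in $\nabla \varphi(x)$ and $\nabla_1 \kker(x, z)$ already act only on $t_n(x) - x$ and produce $\int (t_n(x) - x) \cdot \int \frac{\lambda \nabla_1 \kker(x, z) + \nabla \varphi(x)}{A(x)} \rmd \pi(z) \rmd \pi(x)$. Combining these two, the linear part of $\Gun^\eta(t_{n\#}\pi) - \Gun^\eta(\pi)$ is exactly $\int \langle \xi(x), t_n(x) - x \rangle \rmd \pi(x)$ with $\xi$ the stated candidate.

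The main obstacle will be to bound the remainders uniformly in $n$ so that their total contribution is $o(\varepsilon_n)$ as $n \to \infty$. Using Cauchy--Schwarz in $\rmL^2(\pi)$ together with the boundedness of $1/A$, $\nabla_1 \kker$, $\nabla_2 \kker$, $\nabla^2 \kker$, $\nabla \varphi$ and $\nabla^2 \varphi$ by $\Mtt/\eta$ and $\Mtt$, the integral of $R_n / A$ is bounded by a constant times $\varepsilon_n^2$, and similarly $\int |\tilde R_n| \rmd \pi \leq \eta^{-2} \int |A_n - A|^2 \rmd \pi$; expanding the square and applying Fubini shows this is also $O(\varepsilon_n^2)$. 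Consequently the defining $\liminf$ inequality for $\xi \in \partial_{\rms} \Gun^\eta(\pi)$ becomes an equality at $0$, which in particular establishes that $\xi$ belongs to the strong Fréchet subdifferential. Finally, I note that $\xi \in \rmL^2(\rset^d, \pi)$ as an almost sure consequence of Assumption~\ref{assum:general_kker} together with $\eta>0$, which guarantees uniform bounds on both terms defining $\xi$.
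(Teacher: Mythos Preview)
Your approach is correct and in fact more direct than the paper's. Both arguments aim at the same defining inequality, but the decompositions differ.

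The paper writes $\Gun^\eta(\pi)=\int g(\Lun(\pi,x))\rmd\pi(x)$ with $g(t)=-\log(t+\eta)$ and $\Lun(\pi,x)=\varphi(x)+\lambda\pi[\kker(x,\cdot)]$, then splits $\Gun^\eta(t_{n\#}\pi)-\Gun^\eta(\pi)$ into a piece where the \emph{outer} integration measure changes and a piece where $\Lun$ changes. It uses the convexity of $g$ to obtain a one-sided bound on the first piece, bounds on $\nabla^2 g(\Lun(\pi,\cdot))$ for the second, and finishes with an equicontinuity argument and Fatou's lemma for weakly converging measures to pass to the $\liminf$.

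You instead pull everything back through $t_n$ at once, writing $\Gun^\eta(t_{n\#}\pi)=-\int\log A_n\,\rmd\pi$ with $A_n(x)=\varphi(t_n(x))+\lambda\int\kker(t_n(x),t_n(z))\rmd\pi(z)+\eta$, and perform a genuine two-sided second-order Taylor expansion: first of $(\varphi,\kker)$ in the spatial variables (using $\norm{\nabla^2\varphi},\norm{\nabla^2\kker}\leq\Mtt$), then of $\log$ on $[\eta,\infty)$ (using $|(\log)''|\leq\eta^{-2}$). The Fubini-and-relabel step that turns the $\nabla_2\kker$ contribution into an integral against $t_n(x)-x$ is exactly the mechanism that produces the two summands in the claimed $\xi$, and your remainder estimates $\int|R_n|/A\,\rmd\pi=O(\varepsilon_n^2)$ and $\int(A_n-A)^2\rmd\pi=O(\varepsilon_n^2)$ are clean consequences of the uniform bounds in Assumption~\ref{assum:general_kker} together with Cauchy--Schwarz in $\rmL^2(\pi)$. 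This actually yields the stronger conclusion that the defining ratio tends to $0$ (not merely has nonnegative $\liminf$), i.e.\ Fr\'echet differentiability of $\Gun^\eta$ at $\pi$; the subdifferential inclusion follows a fortiori. The paper's route, by contrast, only ever argues the one-sided inequality, at the cost of the convexity/Fatou machinery.
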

\begin{proof}
Let $\Lun: \ \Pens_2(\rset^d) \times \rset^d \to \left[0, +\infty\right)$ be given for
any $\pi \in \Pens_2(\rset^d)$ and $x \in \rset^d$ by
$\Lun(\pi, x) = \lambda\int\limits_{\rset^d} \kker(x, z) \rmd \pi(z) +\varphi(x)$,
and denote by $\xi_1(x, z) := \lambda\nabla_1\kker(x, z)  +\nabla\varphi(x)$ so that $\nabla \Lun(\pi, x) = \int\limits_{\rset^d} \xi_1(x, z) \rmd \pi(z) $ for all fixed $\pi \in\Pens(\rset^d)$, where Leibniz integral rule (e.g. \cite[Theorem 16.8]{billingsley1995measure}) guarantees that we can swap the derivative w.r.t. $x$ with the integral w.r.t. $z$ since $\norm{\nabla_1\kker}\leq (1+\lambda)\Mtt$ and $\kker$ is bounded and thus integrable.

Further, define  $\xi_2(x, z) :=\lambda \nabla_2 \kker(x, z)$, let $g : \ \left[0, +\infty\right) \to \rset$ be given for any $t \geq 0$ by $g(t) = -\log(t + \eta)$ and consider
\begin{align}
\label{eq:subdifferential_decomposition}
& \Gun^\eta(t_{n\#} \pi) -\Gun^\eta(\pi)  \\
&- \int\limits_{\rset^d} \left\langle \int\limits_{\rset^d} \left[g'(\Lun(\pi, z)) \xi_2(z, x) + g'(\Lun(\pi, x)) \xi_1(x, z)\right]\rmd \pi(z) , t_n(x) - x \right\rangle \rmd \pi(x)\notag\\
=& \int\limits_{\rset^d}g(\Lun(t_{n\#} \pi, x))\rmd t_{n\#}\pi( x) - \int\limits_{\rset^d}g(\Lun(\pi, x))\rmd t_{n\#}\pi( x)\notag\\
&- \int\limits_{\rset^d} \left\langle \int\limits_{\rset^d} g'(\Lun(\pi, z)) \xi_2(z, x) \rmd \pi(z) , t_n(x) - x \right\rangle \rmd \pi(x)\notag\\
    &+\int\limits_{\rset^d}g(\Lun(\pi, x))\rmd t_{n\#}\pi( x) - \int\limits_{\rset^d}g(\Lun(\pi, x))\rmd\pi( x)\notag\\
    &- \int\limits_{\rset^d} \left\langle \int\limits_{\rset^d}  g'(\Lun(\pi, x)) \xi_1(x, z)\rmd \pi(z) , t_n(x) - x \right\rangle \rmd \pi(x).\notag
\end{align}
We consider two groups of terms separately.

We further decompose the first group of three terms in~\eqref{eq:subdifferential_decomposition} into

\resizebox{0.95\textwidth}{!}{\noindent
\hspace*{-0.5cm}\parbox{\textwidth}{\noindent
\begin{align}
\label{eq:subdifferential_decomposition1}
    & \int\limits_{\rset^d}\!\!\left[g(\Lun(t_{n\#} \pi, x)) - g(\Lun(\pi, x)) - g'(\Lun(\pi, x)) \!\left(\int\limits_{\rset^d}\! \langle \xi_2(x, z), t_n(z) - z \rangle \rmd \pi(z) \right)\!\right]\!\rmd t_{n\#}\pi( x)\\
    & + \int\limits_{\rset^d} \int\limits_{\rset^d} \langle g'(\Lun(\pi, x))\xi_2(x, z), t_n(z) - z \rangle \rmd \pi(z) \rmd t_{n\#}\pi( x)\notag\\
    & - \int\limits_{\rset^d}\int\limits_{\rset^d}  \left\langle g'(\Lun(\pi, z)) \xi_2(z, x)  , t_n(x) - x \right\rangle \rmd \pi(z)\rmd \pi(x)\notag\\
    & \geq \int\limits_{\rset^d}g'(\Lun(\pi, x)) \left(\Lun(t_{n\#} \pi, x) - \Lun(\pi, x) - \int\limits_{\rset^d} \langle \xi_2(x, z), t_n(z) - y \rangle \rmd \pi(z) \right)\rmd t_{n\#}\pi( x)\notag\\
    & \quad + \int\limits_{\rset^d} \int\limits_{\rset^d} \langle g'(\Lun(\pi, x))\xi_2(x, z), t_n(z) - z \rangle \rmd \pi(z) \rmd t_{n\#}\pi( x)\notag\\
    & \quad - \int\limits_{\rset^d}\int\limits_{\rset^d}  \left\langle g'(\Lun(\pi, z)) \xi_2(z, x)  , t_n(x) - x \right\rangle \rmd \pi(z)\rmd \pi(x)\notag
\end{align}
}}

where the inequality follows since the convexity of $g$ implies
\begin{align*}
&g(\Lun(t_{n\#} \pi, x)) - g(\Lun(\pi, x)) - g'(\Lun(\pi, x)) \int\limits_{\rset^d} \langle \xi_2(x,z), t_n(z) - z \rangle \rmd \pi(z) \\
&\qquad\geq g'(\Lun(\pi, x)) \left(\Lun(t_{n\#} \pi, x) - \Lun(\pi, x) - \int\limits_{\rset^d} \langle \xi_2(x, z), t_n(z) - z \rangle \rmd \pi(z) \right).
\end{align*}
For the second group of terms in~\eqref{eq:subdifferential_decomposition}, using Assumption \ref{assum:general_kker} and that $\Lun(\pi, x)$ is positive for all $\pi\in\Pens(\rset^d), x\in\rset^d$, we have, using the spectral norm induced by the Euclidean norm,
\begin{align}
\label{eq:gsecond_bounded}
&\norm{\nabla^2 g(\Lun(\pi, x))} \\
\leq&\norm{ \frac{(\Lun(\pi, x)+\eta)\left(\lambda\int\limits_{\rset^d}\nabla_1^2\kker(x, z)\rmd \pi(z)+\nabla^2\varphi(x)\right)}{(\Lun(\pi, x)+\eta)^2}}\notag\\
    &+\norm{\frac{(\lambda\int\limits_{\rset^d}\nabla_1\kker(x, z)\rmd \pi(z)+\nabla\varphi(x))(\lambda\int\limits_{\rset^d}\nabla_1\kker(x, z)\rmd \pi(z)+\nabla\varphi(x))^\top}{(\Lun(\pi, x)+\eta)^2}}\notag\\
    \leq& \frac{((\lambda+1)\Mtt+\eta)(\lambda+1)\Mtt + (\lambda+1)^2\Mtt^2}{\eta^2}=:\Ctt_0,\notag
\end{align}
where $\nabla\int\limits_{\rset^d}\nabla_1 \kker(x, z)\rmd \pi(z)=\int\limits_{\rset^d}\nabla^2_1\kker(x, z)\rmd \pi(z)$ using Leibniz integral rule for differentiation under the
integral sign (e.g. \cite[Theorem 16.8]{billingsley1995measure}) since $\norm{\nabla_1^2\kker}\leq \Mtt$ and $\xi_1$ is bounded and thus integrable.
Using that $\norm{\nabla^2 g(\Lun(\pi, x))} \leq \Ctt_0$ we get that, $ \forall x_1, x_2 \in \rset^d$
\begin{equation*}
g(\Lun(\pi, x_2)) \geq g(\Lun(\pi, x_1)) + \langle g'(\Lun(\pi, x_1))\!\!\int\limits_{\rset^d}\!\!   \xi_1(x_1, z)\rmd \pi(z), x_2 -x_1 \rangle - \Ctt_0 \norm{x_1 - x_2}^2,
\end{equation*}
from which follows

\resizebox{0.95\textwidth}{!}{\noindent
\hspace*{-0.5cm}\parbox{\textwidth}{\noindent
\begin{align}
\label{eq:subdifferential_decomposition2}
&\int\limits_{\rset^d}g(\Lun(\pi, x))\rmd (t_{n\#}\pi-\pi)( x) -  \int\limits_{\rset^d} \left\langle g'(\Lun(\pi, x)) \int\limits_{\rset^d}  \xi_1(x, z)\rmd \pi(z) , t_n(x) - x \right\rangle \rmd \pi(x)\\
&\geq -\Ctt_0 \norm{t_n - \Id}^2_{\rmL^2(\rset^d, \pi)}.\notag
\end{align}
}}

Combining~\eqref{eq:subdifferential_decomposition}--\eqref{eq:subdifferential_decomposition2} we obtain

\resizebox{0.95\textwidth}{!}{\noindent
\hspace*{-0.5cm}\parbox{\textwidth}{\noindent
\begin{align}
\label{eq:subdifferential_last_term}
& \Psi_n(\pi):=\Gun^\eta(t_{n\#} \pi) -\Gun^\eta(\pi)  \\
& - \int\limits_{\rset^d} \left\langle \int\limits_{\rset^d} \left[g'(\Lun(\pi, z)) \xi_2(z, x) + g'(\Lun(\pi, x)) \xi_1(x, z)\right]\rmd \pi(z) , t_n(x) - x \right\rangle \rmd \pi(x) \notag\\
&\; \geq \int\limits_{\rset^d}g'(\Lun(\pi, x)) \left(\Lun(t_{n\#} \pi, x) - \Lun(\pi, x) - \int\limits_{\rset^d} \langle \xi_2(x, z), t_n(z) - z \rangle \rmd \pi(z) \right)\rmd t_{n\#}\pi( x)\notag\\
&\;\phantom{\geq} + \int\limits_{\rset^d} \int\limits_{\rset^d} \langle g'(\Lun(\pi, x))\xi_2(x, z), t_n(z) - z \rangle \rmd \pi(z) \rmd t_{n\#}\pi( x)\notag\\
&\;\phantom{\geq} - \int\limits_{\rset^d}\int\limits_{\rset^d}  \left\langle g'(\Lun(\pi, z)) \xi_2(z, x)  , t_n(x) - x \right\rangle \rmd \pi(z)\rmd \pi(x)\notag\\
&\;\phantom{\geq} -\Ctt_0 \norm{t_n - \Id}^2_{\rmL^2(\rset^d, \pi)}.\notag
\end{align}
}}

We now show that 
\begin{align*}
    \liminf_{n\to\infty}\frac{\Psi_n(\pi)}{\norm{t_n - \Id}_{\rmL^2(\rset^d, \pi)}} \geq 0.
\end{align*}

For the last term in~\eqref{eq:subdifferential_last_term}, we use the definition of $t_n$: as $n\to \infty$ we have
\begin{align*}
     -\Ctt_0\frac{\norm{t_n - \Id}^2_{\rmL^2(\rset^d, \pi)}}{\norm{t_n - \Id}_{\rmL^2(\rset^d, \pi)}}=-\Ctt_0\norm{t_n - \Id}_{\rmL^2(\rset^d, \pi)} \to 0.
\end{align*}

Let us consider then
\begin{align*}
&\int\limits_{\rset^d} \int\limits_{\rset^d} \langle g'(\Lun(\pi, x))\xi_2(x, z), t_n(z) - z \rangle \rmd \pi(z) \rmd t_{n\#}\pi( x)\\
&\qquad- \int\limits_{\rset^d}\int\limits_{\rset^d}  \left\langle g'(\Lun(\pi, z)) \xi_2(z, x)  , t_n(x) - x \right\rangle \rmd \pi(z)\rmd \pi(x)\\
=& \int\limits_{\rset^d} \int\limits_{\rset^d} \langle g'(\Lun(\pi, x))\xi_2(x, z), t_n(z) - z \rangle \rmd \pi(z) \rmd t_{n\#}\pi( x)\\
&\qquad- \int\limits_{\rset^d}\int\limits_{\rset^d}  \left\langle g'(\Lun(\pi, x)) \xi_2(x, z)  , t_n(z) - z \right\rangle \rmd \pi(x)\rmd \pi(z)\\
=& \int\limits_{\rset^d} \int\limits_{\rset^d} \left\langle g'(\Lun(\pi, x))\xi_2(x, z)\rmd (t_{n\#}\pi - \pi)( x), t_n(z) - z \right\rangle \rmd \pi(z).
\end{align*}
Under Assumption~\ref{assum:general_kker}, the function $x\mapsto g'(\Lun(\pi, x)) \xi_2(x, z)$ is continuous and bounded for any fixed $z$; in addition, $n \to \infty$ implies $t_{n\#}\pi \to \pi$ weakly, since $\wassersteinD[2]$ metrizes weak convergence. It follows that
\begin{align*}
\int\limits_{\rset^d} g'(\Lun(\pi, x))\xi_2(x, z)\rmd t_{n\#}\pi( x) - \int\limits_{\rset^d} g'(\Lun(\pi, x)) \xi_2(x, z) \rmd \pi(x) \to 0.
\end{align*}
The dominated convergence theorem then guarantees that, as $n \to \infty$:
\begin{align*}
\int\limits_{\rset^d} \left\langle \int\limits_{\rset^d} g'(\Lun(\pi, x))\xi_2(x, z)\rmd (t_{n\#}\pi - \pi)( x), t_n(z) - z \right\rangle \rmd \pi(z) \to 0.
\end{align*}

Under Assumption~\ref{assum:general_kker},
\begin{align*}
    \int\limits_{\rset^d} g'(\Lun(\pi, x))\xi_2(x, z)\rmd (t_{n\#}\pi - \pi)( x) \leq \eta^{-1}\lambda\Mtt \wassersteinD[1](t_{n\#}\pi, \pi)
\end{align*}
using the dual representation of $\wassersteinD[1]$.
Then, using the Cauchy-Schwarz inequality,
\begin{align*}
    &\frac{\int\limits_{\rset^d}  \left\langle \int\limits_{\rset^d} g'(\Lun(\pi, x))\xi_2(x, z)\rmd (t_{n\#}\pi - \pi)( x), t_n(z) - z \right\rangle \rmd \pi(z)}{\norm{t_n - \Id}_{\rmL^2(\rset^d, \pi)}} \\
    &\leq \eta^{-1}\lambda\Mtt \wassersteinD[1](t_{n\#}\pi, \pi)\frac{\int\limits_{\rset^d} \norm{t_n(z) - z}  \rmd \pi(z)}{\norm{t_n - \Id}_{\rmL^2(\rset^d, \pi)}}\\
    &\leq \eta^{-1}\lambda\Mtt \wassersteinD[1](t_{n\#}\pi, \pi)\to 0
\end{align*}
by Jensen's inequality.

For the remaining term in~\eqref{eq:subdifferential_last_term}, using that $\norm{\nabla^2 \kker} \leq \Mtt$ we get that:
\begin{equation*}
  \forall z_1, z_2 \in \rset^d, x\in\rset^d: \qquad
\kker(x, z_2) \geq \kker(x, z_1) + \langle \nabla_2 \kker(x, z_1), z_2 -z_1 \rangle - \Mtt \norm{z_1 - z_2}^2.
\end{equation*}
Combining this and \cite[Proposition
10.4.2]{ambrosio2008gradient} we get that for any $x \in \rset^d$ and
$\pi \in \Pens_2(\rset^d)$,
the strong subdifferential of $ \Lun(\pi, x)$ is given by $\xi_2(x, z) =\lambda \nabla_2 \kker(x, z)$, thus
\begin{equation*}
\label{eq:fn_convergence}
\liminf_{n \to \infty} \{\Lun(t_{n\#} \pi, x) - \Lun(\pi, x) - \int\limits_{\rset^d} \langle \xi_2(x, z), t_n(z) - z \rangle \rmd \pi(z) \} / \norm{t_n - \Id}_{\rmL^2(\rset^d, \pi)}  \geq 0.
\end{equation*}
Similarly, we also have that $-\lambda \nabla_2 \kker(x, z)$ is the strong subdifferential of $ -\Lun(\pi, x)$ and thus
\begin{align*}
    \liminf_{n \to \infty} \{-\Lun(t_{n\#} \pi, x) + \Lun(\pi, x) + \int\limits_{\rset^d} \langle \xi_2(x, z), t_n(z) - z \rangle \rmd \pi(z) \} / \norm{t_n - \Id}_{\rmL^2(\rset^d, \pi)}  \geq 0.
\end{align*}
Or, equivalently, 
\begin{align*}
    \limsup_{n \to \infty} \{\Lun(t_{n\#} \pi, x) - \Lun(\pi, x) - \int\limits_{\rset^d} \langle \xi_2(x, z), t_n(z) - z \rangle \rmd \pi(z) \} / \norm{t_n - \Id}_{\rmL^2(\rset^d, \pi)}  \leq 0.
\end{align*}
Thus 
\begin{align}
\label{eq:liminf_limsup}
    \lim_{n \to \infty} \{\Lun(t_{n\#} \pi, x) - \Lun(\pi, x) - \int\limits_{\rset^d} \langle \xi_2(x, z), t_n(z) - z \rangle \rmd \pi(z) \} / \norm{t_n - \Id}_{\rmL^2(\rset^d, \pi)}  = 0.
\end{align}

In addition, since $\kker$ is Lipschitz continuous with constant $\Mtt$ we get that for any $\pi_1, \pi_2 \in \Pens_2(\rset^d)$ and $x \in \rset^d$,
$\vert\Lun(\pi_1, x) - \Lun(\pi_2, x)
\vert\leq \lambda\Mtt \wassersteinD[1](\pi_1, \pi_2),$
and therefore
\begin{align}
\label{eq:upper_bound_xi2}
&\left\vert\Lun(t_{n\#} \pi, x) - \Lun(\pi, x) - \int\limits_{\rset^d} \langle \xi_2(x, z), t_n(z) - z \rangle \rmd \pi(z) \right\vert / \norm{t_n - \Id}_{\rmL^2(\rset^d, \pi)} \\
&\qquad\qquad\qquad \leq \lambda\Mtt + \int\limits_{\rset^d} \norm{\xi_2(x, z)} \rmd \pi(z)\notag
\leq 2\lambda\Mtt.\notag
\end{align}
In particular, we have the following lower bound
\begin{align*}
    &g'(\Lun(\pi, x)) \left(\Lun(t_{n\#} \pi, x) - \Lun(\pi, x) - \int\limits_{\rset^d} \langle \xi_2(x, z), t_n(z) - z \rangle \rmd \pi(z) \right)\\
    &\qquad\qquad\geq -\eta^{-1}2\lambda\Mtt\norm{t_n - \Id}_{\rmL^2(\rset^d, \pi)} > -\infty.
\end{align*}

Let us consider the family of functions
\begin{align}
\label{eq:family_fun}
    x \mapsto f_n(x):=g'(\Lun(\pi, x))h_n(x),
\end{align}
where we defined
\begin{align*}
    h_n(x):=  \frac{\Lun(t_{n\#} \pi, x) - \Lun(\pi, x) - \int\limits_{\rset^d} \langle \xi_2(x, z), t_n(z) - z \rangle \rmd \pi(z) }{\norm{t_n - \Id}_{\rmL^2(\rset^d, \pi)}}.
\end{align*}
Consider the second derivative of~\eqref{eq:family_fun}
\begin{align*}
    \nabla f_n(x) = g'(\Lun(\pi, x)) \nabla h_n(x) + h_n(x)g''(\Lun(\pi, x)) \nabla \Lun(\pi, x).
\end{align*}
Then, $g'(\Lun(\pi, x))$ is upper bounded by $\eta^{-1}$, $g''(\Lun(\pi, x))$ is upper bounded by $\eta^{-2}$ , $h_n$ is upper bounded by~\eqref{eq:upper_bound_xi2}. Moreover,
\begin{align*}
    \norm{ \nabla \Lun(\pi, x)}=\norm{\int\limits_{\rset^d} \xi_1(x, z) \rmd \pi(z)}\leq \Mtt(1+\lambda)
\end{align*}
and, using the Lipschitz continuity of $\nabla_1 \kker$ guaranteed by Assumption~\ref{assum:general_kker} and the Cauchy--Schwarz inequality,
\begin{align*}
   \norm{ \nabla h_n(x) }&= \frac{\norm{\lambda\int_{\rset^d}\nabla_1\kker(x, z)\rmd (t_{n\#}\pi- \pi)(z)- \int\limits_{\rset^d} \langle\nabla_1 \xi_2(x, z), t_n(z) - z \rangle \rmd \pi(z)} }{\norm{t_n - \Id}_{\rmL^2(\rset^d, \pi)}}\\
   &\leq \frac{\lambda\int_{\rset^d}\norm{\nabla_1\kker(x, t_n(z))-\nabla_1\kker(x, z)}\rmd \pi(z)} {\norm{t_n - \Id}_{\rmL^2(\rset^d, \pi)}}\\
   &+\frac{\int\limits_{\rset^d} \norm{\nabla_1 \xi_2(x, z)}\norm{ t_n(z) - z }\rmd \pi(z)}{\norm{t_n - \Id}_{\rmL^2(\rset^d, \pi)}}\\
   &\leq 2\lambda \Mtt\frac{\left(\int_{\rset^d}\norm{ t_n(z)-z}^2\rmd \pi(z)\right)^{1/2}} {\norm{t_n - \Id}_{\rmL^2(\rset^d, \pi)}}\\
   &\leq 2\lambda\Mtt,
\end{align*}
where we used $\xi_2(x, z) =\lambda \nabla_2 \kker(x, z)$, Assumption~\ref{assum:general_kker} and Jensen's inequality in the second-to-last inequality.

Hence, the sequence of functions is equicontinuous and uniformly bounded and we can apply the dominated convergence Theorem for weakly converging measures \cite[Theorem 5.2]{feinberg2020fatou2} to obtain 

\resizebox{0.95\textwidth}{!}{\noindent
\hspace*{-0.5cm}\parbox{\textwidth}{\noindent
\begin{align*}
   \lim_{n \to \infty}\frac{\int\limits_{\rset^d}\! g'(\Lun(\pi, x)) \!\left(\!\Lun(t_{n\#} \pi, x) \!-\! \Lun(\pi, x) \!-\! \int\limits_{\rset^d}\! \langle \xi_2(x, z), t_n(z) \!-\! z \rangle \rmd \pi(z) \!\right)\!\rmd t_{n\#}\pi( x)}{ \norm{t_n - \Id}_{\rmL^2(\rset^d, \pi)} } = 0,
\end{align*}
}}

which gives the result.
\end{proof}

\cite[Theorem 10.4.9]{ambrosio2008gradient} shows that the subdifferential of $\Mun(\pi) =
\KL{\pi}{\pi_0}$ is 
\begin{align}
\label{eq:subdifferential_pi0}
\partial_{\rms}\Mun(\pi) = \{x \mapsto \nabla \log(\rmd \pi / \rmd \pi_0)(x)\}.
\end{align}
Similarly, the subdifferential of $\Hent(\pi) :=-\int\limits_{\rset^d}\log(\pi(x)) \rmd \pi(x)$ is 
\begin{equation}
\label{eq:subdifferential entropy}
\partial_{\rms}\Hent(\pi) = \{x \mapsto -\nabla \log\pi(x)\}.
\end{equation}
Combining the above with Proposition~\ref{prop:subdifferential_middle} we obtain the subdifferential for $\Fun_\alpha^\eta$
\begin{align*}
\partial_{\rms}\Fun_\alpha^\eta(\pi)  &=  \left\lbrace x\to -\int\left[\frac{\lambda\nabla_2 \kker(z, x)}{\lambda\pi\left[ \kker(z, \cdot)\right]+\varphi(z)+\eta}+\frac{\lambda\nabla_1 \kker(x, z)+\nabla \varphi(x)}{\lambda\pi\left[ \kker(x, \cdot)\right]+\varphi(x)+\eta}\right]\rmd \pi\left(z\right)\right.\\
&\left.\qquad\qquad+ \nabla \log\pi(x)+\alpha\nabla \log(\rmd \pi / \rmd \pi_0)(x)\right\rbrace.
\end{align*}

We can then apply the definition of a Wasserstein gradient flow \cite[Definition 11.1.1]{ambrosio2008gradient} to establish that if there exists $(\pi_t)_{t \geq 0}$ such that for any $t \in (0, +\infty)$, $\pi_t$ admits a density w.r.t. the Lebesgue measure and~\eqref{eq:PDE_BM} holds in the sense of distributions then $(\pi_t)_{t \geq 0}$ is a Wasserstein gradient flow associated with $\Fun_{\alpha}^\eta$.

\section{Proof on MKVSDE}

\subsection{Proof of Proposition~\ref{prop:existence_uniqueness}}
\label{app:proof_eu}

To prove  existence and uniqueness of the solution of~\eqref{eq:mckean_sde} we use standard tools for McKean--Vlasov processes. In particular, we only need to show that the drift of~\eqref{eq:mckean_sde}, denoted by $b:\rset^d\times\Pens_1(\rset^d) \to \rset^d$, given for any $x \in \rset^d$ and $\pi \in \Pens_1(\rset^d)$ by~\eqref{eq:b}
is Lipschitz continuous.
The remainder of the proof is classical and is omitted, see for instance \cite[Theorem  1.1]{sznitman1991topics} or \cite[Appendix B.2]{crucinio2022solving} for a recent result on a similar scheme.

Let $\alpha, \eta > 0$. First, we show that $b^\eta$ in~\eqref{eq:drift} is Lipschitz continuous. Under Assumption~\ref{assum:general_kker}, the forcing term $\varphi$, $\kker$ and their gradients are Lipschitz continuous, hence, we have for any $x_1, x_2\in \rset^d$, $z\in \rset^d$ and $\pi_1, \pi_2 \in \Pens_1(\rset^d)$
\begin{align*}
\norm{b^\eta(x_1, z, \pi_1) - b^\eta(x_2, z,\pi_2)} \hspace*{-2cm}&\\
\leq& \norm{\frac{\lambda\nabla_2 \kker(z, x_1)}{\lambda\pi_1\left[ \kker(z, \cdot)\right]+\varphi(z)+\eta}-\frac{\lambda\nabla_2 \kker(z, x_2)}{\lambda\pi_2\left[ \kker(z, \cdot)\right]+\varphi(z)+\eta}}\\
&+\norm{\frac{\lambda\nabla_1 \kker(x_1, z)+\nabla \varphi(x_1)}{\lambda\pi_1\left[ \kker(x_1, \cdot)\right]+\varphi(x_1)+\eta} - \frac{\lambda\nabla_1 \kker(x_2, z)+\nabla \varphi(x_2)}{\lambda\pi_2\left[ \kker(x_2, \cdot)\right]+\varphi(x_2)+\eta}}\notag\\
\leq& \norm{\frac{\lambda\nabla_2 \kker(z, x_1) - \lambda\nabla_2 \kker(z, x_2)}{\lambda\pi_1\left[ \kker(z, \cdot)\right]+\varphi(z)+\eta}}\notag\\
&+\norm{\frac{\lambda\nabla_2 \kker(z, x_2)}{\lambda\pi_1\left[ \kker(z, \cdot)\right]+\varphi(z)+\eta}-\frac{\lambda\nabla_2 \kker(z, x_2)}{\lambda\pi_2\left[ \kker(z, \cdot)\right]+\varphi(z)+\eta}}\notag\\
&+\norm{\frac{\lambda\nabla_1 \kker(x_1, z)+\nabla \varphi(x_1) - \lambda\nabla_1 \kker(x_2, z)-\nabla \varphi(x_2)}{\lambda\pi_1\left[ \kker(x_1, \cdot)\right]+\varphi(x_1)+\eta}}\notag\\
&+\norm{\frac{\lambda\nabla_1 \kker(x_2, z)+\nabla \varphi(x_2)}{\lambda\pi_1\left[ \kker(x_1, \cdot)\right]+\varphi(x_1)+\eta}-\frac{\lambda\nabla_1 \kker(x_2, z)+\nabla \varphi(x_2)}{\lambda\pi_2\left[ \kker(x_2, \cdot)\right]+\varphi(x_2)+\eta}}\notag\\
\leq& \lambda(\Mtt/\eta)\norm{x_1-x_2}+\lambda^2(\Mtt/\eta^2)\vert \pi_2\left[ \kker(z, \cdot)\right]-\pi_1\left[ \kker(z, \cdot)\right]\vert\notag\\
&+(\lambda+1)(\Mtt/\eta)\norm{x_1-x_2}\notag \\
&+ (\lambda+1)(\Mtt/\eta^2)(\Mtt\norm{x_1-x_2}+\vert \pi_2\left[ \kker(x_2, \cdot)\right]-\pi_1\left[ \kker(x_1, \cdot)\right]\vert).\notag
\end{align*}
Using the Lipschitz continuity of $\kker$ we also have that for any $z\in \rset^d$
\begin{align*}\hfill
    \vert \pi_2\left[ \kker(z, \cdot)\right]-\pi_1\left[ \kker(z, \cdot)\right]\vert  & \leq  \Mtt \wassersteinD[1](\pi_1, \pi_2).
\end{align*}
Hence, we have
\begin{align}
\label{eq:bound_beta_one}
    &\norm{b^\eta(x_1, z, \pi_1) - b^\eta(x_2, z,\pi_2)} \\
    &\qquad\qquad\leq  \Ctt_1 ( \norm{x_1-x_2} +\vert \pi_2\left[ \kker(x_2, \cdot)\right]-\pi_1\left[ \kker(x_1, \cdot)\right]\vert+\wassersteinD[1](\pi_1, \pi_2)),  \notag  
\end{align}
with $\Ctt_1$ depending upon $\lambda, \eta$ and $\Mtt$. Following a similar approach, we can show that $b^\eta$ is Lipschitz continuous in $z$: 
for any $z_1, z_2\in \rset^d$, $x\in \rset^d$ and $\nu \in \Pens_1(\rset^d)$
\begin{align*}
&\norm{b^\eta(x, z_1, \nu) - b^\eta(x, z_2,\nu)}\\
&\qquad\qquad\leq \norm{\frac{\lambda\nabla_2 \kker(z_1, x)}{\lambda\nu\left[ \kker(z_1, \cdot)\right]+\varphi(z_1)+\eta}-\frac{\lambda\nabla_2 \kker(z_2, x)}{\lambda\nu\left[ \kker(z_2, \cdot)\right]+\varphi(z_2)+\eta}}\\
&\qquad\qquad+\norm{\frac{\lambda\nabla_1 \kker(x, z_1)+\nabla \varphi(x)}{\lambda\nu\left[ \kker(x, \cdot)\right]+\varphi(x)+\eta} - \frac{\lambda\nabla_1 \kker(x, z_2)+\nabla \varphi(x)}{\lambda\nu\left[ \kker(x, \cdot)\right]+\varphi(x)+\eta}}\notag\\
&\qquad\qquad\leq  \norm{\frac{\lambda\nabla_2 \kker(z_1, x)-\lambda\nabla_2 \kker(z_2, x)}{\lambda\nu\left[ \kker(z_1, \cdot)\right]+\varphi(z_1)+\eta}}\notag\\
&\qquad\qquad+\norm{\frac{\lambda\nabla_2 \kker(z_2, x)}{\lambda\nu\left[ \kker(z_1, \cdot)\right]+\varphi(z_1)+\eta}-\frac{\lambda\nabla_2 \kker(z_2, x)}{\lambda\nu\left[ \kker(z_2, \cdot)\right]+\varphi(z_2)+\eta}}\notag\\
&\qquad\qquad+\lambda(\Mtt/\eta)\norm{z_1-z_2}\notag\\
&\qquad\qquad\leq 2\lambda(\Mtt/\eta)\norm{z_1-z_2} \notag\\
&\qquad\qquad+ \lambda(\Mtt/\eta^2)\vert\lambda\nu\left[ \kker(z_2, \cdot)\right]+\varphi(z_2)-\lambda\nu\left[ \kker(z_1, \cdot)\right]-\varphi(z_1)\vert\notag\\
&\qquad\qquad\leq 2\lambda(\Mtt/\eta)\norm{z_1-z_2}+ \lambda(\Mtt^2/\eta^2)(1+\lambda)\norm{z_1-z_2} \notag\\
&\qquad\qquad\leq \Ctt_2\norm{z_1-z_2},\notag
\end{align*}
for $\Ctt_2$ which depends on $\lambda, \eta$ and $\Mtt$. 

Using~\eqref{eq:bound_beta_one}, for any $x_1, x_2 \in \rset^d$ and $\pi_1, \pi_2 \in \Pens_1(\rset^d)$ we have
\begin{align}
\label{eq:bound_beta_two}
    &\norm{\int b^\eta(x_1, z, \pi_1)\rmd \pi_1(z) - \int b^\eta(x_2, z, \pi_2)\rmd \pi_2(z)} \\
    &\qquad\leq  \int \norm{b^\eta(x_1, z, \pi_1) - b^\eta(x_2, z,\pi_2)}\rmd \pi_1(z)+\vert \int b^\eta(x_2, z,\pi_2) \rmd(\pi_1-\pi_2)(z)\vert\notag\\
    &\qquad\leq \Ctt_1 ( \norm{x_1-x_2} +\vert \pi_2\left[ \kker(x_2, \cdot)\right]-\pi_1\left[ \kker(x_1, \cdot)\right]\vert+\wassersteinD[1](\pi_1, \pi_2))\notag \\
    &\qquad+ \vert \pi_2\left[ b^\eta(x_2, \cdot,\pi_2)\right]-\pi_1\left[b^\eta(x_2, \cdot,\pi_2)\right]\vert. \notag
\end{align}
Using the dual representation of $\wassersteinD[1]$ and the Lipschitz continuity of $\kker$ we also have that for any $x_1, x_2\in \rset^d$
\begin{align*}\hfill
    \vert \pi_2\left[ \kker(x_2, \cdot)\right]-\pi_1\left[ \kker(x_1, \cdot)\right]\vert  & \leq  \vert \pi_2\left[ \kker(x_2, \cdot)\right]-\pi_1\left[ \kker(x_2, \cdot)\right]\vert \notag\\
    &+ \vert \pi_1\left[ \kker(x_2, \cdot)\right]-\pi_1\left[ \kker(x_1, \cdot)\right]\vert\notag\\
   &\leq \Mtt \wassersteinD[1](\pi_1, \pi_2) + \Mtt\norm{x_1-x_2},\notag
\end{align*}
and
\begin{align}
\label{eq:w1_dual_beta}
\vert (\pi_2 - \pi_1)\left[ b^\eta(x_2, \cdot,\pi_2)\right]\vert &\leq \Ctt_2 \wassersteinD[1](\pi_1, \pi_2).
\end{align}
Using the above, for any $x_1, x_2 \in \rset^d$ and $\pi_1, \pi_2 \in \Pens_1(\rset^d)$ we have
\begin{align}
\label{eq:part1_drift_lipschitz}
    &\norm{\int b^\eta(x_1, z, \pi_1)\rmd \pi_1(z) - \int b^\eta(x_2, z, \pi_2)\rmd \pi_2(z)} \\
    &\qquad\leq  \Ctt_1(1+\Mtt)( \norm{x_1-x_2} +\wassersteinD[1](\pi_1, \pi_2))+\Ctt_2\wassersteinD[1](\pi_1, \pi_2).\notag
\end{align}
Using Assumption~\ref{assum:pi0}, it then follows that
\begin{align}
\label{eq:drift_lipschitz}
\norm{b(x_1, \pi_1) - b(x_2, \pi_2)} &\leq \Ctt_1(1+\Mtt) ( \norm{x_1-x_2} +\wassersteinD[1](\pi_1, \pi_2))\\
&\phantom{\leq} \quad + \Ctt_2\wassersteinD[1](\pi_1, \pi_2)+\alpha\Ltt \norm{x_1-x_2} \notag\\
    &\leq (\Ctt_1+\Ctt_1\Mtt+\Ctt_2+\alpha\Ltt) ( \norm{x_1-x_2} +\wassersteinD[1](\pi_1, \pi_2)).\notag
\end{align}

\subsection{Proof of Proposition~\ref{prop:la_convergence_star}}
\label{app:proof_invariant}

Under Assumption~\ref{assum:general_kker}, Proposition~\ref{prop:convergence_minimum}--\ref{item:a} ensures that the functional $\Fun^\eta$ is convex and lower bounded.
In addition, Assumption~\ref{assum:pi0} guarantees that $U$ is sufficiently regular to satisfy \cite[Assumption 2.2]{hu2019mean}.
To see this, we can use Assumption~\ref{assum:pi0}--\ref{item:dissipativity} and the Cauchy--Schwarz inequality to obtain
\begin{align*}
    \langle \nabla U(x) , x \rangle & = \langle \nabla U(x) -\nabla U(0), x \rangle + \langle \nabla U(0) , x \rangle\geq \mtt\norm{x}^2-\ctt - \norm{x}\norm{\nabla U(0)}.
\end{align*}
Then, using Young's inequality, $\alpha\beta \leq \alpha^2/(2\epsilon)+\beta^2\epsilon/2$ for all $\epsilon>0$, 
\begin{align*}
    -\norm{x}\norm{\nabla U(0)} \geq -\norm{x}^2/(2\epsilon)-\epsilon \norm{\nabla U(0)}^2/2.
\end{align*}
It follows that for all $\epsilon > 1/(2\mtt)$ we have, for any $x \in \rset^d$, $\langle \nabla U(x) , x \rangle \geq \mathtt{a} \norm{x}^2 + \mathtt{b}$, where $\mathtt{a}:= \mtt-1/(2\epsilon) >0, \mathtt{b}=-\ctt-\epsilon \norm{\nabla U(0)}^2/2$.

The intrinsic derivative of $\Fun^\eta$ is given by the first component of the drift $b$ in~\eqref{eq:b}:
\begin{align*}
   D_\nu \Fun^\eta(x, \nu):=\int b^\eta(x, z, \nu) \rmd \nu(z)
\end{align*}
with $b^\eta$ in~\eqref{eq:drift}.
As shown in~\eqref{eq:part1_drift_lipschitz}, $D_\nu \Fun^\eta$ is Lipschitz continuous and thus $\Fun^\eta$ is continuously differentiable.
In addition, using the fact that under Assumption~\ref{assum:general_kker} $\varphi, \kker\in\rmc^\infty(\rset^d \times \rset^d, [0, +\infty))$ and Leibniz integral rule for differentiation under the
integral sign (e.g. \cite[Theorem 16.8]{billingsley1995measure}), we have that $D_\nu \Fun^\eta$ is $\rmc^\infty(\rset^d , \rset^d)$ for all fixed $\nu\in\Pens(\rset^d)$.
Finally, we need to show that
\begin{align*}
\nabla D_\nu \Fun^\eta(x, \nu) &= 
\int\limits_{\rset^d} \nabla_1 b^\eta(x, z, \nu) \rmd \nu(z)=: \nabla b^1(x, \nu) +\nabla b^2(x, \nu),
\end{align*}
where we defined
\begin{align*}
   \nabla b^1(x, \nu):= &\int\limits_{\rset^d} \frac{\lambda\nabla_2^2 \kker(z, x)}{\lambda\nu\left[ \kker(z, \cdot)\right]+\varphi(z)+\eta}\rmd \nu(z),\\
\nabla b^2(x, \nu):=&\int\limits_{\rset^d} \frac{(\lambda\nabla_1^2 \kker(x, z)+\nabla^2 \varphi(x))(\lambda\nu\left[ \kker(x, \cdot)\right]+\varphi(x) +\eta) }{(\lambda\nu\left[ \kker(x, \cdot)\right]+\varphi(x)+\eta)^2}\rmd \nu(z)\\
&-\int\limits_{\rset^d}\frac{(\lambda\nabla_1 \kker(x, z)+\nabla \varphi(x))(\lambda\nu\left[ \nabla_1\kker(x, \cdot)\right]+\nabla\varphi(x))}{(\lambda\nu\left[ \kker(x, \cdot)\right]+\varphi(x)+\eta)^2}\rmd \nu(z),
\end{align*}
is jointly continuous in $(x, \nu)$. To do so, consider a sequence $(x_n, \nu_n)_{n\geq 0}\in (\rset^d\times\Pens_2(\rset^d))^\nset$ such that such that $\lim_{n \to +\infty} (x_n, \nu_n) = (x, \nu) \in \rset^d\times \Pens_2(\rset^d)$.
Then,
\begin{align*}
&\norm{\nabla b^1(x_n, \nu_n)-\nabla b^1(x, \nu)} \\
\leq& \norm{\int\limits_{\rset^d} \frac{\lambda\nabla_2^2 \kker(z, x_n)}{\lambda\nu_n\left[ \kker(z, \cdot)\right]+\varphi(z)+\eta}\rmd \nu_n(z)-\int\limits_{\rset^d} \frac{\lambda\nabla_2^2 \kker(z, x_n)}{\lambda\nu_n\left[ \kker(z, \cdot)\right]+\varphi(z)+\eta}\rmd \nu(z)} \\
&+ \norm{\int\limits_{\rset^d} \frac{\lambda\nabla_2^2 \kker(z, x_n)}{\lambda\nu_n\left[ \kker(z, \cdot)\right]+\varphi(z)+\eta}\rmd \nu(z)-\int\limits_{\rset^d} \frac{\lambda\nabla_2^2 \kker(z, x)}{\lambda\nu_n\left[ \kker(z, \cdot)\right]+\varphi(z)+\eta}\rmd \nu(z)}\\
&+ \norm{\int\limits_{\rset^d} \frac{\lambda\nabla_2^2 \kker(z, x)}{\lambda\nu_n\left[ \kker(z, \cdot)\right]+\varphi(z)+\eta}\rmd \nu(z)-\int\limits_{\rset^d} \frac{\lambda\nabla_2^2 \kker(z, x)}{\lambda\nu\left[ \kker(z, \cdot)\right]+\varphi(z)+\eta}\rmd \nu(z)}\\
\leq&  \norm{\int\limits_{\rset^d} \frac{\lambda\nabla_2^2 \kker(z, x_n)}{\lambda\nu_n\left[ \kker(z, \cdot)\right]+\varphi(z)+\eta}\rmd \nu_n(z)-\int\limits_{\rset^d} \frac{\lambda\nabla_2^2 \kker(z, x_n)}{\lambda\nu_n\left[ \kker(z, \cdot)\right]+\varphi(z)+\eta}\rmd \nu(z)}\\
&+\frac{\lambda\Mtt}{\eta}\norm{x-x_n}+\frac{\lambda^2\Mtt^2}{\eta^2}\wassersteinD[1](\nu_n, \nu),
\end{align*}
where we used the fact that $\kker$ is Lipschitz continuous with Lipschitz constant $\Mtt$ and, for the last term, the dual representation of the $\wassersteinD[1]$ distance.
In addition, since $\kker\in\rmc^\infty(\rset^d \times \rset^d, [0, +\infty))$ all third derivatives w.r.t. both arguments are bounded, i.e. $\norm{\partial^3_{ij\ell} \kker(x,y)} \leq \Mtt$, the function
\begin{align*}
z\mapsto \lambda\nabla_2^2 \kker(z, x_n)/(\lambda\nu_n\left[ \kker(z, \cdot)\right]+\varphi(z)+\eta)
\end{align*}
is Lipschitz continuous with Lipschitz constant $\Mtt_1$ for all $n$ uniformly in $x_n$. Hence, using again the dual representation of the $\wassersteinD[1]$ distance, we find 
\begin{align*}
&\norm{\int\limits_{\rset^d} \frac{\lambda\nabla_2^2 \kker(z, x_n)}{\lambda\nu_n\left[ \kker(z, \cdot)\right]+\varphi(z)+\eta}\rmd \nu_n(z)-\int\limits_{\rset^d} \frac{\lambda\nabla_2^2 \kker(z, x)}{\lambda\nu\left[ \kker(z, \cdot)\right]+\varphi(z)+\eta}\rmd \nu(z)} \\
&\qquad\leq \frac{\lambda\Mtt}{\eta}\norm{x-x_n}+\left(\Mtt_1+\frac{\lambda^2\Mtt^2}{\eta^2}\right)\wassersteinD[1](\nu_n, \nu).
\end{align*}
Similarly, we can show that the second term satisfies
\begin{align*}
\norm{\nabla b^2(x_n, \nu_n)-\nabla b^2(x, \nu)} \leq C(\norm{x-x_n}+\wassersteinD[1](\nu_n, \nu))
\end{align*}
for some finite $C$. Since $\lim_{n \to +\infty} (x_n, \nu_n) = (x, \nu)$ implies $\norm{x-x_n}\to 0$ and $\wassersteinD[1](\nu_n, \nu)\to 0$, we have that $\nabla b(x, \nu)$ is jointly continuous.
Then, the result follows from \cite[Theorem
2.11]{hu2019mean}.

\section{Particle System and Time Discretization}
\subsection{Proof of Proposition~\ref{prop:propagation_chaos}}
\label{app:poc}
For $\{x_1^{k,N}\}_{k=1}^N, \{x_2^{k,N}\}_{k=1}^N \in (\rset^d)^N$ and $\ell \in \{1, \dots, N\}$, denote the empirical measures $\pi_1^N := (1/N) \sum_{k=1}^N \delta_{x_1^{k,N}}$ and $\pi_2^N:=(1/N) \sum_{k=1}^N \delta_{x_2^{k,N}}$.
Using~\eqref{eq:drift_lipschitz} and the fact that for empirical measures
\begin{align*}
    \wassersteinD[1](\pi_1^N,\pi_2^N)\leq  \wassersteinD[2](\pi_1^N,\pi_2^N) \leq N^{-1/2} \norm{x_1^{1:N}-x_2^{1:N}}\leq \norm{x_1^{1:N} - x_2^{1:N}}
\end{align*}
where we used the ordering of Wasserstein distances (see, e.g. \cite[Remark 6.6]{villani2008optimal}) for the first inequality and the definition of $\wassersteinD[2]$ for the second, we have that
\begin{align*}
\norm{b(x_1^{\ell,N}, \pi_1^N) - b(x_2^{\ell,N}, \pi_2^N)}&\leq  2(\Ctt_1(1+\Mtt)+\Ctt_2+\alpha\Ltt)\norm{x_1^{1:N}-x_2^{1:N}}.
\end{align*}
The existence and strong uniqueness of a solution to~\eqref{eq:particle} is then a straightforward consequence of the above and \cite[Chapter 5, Theorem 2.9 and 2.5]{karatzas1991brownian}.

The propagation of chaos result is a straightforward consequence of the Lipschitz continuity of $b$.
Combining~\eqref{eq:bound_beta_two}, \eqref{eq:w1_dual_beta} and Assumption~\ref{assum:pi0} we have for any $x_1, x_2 \in \rset^d$ and
  $\pi_1, \pi_2 \in \Pens_1(\rset^d)$
\begin{align}
\label{eq:beta_lipschitz_poc}
    &\norm{b(x_1, \pi_1) - b(x_2, \pi_2)}\\ 
    &\ \ \leq\Ctt_1 ( \norm{x_1-x_2} +\vert \pi_2\left[ \kker(x_2, \cdot)\right]-\pi_1\left[ \kker(x_1, \cdot)\right]\vert+  \wassersteinD[1](\pi_1, \pi_2)) \notag\\
    &\ \ \ + \Ctt_2  \wassersteinD[1](\pi_1, \pi_2) + \alpha\Ltt \norm{x_1-x_2}\notag\\
    &\ \ \leq(\Ctt_1+\Ctt_2+\alpha\Ltt)( \norm{x_1-x_2} +\vert \pi_2\left[ \kker(x_2, \cdot)\right]-\pi_1\left[ \kker(x_1, \cdot)\right]\vert + \wassersteinD[1](\pi_1, \pi_2)\vert).\notag
\end{align}
The rest of the proof is classical \cite{sznitman1991topics} and given for completeness, see \cite[Appendix B.3]{crucinio2022solving} for a proof in a similar scenario.

Using~\eqref{eq:beta_lipschitz_poc}, we have for any $t \geq 0 $
\begin{align}
  \label{eq:poc_proof}
  &\Exp\left[ \sup_{s \in [0, t]}\norm{\bm{X}_s - \bm{X}_s^{1,N}} \right]\\
  &\qquad\leq \int\limits_{0}^t \norm{b(\bm{X}_s, \pi_s) - b(\bm{X}_s^{1, N}, \pi^N_s)} \rmd s\notag\\
    &\qquad\leq (\Ctt_1+\Ctt_2+\alpha\Ltt)\int\limits_{0}^t \Exp\left[\sup_{u \in [0, s]}\norm{\bm{X}_u - \bm{X}_u^{1,N}}\right]\rmd s \notag \\
    &\qquad+  (\Ctt_1+\Ctt_2+\alpha\Ltt)\int\limits_{0}^t \Exp\left[\left\vert\frac{1}{N}\sum_{i=1}^N\kker(\bm{X}_s^{1, N}, \bm{X}_s^{i, N})-\pi_s\left[\kker(\bm{X}_s, \cdot)\right] \right\vert\right] \rmd s\notag\\
    &\qquad +(\Ctt_1+\Ctt_2+\alpha\Ltt)\int\limits_{0}^t \Exp\left[\wassersteinD[1](\pi_s, \pi_s^N) \right]\rmd s\notag\\
       &\qquad\leq 2(\Ctt_1+\Ctt_2+\alpha\Ltt)\int\limits_{0}^t \Exp\left[\sup_{u \in [0, s]}\norm{\bm{X}_u - \bm{X}_u^{1,N}}\right]\rmd s \notag \\
    &\qquad+  (\Ctt_1+\Ctt_2+\alpha\Ltt)\int\limits_{0}^t \Exp\left[\left\vert\frac{1}{N}\sum_{i=1}^N\kker(\bm{X}_s^{1, N}, \bm{X}_s^{i, N})-\pi_s\left[\kker(\bm{X}_s, \cdot)\right] \right\vert\right] \rmd s\notag,
\end{align}
  where the last inequality follows using the convexity of $\wassersteinD[1](\pi_s, \cdot)$ and exchangeability, which implies
\begin{align}
\label{eq:wasserstein_ordering}
    \mathbb{E}[\wassersteinD[1](\pi_s, \pi_s^N)] &= \mathbb{E}\left[ \wassersteinD[1]\left(\pi_s, \frac{1}{N} \sum_{i=1}^N \delta_{\bm{X_s}^{i,N}}\right) \right]\\
    \leq&
    \frac{1}{N} \sum_{i=1}^N \mathbb{E}\left[ \wassersteinD[1]\left(\pi_s, \delta_{\bm{X_s}^{i,N}}\right)\right] \notag\\
    \leq& \frac{1}{N}  \sum_{i=1}^N \mathbb{E}[\Vert \bm{X_s} - \bm{X_s}^{i,N}\Vert] \notag\\
    \leq& \mathbb{E}[\Vert \bm{X_s} - \bm{X_s}^{1,N}\Vert].\notag
\end{align}

  Now, consider $N$ independent copies of the nonlinear process $\bm{X}_s$, $\{(\bm{X}_s^{k,\star})_{t \geq 0}\}_{k=1}^N$. We can bound the second term in the above with
\begin{align*}
    &\Exp\left[\left\vert\frac{1}{N}\sum_{i=1}^N\kker(\bm{X}_s^{1, N}, \bm{X}_s^{i, N})-\pi_s\left[\kker(\bm{X}_s, \cdot)\right] \right\vert\right] \\
    &\qquad\leq \frac{1}{N}\Exp\left[\left\vert\sum_{i=1}^N\kker(\bm{X}_s^{1, N}, \bm{X}_s^{i, N})-\kker(\bm{X}_s^{1, N}, \bm{X}_s^{i, \star})\right\vert\right]   \\
   &\qquad+\frac{1}{N}\Exp\left[\left\vert\sum_{i=1}^N\kker(\bm{X}_s^{1, N}, \bm{X}_s^{i, \star})-\pi_s\left[\kker(\bm{X}_s, \cdot)\right] \right\vert\right]\\
   &\qquad\leq \Mtt \Exp\left[\sup_{u \in [0, s]}\norm{\bm{X}_u - \bm{X}_u^{1,N}}\right]\\
   &\qquad+\frac{1}{N}\Exp\left[\left\vert\sum_{i=1}^N\kker(\bm{X}_s^{1, N}, \bm{X}_s^{i, \star})-\pi_s\left[\kker(\bm{X}_s, \cdot)\right] \right\vert\right]
\end{align*}
where we used the Lipschitz continuity of $\kker$, \eqref{eq:wasserstein_ordering} and the fact that $\{(\bm{X}_t^{k,N})_{t \geq 0}\}_{k=1}^N$ is exchangeable to obtain the last inequality.

Consider now the last term in the display above and let us denote $\Delta_i := \kker(\mathbf{X}_s^{1,N},\mathbf{X}_s^{i,\star}) - \pi_s[\kker(\mathbf{X}_s^{1, N},\cdot)]$, then
\begin{align*}
   & \frac{1}{N}\Exp\left[\left\vert\sum_{i=1}^N\kker(\bm{X}_s^{1, N}, \bm{X}_s^{i, \star})-\pi_s\left[\kker(\bm{X}_s, \cdot)\right] \right\vert\right]\\
   &\qquad\qquad\leq \frac{1}{N}\Exp\left[\left| \sum_{i=1}^N \Delta_i \right|\right]+\Exp\left[\left\vert\pi_s\left[\kker(\bm{X}_s^{1, N}, \cdot)\right] -\pi_s\left[\kker(\bm{X}_s, \cdot)\right] \right\vert\right]\\
    &\qquad\qquad\leq  \frac{1}{N}\Exp\left[\left| \sum_{i=1}^N \Delta_i \right|\right]+\Mtt \Exp\left[\sup_{u\in[0, s]}\norm{\bm{X}_u^{1, N}-\bm{X}_u}\right],
\end{align*}
using the Lipschitz continuity of $\kker$. For the remaining term, using Jensen's inequality and the independence of the $\Delta_i$s we find
\begin{align*}
       \frac{1}{N} \mathbb{E}\left[ \left| \sum_{i=1}^N \Delta_i \right| \right]
       =&\frac{1}{N} \mathbb{E}\left[ \left\{ \left( \sum_{i=1}^N \Delta_i \right)^2 \right\}^{1/2} \right]\\
       \leq& \frac{1}{N} \mathbb{E}\left[  \left( \sum_{i=1}^N \Delta_i \right)^2 \right]^{1/2} \qquad \textrm{(Jensen)}\\
       =& \frac{1}{N} \mathbb{E}\left[  \left( \sum_{i=1}^N \Delta_i^2 + \sum_{i\neq j} \Delta_i \Delta_j \right) \right]^{1/2} \\
       =& \frac{1}{N} \mathbb{E}\left[  \sum_{i=1}^N \Delta_i^2  \right]^{1/2} \qquad \textrm{(Independence)}.
\end{align*}

Plugging the above into~\eqref{eq:poc_proof} we obtain

\resizebox{0.95\textwidth}{!}{\noindent
\hspace*{-0.5cm}\parbox{\textwidth}{\noindent
\begin{align*}
 \Exp\left[ \sup_{s \in [0, t]}\norm{\bm{X}_s - \bm{X}_s^{1,N}} \right]&\leq 2(1+\Mtt)(\Ctt_1+\Ctt_2+\alpha\Ltt)\int\limits_{0}^t \Exp\left[\sup_{u \in [0, s]}\norm{\bm{X}_u - \bm{X}_u^{1,N}}\right]\rmd s \\
 &+\frac{\Ctt_1+\Ctt_2+\alpha\Ltt}{N}\int_0^t\Exp\left[\sum_{i=1}^N\vert\kker(\bm{X}_s^{1, N}, \bm{X}_s^{i, \star})-\pi_s\left[\kker(\bm{X}_s^{1, N}, \cdot)\right] \vert^2\right]^{1/2}\rmd s.
\end{align*}
}}

Using Popoviciu's inequality on variances \cite{popoviciu1935equations} and recalling that $0\leq \kker(x,y)\leq \Mtt$ for all $(x, y)\in\rset^d\times \rset^d$, we have
\begin{align*}
\Exp\left[\vert\kker(\bm{X}_s^{1, N}, \bm{X}_s^{i, \star})-\pi_s\left[\kker(\bm{X}^{1, N}_s, \cdot)\right] \vert^2\mid \bm{X}_s^{1, N}\right] &= \textrm{var}\left(\kker(\bm{X}_s^{1, N}, \bm{X}_s^{i, \star})\mid \bm{X}_s^{1, N}\right)\\
&\leq \Mtt^2/4.
\end{align*}
It follows that
\begin{align*}
 \Exp\left[ \sup_{s \in [0, t]}\norm{\bm{X}_s - \bm{X}_s^{1,N}} \right]&\leq (2+\Mtt)(\Ctt_1+\Ctt_2+\alpha\Ltt)\int\limits_{0}^t \Exp\left[\sup_{u \in [0, s]}\norm{\bm{X}_u - \bm{X}_u^{1,N}}\right]\rmd s \\
 &+\Mtt/2(\Ctt_1+\Ctt_2+\alpha\Ltt)N^{-1/2}t.
\end{align*}
  Using Gr\"{o}nwall's lemma we get that for any $T \geq 0$ there exists
  $C_N(T) \geq 0$ such that for any $N \in \mathbb{N}$
\begin{equation*}
  \Exp\left[ \sup_{t \in [0, T]}\norm{\bm{X}_t - \bm{X}_t^{1,N}} \right] \leq C_N(T) N^{-1/2},
\end{equation*}
which concludes the proof.

\subsection{Proof of Proposition~\ref{prop:particle_ergodic}}
\label{app:ergodic}

First, we recall that $b$ is given for any $x \in \rset^d$, 
$\nu \in \Pens(\rset^d)$ by
\begin{equation*}
    b(x, \nu) = \int b^\eta(x, z, \nu) \rmd \nu(z)-\alpha\nabla  U(x)
\end{equation*}
with $b^{\eta}$ defined in~\eqref{eq:drift}.
Recall that under Assumption~\ref{assum:general_kker} and~\ref{assum:pi0} we have~\eqref{eq:drift_lipschitz}, i.e., there exist $C_1 \geq 0$ such that for any  $x_1, x_2 \in \rset^d$ and $\pi_1, \pi_2 \in \Pens(\rset^d)$
\begin{equation}
  \label{eq:big1}
     \norm{b(x_1, \pi_1) - b(x_2, \pi_2)} \leq C_1 \left( \norm{x_1 - x_2} + \wassersteinD[1](\pi_1, \pi_2)\right). 
 \end{equation}
Furthermore, using Assumption~\ref{assum:general_kker} we have for any  $x_1, x_2 \in \rset^d$, $\pi_1, \pi_2 \in \Pens(\rset^d)$,
\begin{align}
  \label{eq:big2}
\norm{b^\eta(x, z, \nu)} &\leq \norm{\frac{\lambda\nabla_2 \kker(z, x)}{\lambda\nu\left[ \kker(z, \cdot)\right]+\varphi(z)+\eta}}+\norm{\frac{\lambda\nabla_1 \kker(x, z)+\nabla \varphi(x)}{\lambda\nu\left[ \kker(x, \cdot)\right]+\varphi(x)+\eta}} \\
& \leq \lambda\Mtt/\eta + (\lambda+1)\Mtt /\eta =: C_0.\notag 
 \end{align}
In addition, note that for any $x_1^{1:N},x_2^{1:N} \in (\rset^d)^N$ we have for any $i \in \{1,2\}$
 \begin{equation}
   \label{eq:w1_empi}
   \wassersteinD[1](\pi_1^N, \pi_2^N) \leq \frac{1}{N} \sum_{k=1}^N \norm{x_1^{k,N} - x_2^{k,N}}, \qquad \pi_i^N = \frac{1}{N} \sum_{k=1}^N \delta_{x^{k,N}}.
 \end{equation}
Let $N \in \nset$ and denote $B_N: (\rset^d)^N \to (\rset^d)^N$ given for any $x^{1:N} \in (\rset^d)^N$ by
\begin{equation*}
B_N(x^{1:N}) = \{b(x^{k,N}, \pi^N)\}_{k \in \{1, \dots, N\}}, \qquad \pi^N = \frac{1}{N} \sum_{k=1}^N \delta_{x^{k,N}}.
\end{equation*}
Therefore, using~\eqref{eq:big1}--\eqref{eq:w1_empi} we have for any $x^{1:N}_1, x_2^{1:N} \in (\rset^d)^N$
\begin{align}
  \label{eq:lip_N}
& \norm{B_N(x^{1:N}_1) - B_N(x^{1:N}_2)}\\
&\qquad\leq C_1\left(\sum_{k=1}^N\norm{x_1^{k,N}-x_2^{k,N}}^2+N\wassersteinD[1]\left(\frac{1}{N} \sum_{k=1}^N \delta_{x_1^{k,N}}, \frac{1}{N} \sum_{k=1}^N \delta_{x_2^{k,N}}\right)^2\right.\notag\\
&\qquad\qquad\left.+2\sum_{k=1}^N\norm{x_1^{k,N}-x_2^{k,N}}\wassersteinD[1]\left(\frac{1}{N} \sum_{k=1}^N \delta_{x_1^{k,N}}, \frac{1}{N} \sum_{k=1}^N \delta_{x_2^{k,N}}\right)\right)^{1/2}\notag\\
  &\qquad \leq C_1\left(\sum_{k=1}^N\norm{x_1^{k,N}-x_2^{k,N}}^2+ \frac{1}{N}\left(\sum_{k=1}^N \norm{x_1^{k,N} - x_2^{k,N}}\right)^2\right.\notag\\
&\qquad\qquad\left.+2/N\sum_{k=1}^N\norm{x_1^{k,N}-x_2^{k,N}}\sum_{k=1}^N \norm{x_1^{k,N} - x_2^{k,N}}\right)^{1/2}\notag\\
&\qquad\leq 2C_1\sum_{k=1}^N\norm{x_1^{k,N}-x_2^{k,N}}\notag\\
 &\qquad \leq 2C_1N^{1/2} \norm{x_1^{1:N} - x_2^{1:N}}\notag,
\end{align}
where the last line follows from H\"older's inequality.

Using~\eqref{eq:big1}--\eqref{eq:big2} and Assumption~\ref{assum:pi0}--\ref{item:dissipativity} we have for any $x^{1:N}_1, x_2^{1:N} \in (\rset^d)^N$
\begin{align*}
 & \langle B_N(x^{1:N}_1) - B_N(x^{1:N}_2), x_1^{1:N} - x_2^{1:N} \rangle\\ 
  &\qquad\leq -\alpha \mtt \norm{x_1^{1:N} - x_2^{1:N}}^2 + \alpha \ctt N + 2C_0 \sum_{k=1}^N \norm{x_1^{k,N} - x_2^{k,N}} \notag\\
 &\qquad\leq -\alpha \mtt \norm{x_1^{1:N} - x_2^{1:N}}^2 + \alpha \ctt N + 2C_0 N^{1/2} \norm{x_1^{1:N} - x_2^{1:N}}.\notag
\end{align*}
Let $R = \max(8C_0N^{1/2}/(\alpha \mtt), 2( \ctt N/ \mtt)^{1/2})$ with $\mtt$ and $\ctt$ as in Assumption~\ref{assum:pi0}--\ref{item:dissipativity}. Then
\begin{align*}
    \alpha\ctt N\leq \frac{\alpha R^2\mtt}{4}\qquad\qquad 2C_0 N^{1/2}\leq \frac{\alpha R\mtt}{4},
\end{align*}
and for any $x^{1:N}_1, x_2^{1:N} \in (\rset^d)^N$ with $\norm{x_1^{1:N} - x_2^{1:N}} \geq R$ we further have
\begin{align}
  \label{eq:diss_N2}
  &\langle B_N(x^{1:N}_1) - B_N(x^{1:N}_2), x_1^{1:N} - x_2^{1:N} \rangle \\
  &\qquad\qquad\leq -\alpha \mtt \norm{x_1^{1:N} - x_2^{1:N}}^2 +\frac{\alpha R^2\mtt}{4} + \frac{\alpha \mtt R}{4}\norm{x_1^{1:N} - x_2^{1:N}}\notag\\
  &\qquad\qquad\leq  -(\alpha \mtt/2)  \norm{x_1^{1:N} - x_2^{1:N}}^2.\notag
\end{align}
We conclude upon combining~\eqref{eq:lip_N},~\eqref{eq:diss_N2} and \cite[Corollary 22]{debortoli2019convergence} whose assumptions C1 and B3 are satisfied and we verify B2 in~\eqref{eq:lip_N} which in turns implies B5.

\subsection{Proof of Proposition~\ref{prop:particles_min}}
\label{app:invariant}

Let $\alpha, \eta >0$. First we show that
  $\{\pi^{N}\}_{N \in \nset}$ is relatively compact in
  $\Pens_1(\rset^d)$.  Let $N \in \nset$ and assume that $\bm{X}_0^{1:N} = 0$.
Define for any $t \geq 0$
  \begin{equation*}
    \bm{M}_t^N := \frac12\norm{\bm{X}_t^{1,N}}^2 -\int\limits_0^t \left\lbrace\langle \bm{X}_u^{1,N}, b(\bm{X}_u^{1,N}, \pi^N_u) \rangle + (1+\alpha) d\right\rbrace\rmd u,
  \end{equation*}
where $\{(\bm{X}_t^{k,N})_{t \geq 0}\}_{k=1}^N$ is given in \eqref{eq:particle}.
Using It\^{o}'s formula we have that, for any $s < t$: 
\begin{align*}
    \frac12\norm{\bm{X}_t^{1,N}}^2 &= \frac12\norm{\bm{X}_s^{1,N}}^2 +\int\limits_s^t \langle \bm{X}_u^{1,N}, b(\bm{X}_u^{1,N}, \pi^N_u) \rangle \rmd u\\
    &+(1+\alpha)d(t-s) +\sqrt{2(1+\alpha)}\int_s^t \langle \bm{X}_u^{1,N}, \rmd \bm{B}_{u}^{1}\rangle,
\end{align*}
so that 
\begin{align*}
    \bm{M}_t^N  &= \frac12\norm{\bm{X}_s^{1,N}}^2-\int\limits_0^s \langle \bm{X}_u^{1,N}, b(\bm{X}_u^{1,N}, \pi^N_u) \rangle\rmd u\\
    &-(1+\alpha)ds+\sqrt{2(1+\alpha)}\int_s^t \langle \bm{X}_u^{1,N}, \rmd \bm{B}_{u}^{1}\rangle\\
    &= \bm{M}_s^N+\sqrt{2(1+\alpha)}\int_s^t \langle \bm{X}_u^{1,N}, \rmd \bm{B}_{u}^{1}\rangle
\end{align*}
It follows that
\begin{align*}
    \Exp\left[\bm{M}_t^N -\bm{M}_s^N \mid \mathcal{F}_s^N\right]=\sqrt{2(1+\alpha)}\Exp\left[\int_s^t \langle \bm{X}_u^{1,N}, \rmd \bm{B}_{u}^{1}\rangle\mid \mathcal{F}_s^N\right].
\end{align*}
\cite[Theorem 3.2.1]{oksendalstochastic} guarantees that the expectation above is zero if $\bm{X}_u^{1,N}$ is such that $\mathbb{E}[\int_0^T ||X_u^{1,N}||^2du] < \infty$ for every $T>0$. 
We show that this is the case  under our assumptions using \cite[Theorem 5.2.1]{oksendalstochastic}.
First, ~\eqref{eq:drift_lipschitz} guarantees that the drift coefficient~\eqref{eq:b} is Lipschitz continuous.
We then need to show
\begin{align*}
    \norm{b(x, \nu)}\leq C(1+\norm{x}).
\end{align*}
Observe that
\begin{align*}
    \norm{b(x, \nu)} = \norm{b(x, \nu) - b(0, \nu)+b(0, \nu)}\leq C_1 \norm{x}+\norm{b(0, \nu)}
\end{align*}
due to Lipschitzness. Moreover, $\norm{b(0, \nu)}$ is finite :
\begin{align*}
    \norm{b(0, \nu)} &\leq \int \norm{b^\eta(0, z, \nu)} \rmd \nu(z)+\alpha\norm{\nabla  U(0)}\\
    &\leq C_0+\alpha\norm{\nabla  U(0)}
\end{align*}
with $C_0$ in~\eqref{eq:big2}. We can then apply \cite[Theorem 5.2.1]{oksendalstochastic} which guarantees $\mathbb{E}[\int_0^T ||X_u^{1,N}||^2du] < \infty$ for every $T>0$ and conclude that $\Exp\left[\bm{M}_t^N -\bm{M}_s^N \mid \mathcal{F}_s^N\right]=0$.

Since $\Exp\left[\bm{M}_t^N -\bm{M}_s^N \mid \mathcal{F}_s^N\right]=0$, we also have that 
\begin{align*}
&\frac12\Exp\left[\norm{\bm{X}_t^{1,N}}^2\right] - \frac12\Exp\left[\norm{\bm{X}_s^{1,N}}^2\right] \\
&\qquad=\Exp\left[\int\limits_s^t \left\lbrace\langle \bm{X}_u^{1,N}, b(\bm{X}_u^{1,N}, \pi^N_u) \rangle + (1+\alpha) d\right\rbrace\rmd u\right]\\
&\qquad=\Exp\left[\int\limits_s^t \langle \bm{X}_u^{1,N}, \int\limits_{\rset^d}b^\eta(\bm{X}_u^{1,N}, z, \pi^N_u) \rmd \pi_u^N(z)\rangle \rmd u\right]\\
&\qquad\phantom{=}-\Exp\left[\int\limits_s^t \left\lbrace\langle \bm{X}_u^{1,N}, \alpha\nabla U(\bm{X}_u^{1,N}) \rangle - (1+\alpha)  d\right\rbrace\rmd u\right],
\end{align*}
where we used the definition of $b$ in~\eqref{eq:b}.

Using Assumption~\ref{assum:pi0}--\ref{item:dissipativity} with $x_1 = x$ and $x_2 =0$, we have for any $x \in \rset^d$
\begin{equation*}
\langle \nabla U(x), x \rangle \geq \langle \nabla U(0), x \rangle + \mtt \norm{x}^2 - \ctt. 
\end{equation*}
Therefore, using this result and the Cauchy--Schwarz inequality we obtain that for any $t \geq 0$
  \begin{align*}
&\frac12\Exp\left[\norm{\bm{X}_t^{1,N}}^2\right] - \frac12\Exp\left[\norm{\bm{X}_s^{1,N}}^2\right] \\
&\qquad\leq \Exp\left[ \int\limits_s^t \left\lbrace \int\limits_{\rset^d} \norm{b^\eta(\bm{X}_u^{1,N}, z, \pi^N_u)}\norm{\bm{X}_u^{1,N}} \rmd \pi_u^N(z)\right\rbrace \rmd u\right]\\
&\qquad\phantom{=} +\Exp\left[\int\limits_s^t \left\lbrace
 -\alpha \mtt \norm{\bm{X}_u^{1,N}}^2 + \alpha \norm{\nabla U(0)}\norm{\bm{X}_u^{1,N}} + (1+\alpha) (\ctt +  d)\right\rbrace \rmd u\right]\\
&\qquad= \int\limits_s^t \Exp\left[  \int\limits_{\rset^d} \norm{b^\eta(\bm{X}_u^{1,N}, z, \pi^N_u)}\norm{\bm{X}_u^{1,N}} \rmd \pi_u^N(z) \right]\rmd u\\
&\qquad\phantom{=} +\int\limits_s^t \left\lbrace -\alpha \mtt\Exp\left[
  \norm{\bm{X}_u^{1,N}}^2\right] + \alpha \norm{\nabla U(0)}\Exp\left[\norm{\bm{X}_u^{1,N}}\right] + (1+\alpha)  (\ctt+ d)\right\rbrace \rmd u,
  \end{align*}
where the last line follows from Tonelli's Theorem since all integrated functions are positive (or always negative, in which case we can consider minus the integral itself).

Let $\mathcal{V}_t^N =   \Exp\left[\norm{\bm{X}_t^{1,N}}^2\right]$. Appealing to the fundamental theorem of calculus  we get that
\begin{align*}
\frac12\rmd \mathcal{V}_t^N / \rmd t \leq& \Exp\left[ \int\limits_{\rset^d} \norm{b^\eta(\bm{X}_t^{1,N}, z, \pi^N_t)}\norm{\bm{X}_t^{1,N}} \rmd \pi_t^N(z) \right]\\
  & -\alpha  \mtt \Exp\left[\norm{\bm{X}_t^{1,N}}^2\right] + \alpha\norm{\nabla U(0)}\Exp\left[\norm{\bm{X}_t^{1,N}}\right] \\
  &+ (1+\alpha) ( \ctt + d).
\end{align*}
Using Jensen's inequality we have $\Exp\left[\norm{\bm{X}_t^{1,N}}\right]\leq \Exp\left[\norm{\bm{X}_t^{1,N}}^2\right]^{1/2}$ and using that, as shown in \eqref{eq:big2}, for any $x \in \rset^d$ and $\pi \in \Pens(\rset^d)$ we have $\norm{b^\eta(x, z,\pi)} \leq C_0$ we have:
\begin{align*}
\frac12\rmd \mathcal{V}_t^N / \rmd t \leq& C_0 (\mathcal{V}_t^N)^{1/2}
    -\alpha  \mtt \mathcal{V}_t^N + \alpha \norm{\nabla U(0)}  (\mathcal{V}_t^N)^{1/2}  + (1+\alpha) ( \ctt +  d)\\
      =& (C_0 + \alpha  \norm{\nabla U(0)}) (\mathcal{V}_t^N)^{1/2} + (1+\alpha) ( \ctt +  d) - \alpha  \mtt \mathcal{V}_t^N  
\end{align*}
Noting that for any $\mathsf{a}, \mathsf{b}$ such that $\mathsf{ab} \geq 1/2$, for any $x \geq 0$ we have $\sqrt{x} \leq \mathsf{a} + \mathsf{b} x$, and setting $\mathsf{a} = (C_0 + \alpha  \norm{\nabla U(0)})/ (\alpha \mtt)$ and $\mathsf{b} = 1/2\mathsf{a}$ we have:
\begin{align*}
  \frac12\rmd \mathcal{V}_t^N / \rmd t 
     \leq& \frac{(C_0 + \alpha\norm{\nabla U(0)})}{\alpha  \mtt} + (1+\alpha) ( \ctt +  d )- \frac12 \alpha  \mtt \mathcal{V}_t^N 
\end{align*}
Hence, for any $t \geq 0$ and any $N \in \nset$ we get that 
$\mathcal{V}_t^N \leq C$ with
\begin{equation}
  C = \frac{2}{\alpha\mtt} \left[ \frac{(C_0 + \alpha  \norm{\nabla U(0)})}{\alpha  \mtt} + (1+\alpha) ( \ctt +  d)\right].
\end{equation}
Therefore, letting $t \to +\infty$ we get that for any $N \in \nset$,
$\int\limits_{\rset^d} \norm{x}^2 \rmd \pi^{N}(x) \leq C$. Hence,
$\{\pi^{N}\}_{N \in
  \nset}$ is relatively compact in
$\Pens_2(\rset^d)$ using \cite[Proposition 7.1.5]{ambrosio2008gradient}.

Let $\pi^\star$ be a cluster point of $\{\pi^{ N}\}_{N \in
  \nset}$. Le us denote by $\pi_t(\pi^\star), \pi_t^N(\pi^\star)$ the law of $\bm{X}_t$ following \eqref{eq:mckean_sde} with initial condition $\bm{X}_0\sim \pi^\star$ and $\bm{X}_t^{1,N}$ following \eqref{eq:particle} with initial condition $\bm{X}_0^{1,N}\sim \pi^\star$.
Let $(N_k)_{k \in \nset}$ be an increasing sequence such that
$\lim_{k \to +\infty} \wassersteinD[1](\pi^{N_k}, \pi^\star) = 0$. We have that for any $t \geq 0$
\begin{align}
  \label{eq:la_grosse_borne}
  \wassersteinD[1](\pi^\star, \pi_{\alpha, \eta}^\star) &\leq \wassersteinD[1](\pi^\star, \pi^{N_k})   \\ 
  & + \wassersteinD[1](\pi^{N_k}, \pi_t^{N_k}(\pi^{N_k}))+ \wassersteinD[1](\pi_t^{N_k}(\pi^{N_k}), \pi_t(\pi^{ N_k})) \notag\\
  &+ \wassersteinD[1](\pi_t(\pi^{N_k}), \pi_t(\pi^{\star})) + \wassersteinD[1](\pi_t(\pi^{\star}), \pi_{\alpha, \eta}^\star), \notag
\end{align}
where $\pi^{N_k}$ denotes the invariant measure of $(\bm{X}_t^{1:N_k})_{t \geq 0}$ which exists by Proposition~\ref{prop:particle_ergodic}.
We now control each of these terms. Let $\varepsilon >0$, for sufficiently large $t \geq 0$ Proposition \ref{prop:la_convergence_star} ensures
$\wassersteinD[1](\pi_t(\pi^{\star}), \pi_{\alpha, \eta}^\star)
\leq \varepsilon$. Using the Lipschitz continuity of the drift established in \eqref{eq:bound_beta_two} and Assumption \ref{assum:pi0} with \cite[Lemma 20]{crucinio2022solving}, there exists $k_0 \in \nset$ such that for any
$k \geq k_0$ and any $t\in[0, T]$ we have
$$\wassersteinD[1](\pi_t(\pi^{N_k}),
\pi_t(\pi^{\star})) \leq C_T\wassersteinD[1](\pi^{N_k},
\pi^{\star})\leq \varepsilon$$ since we assumed $\lim_{k \to +\infty} \wassersteinD[1](\pi^{N_k}, \pi^\star) = 0$. Using
Proposition \ref{prop:propagation_chaos}, there exists $k_1 \in \nset$ such that for any
$k \geq k_1$ we have that
$\wassersteinD[1](\pi_t^{N_k}(\pi^{N_k}), \pi_t(\pi^{
  N_k})) \leq \varepsilon$. Since $\pi^{N_k}$ is invariant for
$(\bm{X}_t^{1:N_k})_{t \geq 0}$ we get that
$\wassersteinD[1](\pi^{N_k}, \pi_t^{N_k}(\pi^{N_k})) = 0$. Finally,
there exists $k_2 \in \nset$ such that for any $k \geq k_2$ we have
$\wassersteinD[1](\pi^\star, \pi^{N_{k}}) \leq \varepsilon$ since we assumed $\lim_{k \to +\infty} \wassersteinD[1](\pi^{N_k}, \pi^\star) = 0$.  
Combining these
results in~\eqref{eq:la_grosse_borne}, we get that
$\wassersteinD[1](\pi^\star, \pi_{\alpha, \eta}^\star) \leq 4\varepsilon$. Therefore, since $\varepsilon > 0$ is arbitrary, we have that $\pi^\star = \pi_{\alpha, \eta}^\star$, which concludes the proof.

\section{Additional Experiments}

We empirically validate our convergence results in Proposition~\ref{prop:la_convergence_star}--\ref{prop:particles_min} and~\eqref{eq:euler_error} using the simple Gaussian example in Section~\ref{sec:expe} with $\lambda = 1/2, \beta = 1/2$ so that we have
\begin{align*}
    \varphi(x)&=\frac{1}{3}\N\left(x;0,1\right)\\
    \kker(x, y) &=\N\left(y;xe^{-1/2},1-e^{-1}\right).
\end{align*}
The unique solution is $\pi(x)=\N\left(x;0,1\right)$. The reference measure is $\pi_0(x)=\pi(x)$ so that $\pi$ is the unique minimizer of $\Fun_\alpha^\eta$ and $\alpha=0.1$. We pick a small value of $\alpha$ to analyze the behavior of the algorithm with little influence from the reference measure.

We start by checking the validity of Proposition~\ref{prop:la_convergence_star}, \ref{prop:particle_ergodic} on this example and consider 4 initial distributions: a diffuse Gaussian $\mathcal{N}(0, 2^2)$, a concentrated Gaussian $\mathcal{N}(0, 0.1^2)$, the target distribution $\mathcal{N}(0, 1^2)$ and a Uniform over $[-1, 1]$.
We set $N=500$, $\gamma = 10^{-3}$ and iterate for $n_T=1000$ iterations to allow for the runs with initial distributions which are far from $\pi$ to converge.

Figure~\ref{fig:appendix} shows the evolution of $\wassersteinD[1](\hat{\pi}_n^N, \pi_{\alpha, \eta}^\star)$, where $\hat{\pi}_n^N$ is the output of Algorithm~\ref{alg:second_kind} and $\pi_{\alpha, \eta}^\star=\pi$ is the true minimizer, along iterations.  
The plot shows that $\wassersteinD[1]$ decreases as the number of iteration increases for all initial distributions. The exponential decay predicted by Proposition~\ref{prop:particle_ergodic} is particularly evident for the concentrated and uniform initial distributions.

\begin{figure}
\centering
\begin{tikzpicture}[every node/.append style={font=\normalsize}]
\node (img1) {\includegraphics[width=0.75\textwidth]{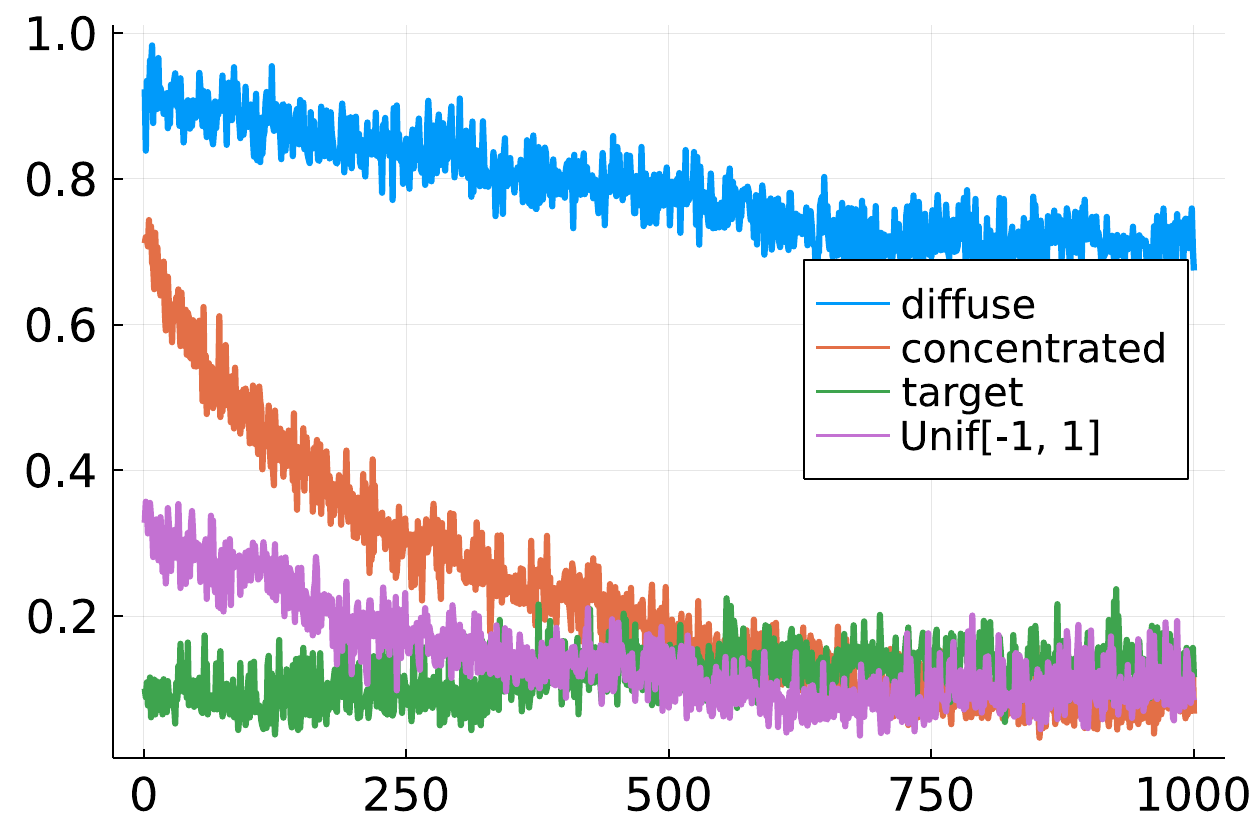}};
\node[below=of img1, node distance = 0, yshift = 1cm] (label1) {$n$};
  \node[left=of img1, node distance = 0, rotate=90, anchor = center, yshift = -0.8cm] {$\wassersteinD[1](\hat{\pi}_n^N, \pi_{\alpha, \eta}^\star)$};
  \end{tikzpicture}
\caption{Decay of $\wassersteinD[1]$ along iterations for 4 initial distributions: a more diffuse Gaussian $\mathcal{N}(0, 2^2)$, a more concentrated one $\mathcal{N}(0, 0.1^2)$, the target $\mathcal{N}(0, 1^2)$ and a Uniform distribution centered at 0, $\textrm{Unif}[-1, 1]$. }
\label{fig:appendix}
\end{figure}

We also check the validity of the rate in Proposition~\ref{prop:propagation_chaos} and of the convergence established in Proposition~\ref{prop:particles_min} on this example. We set $\gamma = 10^{-2}$ and $n_T=200$ steps. The initial distribution is a concentrated one ($\mathcal{N}(0, 0.1^2)$) and the reference measure is $\pi_0=\pi$ with $\alpha = 0.1$. We consider number of particles $N= 50, 100, 200, 500, 1000$.
Figure~\ref{fig:Ngamma} left panel shows the evolution of $\wassersteinD[1]$ between the solution $\pi$ and the approximation provided by Algorithm~\ref{alg:second_kind} at the final time $n_T$. As predicted by Proposition~\ref{prop:particles_min}, $\wassersteinD[1](\hat{\pi}_{n_T}^N, \pi_{\alpha, \eta}^\star)$ decays as $N\to \infty$.
The right panel shows the square root of the mean squared error ($\mse$) for the mean of $\pi$. We add the theoretical rate in Proposition~\ref{prop:propagation_chaos} for reference. As is often the case, for small $N$ the decay is faster than $\sqrt{N}$, but as $N\to \infty$ the $\mse$ decays at the predicted rate. The constant $c_1(T)$ is empirically estimated to be $c_1(T)\approx 0.075$.

\begin{figure}
\centering
\begin{tikzpicture}[every node/.append style={font=\normalsize}]
\node (img1) {\includegraphics[width=0.4\textwidth]{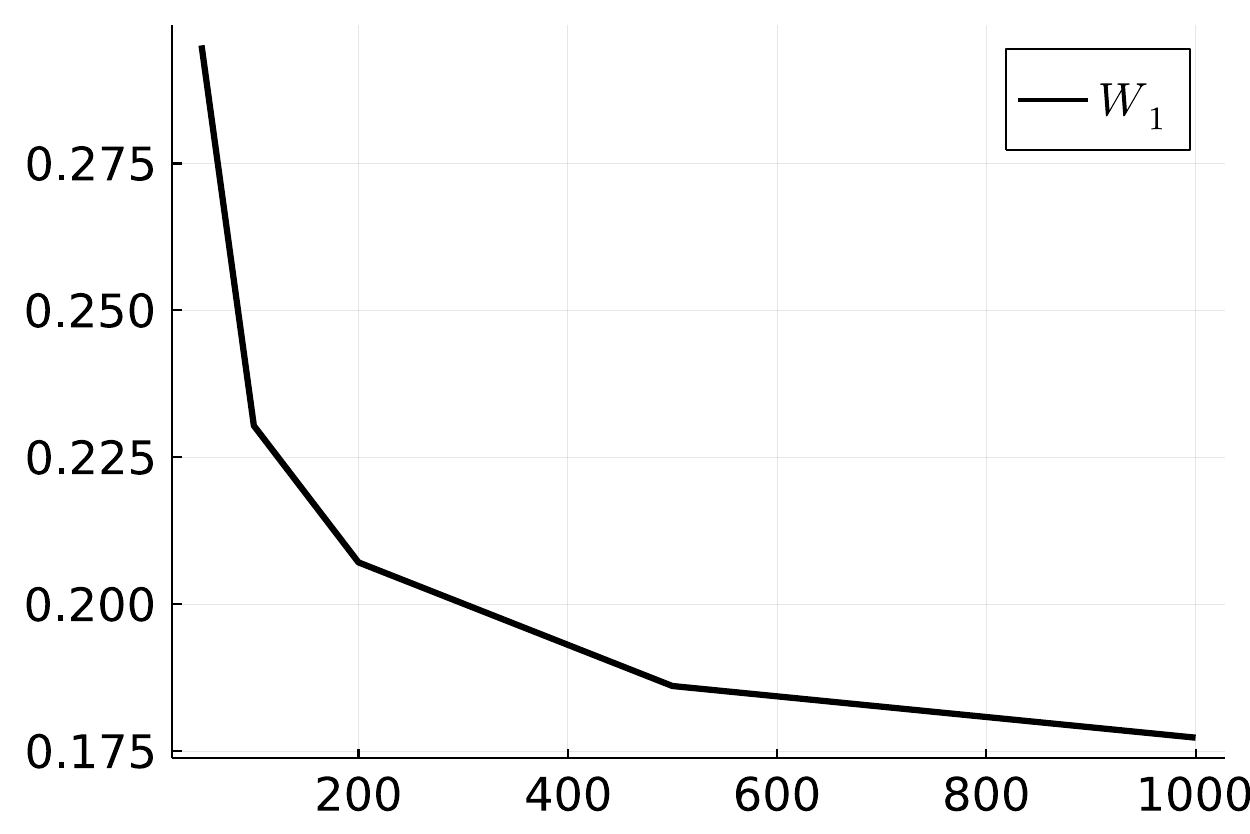}};
\node[below=of img1, node distance = 0, yshift = 1cm] (label1) {$N$};
  \node[left=of img1, node distance = 0, rotate=90, anchor = center, yshift = -0.8cm] {$\wassersteinD[1](\hat{\pi}_{n_T}^N, \pi_{\alpha, \eta}^\star)$};
  \node[right=of img1, node distance = 0, xshift = -0.5cm] (img2) {\includegraphics[width=0.4\textwidth]{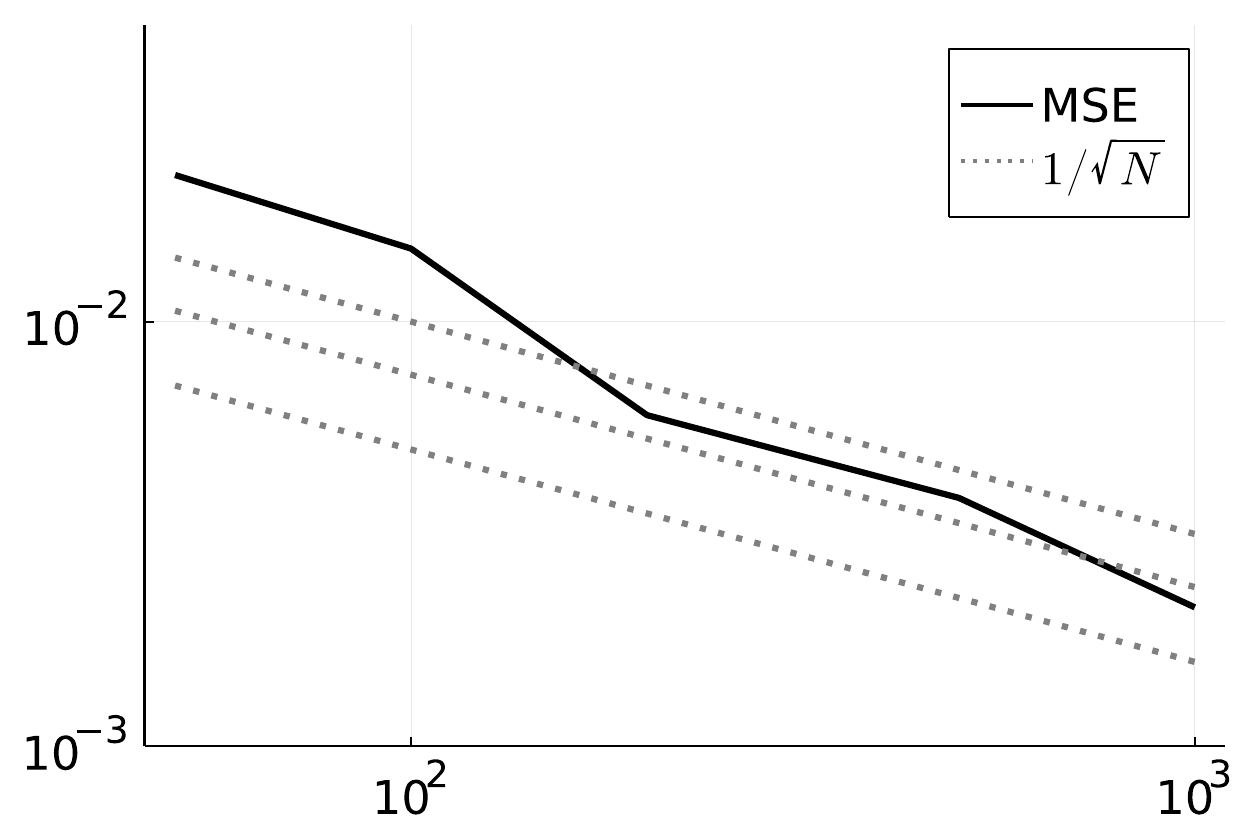}};
\node[below=of img2, node distance = 0, yshift = 1cm] {$N$};
  \node[left=of img2, node distance = 0, rotate=90, anchor = center, yshift = -0.8cm] {$\mse[\textrm{mean}]$};
  \end{tikzpicture}
\caption{Left: Decay of $\wassersteinD[1]$ at the final time step $n_T$ for increasing number of particles $N$. Right: Decay of $\mse^{1/2}$ for the mean of $\pi$ for increasing number of particles $N$; we add the theoretical rate $1/\sqrt{N}$ for reference. The results are averaged over 10 replicates.}
\label{fig:Ngamma}
\end{figure}

We repeat the same analysis for $\gamma$ to validate the rate in~\eqref{eq:euler_error}. We use the same set up as before with $N=500$ and consider $\gamma = $.
Figure~\ref{fig:Ngamma2} shows the mean squared error ($\mse$) for the mean of $\pi$. We add the theoretical rate in~\eqref{eq:euler_error} for reference. 

\begin{figure}
\centering
\begin{tikzpicture}[every node/.append style={font=\normalsize}]
\node (img1) {\includegraphics[width=0.6\textwidth]{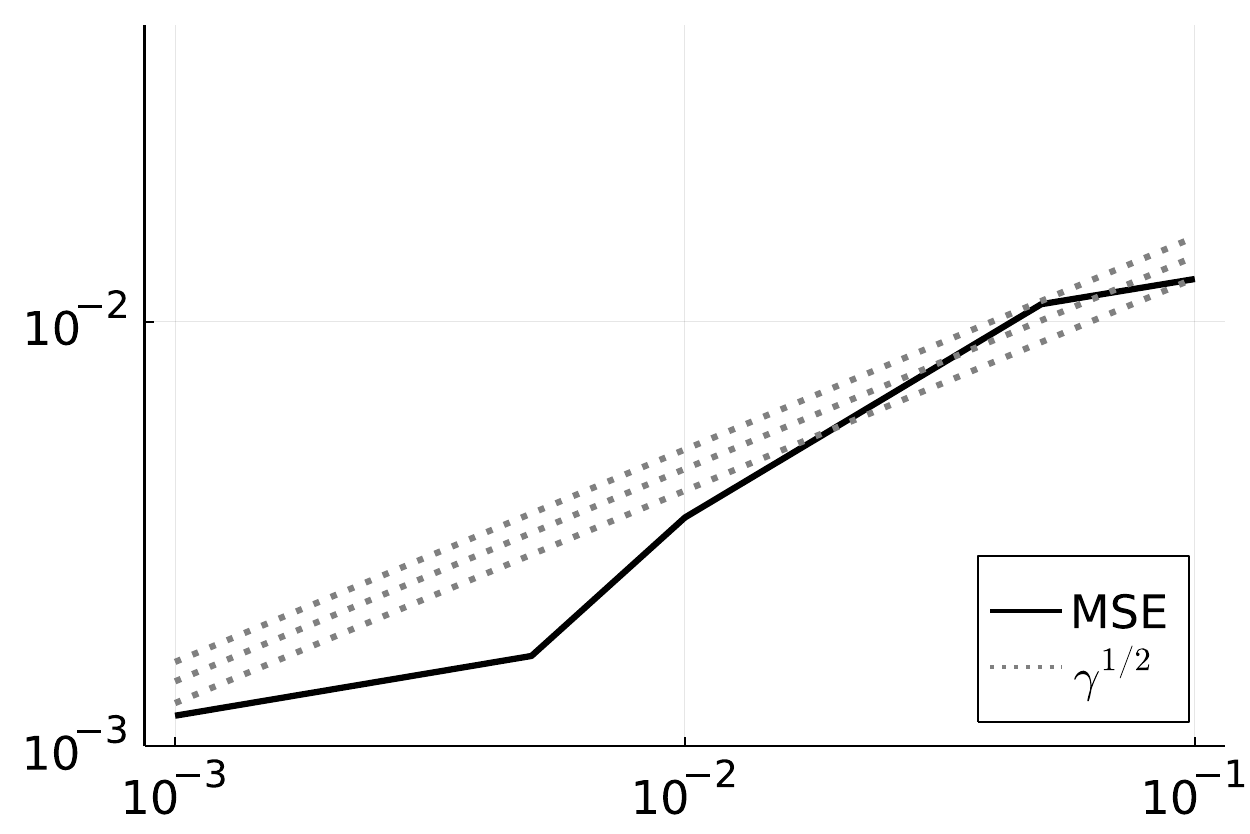}};
\node[below=of img1, node distance = 0, yshift = 1cm] (label1) {$\gamma$};
  \node[left=of img1, node distance = 0, rotate=90, anchor = center, yshift = -0.8cm] {$\mse[\textrm{mean}]$};
  \end{tikzpicture}
\caption{Behaviour of $\mse$ for the mean of $\pi$ for increasing $\gamma$; we add the theoretical rate $\sqrt{\gamma}$ for reference. The results are averaged over 10 replicates.}
\label{fig:Ngamma2}
\end{figure}

\end{document}